\theoremstyle{plain}
\newtheorem{theorem}{Theorem}[section]
\numberwithin{equation}{section}
\newtheorem{lemma}{Lemma}[section]
\theoremstyle{definition}
\newtheorem{definition}[theorem]{Definition} 
\newtheorem{assumption}{Assumption} 
\newtheorem{example}{Example} 
\newtheorem{remark}{Remark} 
\newcommand{\calE}[0]{\mathcal{E}}
\def\P{\mathbb{P}}
\def\B{\mathbb{B}} 
\def\R{\mathbb{R}}
\def\X{\mathcal{X}}
\def\E{\mathbb{E}}
\def\L{\mathcal{L}}  
\def\A{\mathcal{I}} 
\def\id{\mathbbm{1}}
\def\I{{\bf I}}
\newcommand{\indep}{\perp \!\!\!\! \perp}
\DeclareMathOperator{\argmin}{argmin}
\title{Approximate co-sufficient sampling with regularization}
\author{Wanrong Zhu\thanks{Department of Statistics, University of California, Irvine} \and Rina Foygel Barber\thanks{Department of Statistics, University of Chicago}}
\date{}
\begin{document} 
	\maketitle

\begin{abstract}
 In this work, we consider the problem of goodness-of-fit (GoF) testing for parametric models. 
This testing problem involves a composite null hypothesis, due to the unknown values of the model parameters.
In some special cases, \emph{co-sufficient sampling} (CSS) can remove the influence of these unknown parameters
via conditioning on a sufficient statistic.
However, many common parametric settings do not permit this approach, since 
conditioning on a sufficient statistic leads to a powerless test.
The recent \emph{approximate co-sufficient sampling} (aCSS) framework of \cite{barber2020testing}
offers an alternative, replacing sufficiency with an approximately sufficient statistic (namely, a noisy version of the maximum likelihood estimator (MLE)).
This approach recovers power in a range of settings where CSS cannot be applied, but can only be applied
in settings where the unconstrained MLE is well-defined and well-behaved, which implicitly assumes
a low-dimensional regime. In this work, we extend aCSS to the setting of constrained and penalized
MLE, so that more complex estimation problems can now be handled within the aCSS framework,  including examples such as mixtures-of-Gaussians (where the unconstrained MLE is not well-defined due to degeneracy) and high-dimensional Gaussian linear models (where the MLE can perform well under regularization, such as an $\ell_1$ penalty or a shape constraint).
\end{abstract}


	\section{Introduction}
Goodness-of-fit (GoF) testing is an essential statistical method, widely used in various fields such as biology, economics, engineering, and finance, to assess whether the observed data follows a certain pattern or distribution that is expected based on theoretical assumptions.  
Given data $X$ belonging to some sample space $\X$, the fundamental problem addressed
by  GoF is the question of testing the  null hypothesis
\begin{equation}\label{eqn:GoF_H0}
	H_{0}: X\sim P_{\theta} \textnormal{ for some } \theta\in\Theta,
\end{equation}
where $\{P_{\theta}: \theta\in\Theta\subseteq\R^{d}\}$ is a parametric family, versus a more complex (usually higher-dimensional) model.
For example, we may be interested in testing whether a logistic regression model is appropriate for our binary data $X=(X_1,\dots,X_n)$ 
(in the presence of some covariates), or whether a more complex---perhaps even nonparametric---model is needed.

As for any standard hypothesis testing problem, our approach to GoF testing involves two core ingredients: finding a test statistic
that captures the important trends in the data (with the convention that
large values of $T=T(X)$ indicate evidence against $H_0$), and deriving the null distribution of this test statistic $T(X)$
so that we can appropriately calibrate our test to make sure we do not exceed the allowable Type I error level.
In many settings, this second component often poses the larger challenge; 
it is often the case that the null distribution of $T(X)$ cannot be computed exactly or even estimated accurately.
An alternative approach, common in many statistical problems, is to mimic this null distribution with some form of resampling---e.g.,
methods based on permutations, on bootstrapping, or on knockoffs \citep{barber2015controlling, barber2020robust,beran1988prepivoting, berrett2020conditional, candes2018panning,  davidson2007improving, efron1979bootstrap, ernst2004permutation, lehmann1986testing, welch1990construction,  wu1986jackknife}  all have this flavor.
 For more literature on testing goodness-of-fit, especially in high-dimensional settings, we refer to Section \ref{sec:related_work} for an overview. 

At a high level, we can consider sampling \emph{copies} of the observed data, $\tilde{X}^{(1)}, ..., \tilde{X}^{(M)}$,
and using the empirical distribution of the statistic, given by the corresponding values $T(\tilde{X}^{(1)}), ..., T(\tilde{X}^{(M)})$,
as a null distribution against which we compare the evidence $T(X)$. More concretely, 
given these sampled copies, we can define a p-value corresponding to the observed evidence $T(X)$ as
\begin{equation}\label{eq:pval}
	\textnormal{pval} = \textnormal{pval}_{T}(X, \tilde{X}^{(1)}, ..., \tilde{X}^{(M)}) = \frac{1}{M+1}\left(1 + \sum_{m=1}^{M}\mathbbm{1}\left\{T(\tilde{X}^{(m)})\ge T(X)\right\}\right).
\end{equation}
If it holds that the real data and its copies $X, \tilde{X}^{(1)}, ..., \tilde{X}^{(M)}$ are  exchangeable under the null, then 
it follows immediately that this p-value is valid under the null, $\P_{H_0}(\textnormal{pval} \le\alpha)\le\alpha$ (for any rejection threshold $\alpha$).
The core challenge for this type of approach is therefore reduced to the following question:
\begin{quote}
	\normalsize
	How can we generate copies $\tilde{X}^{(1)}, ..., \tilde{X}^{(M)}$ of the observed data $X$ such that, if $H_0$ is true, then $X, \tilde{X}^{(1)}, ..., \tilde{X}^{(M)}$ are (approximately) exchangeable?
\end{quote}
Now we consider this question specifically for the GoF testing problem. Of course, in the case that $\Theta = \{\theta_0\}$ is
a singleton set, the problem is trivial---we can simply draw the $\tilde{X}^{(m)}$'s from the known null distribution $P_{\theta_0}$,
so that $X, \tilde{X}^{(1)}, ..., \tilde{X}^{(M)}$ are i.i.d.\ (and thus, exchangeable). Beyond this trivial case, however,
this simple strategy can no longer be used. For example, drawing $\tilde{X}^{(m)}$'s from $P_{\hat\theta}$ for a plug-in estimate
$\hat\theta$, which is often called the \emph{parametric bootstrap} \citep{ efron2012bayesian,  efron1994introduction, hall2006parametric,singh1981asymptotic}, may work well in some settings but has the potential to substantially inflate the Type I error rate \citep[Section 1]{barber2020testing}. The co-sufficient sampling (CSS) and approximate co-sufficient sampling (aCSS) approaches, which we will describe in detail below, avoid this issue by conditioning on a sufficient (or approximately sufficient) statistic for the unknown $\theta$. 
aCSS in particular can be applied to a range of models, but is not suited for addressing challenges such as high dimensionality.

\paragraph*{Our contributions}
In this paper, our aim is to extend the aCSS approach to the setting where $\theta$ cannot be estimated via unconstrained
maximum likelihood estimation---for example, a high-dimensional sparse linear regression problem, where unconstrained estimation
is not consistent but adding $\ell_1$ regularization restores consistency.
We develop a form of aCSS that is able to handle constrained maximum likelihood estimation (and will also extend
to the penalized case). Consequently, this new approach allows for aCSS to accommodate more robust and accurate parameter estimation in complex problems, particularly in high-dimensional settings.

\subsection{Notation and organization} 
For an integer $n\geq 1$, $[n]$ denotes the set $\{1,\dots,n\}$. We will write $\|\cdot\|$ to denote the usual Euclidean norm on vectors, and the operator norm on matrices. Furthermore,  for a vector $v$, $\|v\|_{0}$ denotes the $\ell_{0}$ norm (the number of nonzero entries), and $\|v\|_q$ denotes the usual $\ell_q$ norm for $1\leq q \leq \infty$.
For a matrix $M$, $\lambda_{\max}(M)$ and $\lambda_{\min}(M)$ denotes its largest and smallest eigenvalues.
We write $\E_{\theta}$ and $\P_{\theta}$ to denote expectation or probability taken with respect to 
the distribution $P_\theta$.   Moreover, we denote an open  ball centered at $\theta$ with radius $r$ as $\mathbb{B}(\theta, r)$, and use $(x)_{+}$ to denote $max\{x, 0\}$.

The remainder of this paper is organized as follows. We begin by providing an overview
of CSS and aCSS in Section \ref{sec:background}. In Section \ref{sec:constrained-aCSS}, we present our proposed method,
the  constrained aCSS procedure. In Section \ref{sec:theory}, we discuss the theoretical guarantees for constrained aCSS
in a range of different settings. In Section \ref{sec:penalty}, we extend our method
and theoretical results to the case of penalized, rather than constrained, maximum likelihood 
estimation, for the special case of an $\ell_1$ penalty.  Finally, we show empirical results in Section \ref{sec:numerical} to demonstrate the performance of our method, and conclude with a brief discussion in Section~\ref{sec:discussion}.
All proofs are deferred to the Appendix.

 \section{Background: goodness-of-fit testing via CSS and aCSS}\label{sec:background}
 
 
 We now focus on addressing the sampling problem introduced above. Specifically, given the null hypothesis $H_0$~\eqref{eqn:GoF_H0} that the data $X$ is drawn from $P_\theta$, for some (unknown) $\theta\in\Theta$,  our goal is to generate sample copies $\tilde{X}^{(1)},\dots,\tilde{X}^{(M)}$ that are approximately exchangeable with the observed data $X$ under the null $H_0$, so that we can then assess $T(X)$ via the p-value defined in~\eqref{eq:pval} above.  
 Of course, we can trivially achieve exchangeability 
 by simply taking $\tilde{X}^{(m)}=X$ for each copy $m$---but this would lead to zero power for testing any alternative,
 since the p-value defined in~\eqref{eq:pval} would be equal to 1 regardless of the choice of test statistic.
 
 In the remainder of this section, we will give background on the CSS and aCSS methods for producing these copies, the $\tilde{X}^{(m)}$'s,
 along with some examples to illustrate the types of settings where these methods may be applied.
 From this point on, we will write $\theta_0\in\Theta$ to denote the unknown true value of the parameter.
 
 \subsection{Co-sufficient sampling (CSS)}\label{sec:CSS}
 We cannot sample the copies $\tilde{X}^{(m)}$ from the distribution $P_{\theta_0}$ of the data
 $X$,  because of its dependence on the unknown $\theta_0$. To remove this dependence we can condition on a \emph{sufficient statistic} $S(X)$.
 To be precise, $S(X)$ is a sufficient statistic if the conditional distribution of $X$ no longer depends on $\theta$---that is,
 we can construct a conditional distribution $P(X\mid S)$ such that, for any $\theta\in\Theta$, 
 \[\textnormal{If $X\sim P_\theta$, then $X\mid S(X)$ has distribution $P(\cdot \mid S(X))$.}\]
 Co-sufficient sampling (see, e.g., \cite{agresti1992survey,engen1997stochastic,stephens2012goodness})  leverages this property to sample the copies:
 \[\textnormal{CSS method: after observing $X$, sample $\tilde{X}^{(1)},\dots,\tilde{X}^{(M)}$ i.i.d.\ from $P(\cdot \mid S(X))$.}\]
 By construction, $X,\tilde{X}^{(1)},\dots,\tilde{X}^{(M)}$ are exchangeable when $X\sim P_{\theta}$, for \emph{any} $\theta$---and thus,
 the p-value constructed in~\eqref{eq:pval} is valid under the null $H_0$~\eqref{eqn:GoF_H0}.
 
 As a concrete example, suppose that $X=(X_1,\dots,X_n)$ follows a Gaussian linear model,
 \[X \sim \mathcal{N}(Z\theta , \nu^2\I_n),\]
 for known covariates $Z \in\R^{n\times d}$ (assumed to have full column rank), known variance $\nu^2>0$, and unknown coefficients $\theta\in\Theta=\R^d$.
 Then $S(X) = Z^\top  X$ is a sufficient statistic for this parametric family,  and we can calculate the conditional distribution
 \[X\mid S(X) \sim \mathcal{N}(Z(Z^\top  Z)^{-1} S(X), \nu^2\mathcal{P}_Z^\perp),\]
 where $\mathcal{P}_Z^\perp\in\R^{d\times d}$ is the projection matrix for the subspace orthogonal to the column span of $Z$. As long as $d<n$,
 then, the copies $\tilde{X}^{(m)}$ are distinct from $X$ (and from each other), and we may be able to achieve high power
 under a suitable alternative hypothesis. Additional background and discussion of CSS can be found in \citep[Section 1]{barber2020testing}.

 \subsection{Approximate co-sufficient sampling (aCSS)}
 While the CSS method performs well for certain goodness-of-fit problems, there are many settings
 where CSS leads to a degenerate method and consequently zero power. \cite{barber2020testing} consider the example of logistic regression:
 suppose $X=(X_1,\dots,X_n)$ follows a logistic regression model, where
 \[X_i \sim\textnormal{Bernoulli}(1 / (1 + e^{-Z_i^\top \theta}))\]
 independently for each $i\in[n]$, where again $Z_1,\dots,Z_n\in\R^d$ are known covariate vectors, while $\theta\in\Theta=\R^d$ is unknown.
 In this case, for  generic values of the $Z_i$'s (for instance, if these covariates are drawn from some continuous distribution), the minimal
 sufficient statistic $S(X) = Z^\top  X$ uniquely determines $X$ ($Z\in\R^{n\times d}$ is the matrix with rows $Z_{i}$)---that is, the conditional distribution of $X\mid S(X)$ is simply a point mass.
 Consequently, applying CSS to this problem would lead to zero power since we would have $X =\tilde{X}^{(1)}=\dots=\tilde{X}^{(M)}$.
 
 To address this type of degenerate scenario, \cite{barber2020testing} propose \emph{approximate co-sufficient sampling} (aCSS). 
 The idea of aCSS is to condition on less information (to restore power), while
 ensuring that the sampled copies are approximately exchangeable (to retain Type I error control).   
 (We refer the reader to \cite[Section 1]{barber2020testing} for a more comprehensive discussion on the comparison between bootstrap, CSS, and aCSS methods.)
 
 Concretely, consider an approximate maximum likelihood estimator,
 \[\hat\theta = \hat\theta(X,W) = \argmin_{\theta\in\Theta}\left\{ - \log f(X;\theta) + R(\theta) + \sigma W^\top  \theta\right\},\]
 where $f(\cdot;\theta)$ is the density for distribution $P_\theta$ (with respect to some base measure), $R(\theta)$ is an optional 
 twice-differentiable regularizer (e.g., a ridge penalty), $W\sim\mathcal{N}(0,\frac{1}{d}\I_d)$ is Gaussian noise that 
 adds a perturbation to the maximum likelihood estimation problem, and $\sigma>0$ is a  parameter that controls the magnitude of this perturbation.
 For each $\theta\in\Theta$, define $P_\theta(\cdot \mid\hat\theta)$ as the conditional distribution of $X\mid \hat\theta$,
 when $X\sim P_\theta$ and $\hat\theta = \hat\theta(X,W)$ is defined as above. 
 
 Now we return to the GoF problem, where $X\sim P_{\theta_0}$ for an unknown $\theta_0$.
 Note that, even if the unperturbed MLE were a sufficient statistic (as would be the case for a Gaussian linear model, for example),
 the perturbed MLE $\hat\theta$ is no longer a sufficient statistic in the exact sense, and so the conditional distribution 
 $P_{\theta_0}(\cdot \mid\hat\theta)$ does depend on the unknown parameter $\theta_0$. However, it turns out that $\hat\theta$
 is approximately sufficient, meaning that $P_{\theta_0}(\cdot \mid\hat\theta)$ depends only weakly on $\theta_0$. 
 In particular, \cite{barber2020testing}'s method proposes replacing $\theta_0$ with $\hat\theta$ as a plug-in estimate:
 \begin{equation*}
 	\begin{tabular}{c} 
 		\textnormal{aCSS method:  after observing $X$, draw $W\sim\mathcal{N}(0,\frac{1}{d}\I_d)$, compute $\hat\theta = \hat\theta(X, W)$, then }  \\
 		\textnormal{sample $\tilde{X}^{(1)},\dots,\tilde{X}^{(M)}$ i.i.d.\ from $P_{\hat\theta}(\cdot\mid \hat\theta)$.} 
 	\end{tabular}
 \end{equation*}
 Of course, these copies are no longer exactly exchangeable with $X$ under the null, since in general we will have
 $P_{\hat\theta}(\cdot\mid \hat\theta)\neq P_{\theta_0}(\cdot\mid \hat\theta)$. To quantify this issue, 
 \cite{barber2020testing} define the ``distance to exchangeability'',
 \begin{equation*}\label{eq:exchangeability}
 	d_{\textnormal{exch}}(A_{1}, \dots, A_{k}) = \inf\left\{d_{\textnormal{TV}}((A_{1}, \dots, A_{k}), (B_{1}, \dots, B_{k})): B_{1}, \ldots, B_{k}\ \textnormal{are exchangeable}\right\},
 \end{equation*}
 where $d_{\textnormal{TV}}$ denotes the total variation distance.
 The p-value defined in \eqref{eq:pval} is then approximately valid with 
 \[\P(\textnormal{pval}_{T}(X, \tilde{X}^{(1)}, \dots, \tilde{X}^{(M)})  \le\alpha)\le\alpha + 	d_{\textnormal{exch}}(X, \tilde{X}^{(1)}, \dots, \tilde{X}^{(M)}),\]
 where $d_{\textnormal{exch}}(X, \tilde{X}^{(1)}, \dots, \tilde{X}^{(M)})$ can be bounded under certain conditions on the parametric family $\{P_\theta : \theta\in\Theta\}$.

 While aCSS is able to handle a far broader range of models and problems than the CSS framework,
 there are nonetheless limitations to this method that motivate our present work.
 In particular,
 \cite{barber2020testing}'s work assumes a bound on $\|\hat\theta - \theta_0\|$, i.e., consistency of the
 perturbed MLE $\hat\theta$, which may not be possible to achieve in high dimensional settings unless we regularize using constraints or
 non-smooth penalization. Moreover,
 computing
 $P_\theta(\cdot \mid\hat\theta)$, which is a key step in the aCSS procedure, relies heavily on the fact that $\hat\theta$ is the solution to an \emph{unconstrained, differentiable} optimization problem
 over a \emph{convex, open} parameter space $\Theta\subseteq\R^d$ (as these assumptions allow for using first-order optimality conditions on $\hat\theta$
 to derive this conditional distribution), and consequently, aCSS is not able to handle optimization under constraints or under
 a non-differentiable penalty.
 
 \subsubsection{The role of $\sigma$}\label{sec:aCSS_choose_sigma} 
 Here we pause to discuss the role of the noise parameter $\sigma$ in the aCSS method, and the tradeoffs
 inherent in choosing the value of $\sigma$. 
 The aCSS method requires choosing a parameter $\sigma>0$ that controls the amount by which the MLE is perturbed.
 As discussed by  \cite{barber2020testing}, the choice of $\sigma$ represents a tradeoff between
 Type I error control, and the statistical and computational efficiency of the method. A smaller $\sigma$ leads
 to a lower inflation of the Type I error (that is,  \cite{barber2020testing}'s bound on $d_{\textnormal{exch}}(X, \tilde{X}^{(1)}, \dots, \tilde{X}^{(M)})$ 
 increases with $\sigma$). On the other hand, choosing $\sigma$ to be too small can lead to low power---if the perturbed MLE $\hat\theta$
 reveals too much information about $X$, the copies $\tilde{X}^{(m)}$ may be extremely similar to $X$ and therefore, our power
 to reject the null is low. Moreover, a small value of $\sigma$ makes it more challenging to sample the $\tilde{X}^{(m)}$'s from
 the conditional distribution of $X\mid\hat\theta$, since this distribution becomes more concentrated as $\sigma$ tends to zero.
 
 As we will see later on, these considerations will play an important role in our constrained
 version of aCSS, as well. We will return to a discussion of this parameter in Section~\ref{sec:revisit_sigma} below, after defining our new methods and presenting
 theoretical results.

 \subsection{Additional related work} \label{sec:related_work}
 The literature on GoF testing is extensive, particularly in low-dimensional settings, and giving an overview
 of this broad field is beyond the scope of the present work.
 Here we discuss some challenges faced in the high-dimensional regime.
 
 	For high dimensional two-sample test,  to correct for the inconsistency of Hotelling's $T^{2}$ in high dimensions, \cite{srivastava2016raptt} focus on projecting the high-dimensional data onto a lower-dimensional subspace and \cite{li2020adaptable} propose a test based on a ridge-regularized Hotelling's $T^{2}$.  
 	For simple null testing in high-dimensional linear and generalized linear models, pointwise statistical inference for the components of the parameter vector, such as the construction of confidence intervals and p-values, is studied via the distribution of estimation error when considering lasso and sparse models \cite{van2014asymptotically, zhang2014confidence, dezeure2015high}.
 	When applied to the composite null case, which is more related to the problems we considered, \cite{shah2018goodness} propose the Residual Prediction (RP) tests for high-dimensional Gaussian linear models. The family of test statistics is related to signals left in scaled residuals, and the null distribution is mimicked via parametric bootstrap with a lasso-type estimate. The sampled scaled residuals are shown to depend only weakly on the unknown regression coefficients as long as the sign of the estimation is correct.
 	\cite{jankova2020goodness} generalize RP tests to generalized linear models. They propose a specific test statistic based on the Pearson-type residuals and 
 	an auxiliary dataset. The test statistic is asymptotically normal under the null when the estimation is in the local neighborhood of the true parameter.
 	Note that the aforementioned works are all restricted to specific test statistics. In contrast, our approach offers greater flexibility, allowing users to choose test statistics tailored to their particular problem or prior knowledge, which may yield higher power under specific alternatives. Moreover, our framework explicitly characterizes the relationship between Type I error control and estimation error.
 
 \section{The aCSS method with linear constraints}\label{sec:constrained-aCSS}
 Our constrained aCSS method will address the problem of goodness-of-fit testing for the hypothesis
 \[ H_{0}: X\sim P_{\theta} \textnormal{ for some } \theta\in\Theta,\]
 where as before, $\{P_{\theta}: \theta\in\Theta\}$ is a parametric family, indexed by a convex and open subset $\Theta\subseteq\R^d$.
 For \cite{barber2020testing}'s aCSS method to provide approximate Type I error control,
 we need consistency of the (perturbed) MLE, i.e., a bound on $\|\hat\theta - \theta_0\|$. 
 Many important problems are therefore excluded from this framework. In particular, 
 consistency of the MLE cannot be assumed for problems where the unconstrained MLE is not well-defined---for example,
 a mixture of two Gaussians with unknown means and variances, due to the degenerate behavior of the likelihood as 
 we take one component's variance to zero. In addition, consistency of the MLE will not hold for high-dimensional
 problems, such as Gaussian linear regression with dimension $d$ larger than the sample size $n$---even if we add a ridge regularizer $R(\theta)$
 so that the solution $\hat\theta$ is unique, in general $\hat\theta$ will not be a consistent estimator of $\theta$. 
 In high-dimensional settings, achieving consistent parameter estimation is impossible without additional structural assumptions.  Constraints serve as an effective tool for incorporating prior knowledge about the structure into the estimation process. The most common illustration of this is the application of LASSO \cite{tibshirani1996regression} and the Dantzig selector \cite{candes2007dantzig} under specific sparsity assumptions. These techniques, linked with $\ell_{1}$-regularization, have been demonstrated to be consistent \cite{ bickel2009simultaneous, zhang2008sparsity,zhao2006model}.
 In contrast to aCSS, however, where we need to be able to estimate the true parameter $\theta_0$ accurately
 with the \emph{unconstrained} MLE solution $\hat\theta$, here we are interested in settings where $\theta_0$
 can only be accurately estimated with a \emph{constrained} optimization problem.

 To this end, we now introduce constraints, \[A\theta\leq b,\] for a fixed and known matrix $A\in\R^{r\times d}$ and 
 vector $b\in\R^r$. The inequality should be interpreted elementwise, i.e., we are requiring $(A\theta)_i\leq b_i$ for each $i=1,\dots,r$.
 (Of course, in the special case $r=0$, this reduces to the earlier, unconstrained setting.)
 At a high level, to run aCSS in this setting, we first need to compute a constrained MLE (with a random perturbation),
 \begin{equation}\label{eqn:def_thetahat}\hat\theta = \hat\theta(X,W) = \argmin_{\theta\in\Theta}\left\{ \L(\theta;X,W) \, : \, A\theta\leq b\right\},\end{equation}
 where
 \[\L(\theta;X,W)= \L(\theta;X) + \sigma W^\top  \theta, \quad \L(\theta;X) = - \log f(X;\theta) + R(\theta). \]
 As before, $f(\cdot;\theta)$ is the density for distribution $P_\theta$, $R(\theta)$ is an optional 
 twice-differentiable regularizer, $W\sim\mathcal{N}(0,\frac{1}{d}\I_d)$ is independent Gaussian noise, and $\sigma>0$ is a  parameter that controls the magnitude of this perturbation. 
 We then compute the conditional distribution of $X$ given $\hat\theta$, and sample the copies $\tilde{X}^{(1)},\dots,\tilde{X}^{(M)}$
 from this conditional distribution (or rather, sample from an approximation, since $\theta_0$ is unknown). 
 Defining
 \begin{equation}\label{eqn:def_ghat}\hat{g} = \hat{g}(X,W) = \nabla_\theta \L(\hat\theta(X,W);X,W),\end{equation}
 we can see that we would trivially have $\hat{g}\equiv 0$ in the unconstrained setting but may in general have $\hat{g}\neq 0$ now that constraints
 have been introduced. We will see that, in the constrained optimization setting, while $\hat\theta$ on its own does not carry enough information to serve
 as an approximately sufficient statistic, instead the pair $(\hat\theta,\hat{g})$ now plays this role.
 
 For each $\theta\in\Theta$, we will define $P_\theta(\cdot\mid \hat\theta,\hat{g})$ as the conditional distribution of $X\mid (\hat\theta,\hat{g})$ if we assume
 that $X$ was drawn as $X\sim P_\theta$. Using $\hat\theta$ as a plug-in for the true parameter 
 $\theta_0$,  we will use $P_{\hat\theta}(\cdot\mid \hat\theta,\hat{g})$ as the distribution from which the copies $\tilde{X}^{(m)}$ are drawn.
 The constrained aCSS algorithm is then defined via the following steps:
 \begin{quote}	\normalsize
 	\textbf{Constrained aCSS algorithm (informal version):}
 	\begin{enumerate}
 		\item Observe data $X\sim P_{\theta_0}$.
 		\item Draw noise $W\sim\mathcal{N}(0,\frac{1}{d}\I_d)$.
 		\item Solve for a constrained perturbed MLE $\hat\theta = \hat\theta(X,W)$ as in~\eqref{eqn:def_thetahat},
 		and compute the corresponding gradient $\hat{g}=\hat{g}(X,W)$ as in~\eqref{eqn:def_ghat}.
 		\item Sample the copies $\tilde{X}^{(1)},\dots,\tilde{X}^{(M)}$ from the approximate conditional distribution $P_{\hat\theta}(\cdot\mid\hat\theta,\hat{g})$.
 		\item Compute the p-value defined in~\eqref{eq:pval}
 		for our choice of test statistic $T$.
 	\end{enumerate}
 \end{quote}
 As compared to (unconstrained) aCSS, the difference lies 
 in the fact that $\hat\theta$ is computed via a constrained optimization problem, and as a result, the conditional distribution $P_\theta(\cdot\mid\hat\theta,\hat{g})$
 is now  more challenging to compute; we will return to this question shortly.
 
 When running constrained aCSS, 
 we note that we are not assuming explicitly that the true parameter $\theta_0$ itself satisfies the constraints---that is,
 we do not assume $A\theta_0\leq b$ must hold. However, 
 in order for the method to retain approximate Type I error control, $\hat\theta = \hat\theta(X,W)$ will need to be an accurate
 estimator of $\theta_0$; this implicitly requires that $A\theta_0\leq b$ must at least approximately hold.
 
 The choice of $\sigma$ controls the amount of perturbation in the constrained MLE $\hat\theta$.
 This choice represents a tradeoff between Type I error, which is better for small $\sigma$, versus statistical power
 and computational efficiency, which tend to improve with larger $\sigma$---this tradeoff occurs
 for unconstrained aCSS as well (see Section~\ref{sec:aCSS_choose_sigma}). For constrained aCSS,
 additional challenges can arise since we may now be working in a high-dimensional setting---we will discuss
 these questions more in Section~\ref{sec:theory} below, when presenting our theoretical results, and will
 explore the role of $\sigma$ empirically in our simulations in Section~\ref{sec:numerical}.

 \subsection{Examples of constraints}\label{sec:examples_of_constraints} 
 Before defining the method more formally, we present several key examples of constraints $A\theta\leq b$ to motivate this method.
 \begin{itemize}
 	\item Nonnegativity constraint: if we believe $\theta_0$ has only nonnegative entries, we can choose 
 	\[A = - \I_d, \quad b = \mathbf{0}_d\]
 	to enforce $\theta_i \geq 0$ for all $i$.
 	\item Bounding away from zero: if we believe the entries of $\theta_0$ cannot be too close to zero, 
 	we can choose
 	\[A = - \I_d, \quad b = -c\cdot \mathbf{1}_d,\]
 	for a small constant $c>0$ (or we can take a submatrix of the identity, if we want to place a lower bound on only certain entries of $\theta$),
 	to enforce $\theta_i\geq c$ for all $i$ (or for certain entries).
 	For example, for a Gaussian mixture model, we need to place a positive lower bound on the variance of each component in order
 	for the MLE to be well-defined.
 	\item Monotonicity constraint: if we believe $\theta_0$ has entries that appear in nondecreasing order, i.e., $(\theta_0)_1\leq \dots \leq (\theta_0)_d$,
 	we can choose
 	\[A = \left(\begin{array}{cccccc}1 & -1 & 0 & \dots & 0 & 0 \\
 		0 & 1 & - 1 & \dots & 0 & 0\\
 		\vdots & \vdots& \vdots& &\vdots& \vdots\\
 		0 & 0 & 0 & \dots & 1 & -1\end{array}\right), \quad b = \mathbf{0}_d,\]
 	to enforce the monoticity constraint $\theta_1\leq \dots\leq \theta_d$.
 	\item $\ell_\infty$ constraint: if we believe $\theta_0$ has bounded entries, we can choose
 	\[A = \left(\begin{array}{c} \I_d \\ - \I_d \end{array}\right) , \quad b = C \cdot \mathbf{1}_{2d},\]
 	to enforce the constraint $\|\theta\|_\infty\leq C$.
 	\item $\ell_1$ constraint: if we believe that $\theta_0$ is sparse or approximately sparse, such as in a high-dimensional regression problem, 
 	we can choose
 	\[ A \in\{\pm 1\}^{2^d\times d} \textnormal{ (with rows given by the set of sign vectors of length $d$)}, \quad b = C\cdot\mathbf{1}_{2^d}\]
 	in order to enforce the constraint $\|\theta\|_1\leq C$.
 	(Note that, in high-dimensional statistics, it is more common to use an $\ell_1$ penalty---i.e., the lasso---rather than an $\ell_1$ constraint,
 	when defining the regularized MLE. We will define a penalized version of our method later on, in Section~\ref{sec:penalty}.)
 	\item Fused $\ell_1$ norm constraint: if we believe $\theta_0$ is locally constant (or is smooth and therefore can be well approximated
 	by a locally constant vector), we can choose to constrain $\|D\theta\|_1 \leq C$, 
 	where $D\in\{-1,0,+1\}^{(d-1)\times d}$ is defined with first row $(+1,-1,0,\dots,0)$, second row $(0,+1,-1,0,\dots,0)$, etc, so that $\|D\theta\|_1 = \sum_{i=1}^{d-1}|\theta_i - \theta_{i+1}|$.
 	This corresponds to choosing $A\in\R^{2^{d-1}\times d}$ given by $A = A'\cdot D$, where $A'\in\{\pm 1\}^{2^{d-1}\times (d-1)}$ has
 	rows given by all possible sign vectors of length $d-1$, and $b = C\cdot \mathbf{1}_{2^{d-1}}$. 
 \end{itemize}

 \subsection{Formally defining the method}
 We now turn to the details of the method and its implementation, including questions of optimization and sampling,  
 then combine all these ingredients to formally define the constrained aCSS method.
 \subsubsection{The second-order stationary condition}\label{sec:SSOSP}
 First we consider the question of optimization. 
 In certain settings, it may be the case that we cannot reliably solve
 for the global minimizer of $\L(\theta;X,W)$, or, that this global minimizer may not be well-defined or may not be unique---for 
 example, the negative log-likelihood might be nonconvex.  Note that, in general, $\L(\theta;X)$ can also represent objective functions other than negative log-likelihood for the good estimation of $\theta$. Formally, we define 
 \[\hat\theta:\X\times\R^d\rightarrow\Theta\]
 to be \emph{any} measurable function, which represents the output of our solver
 when we input the constrained optimization problem~\eqref{eqn:def_thetahat}.
 For each subset $\A\subseteq[r]$ of constraints, define a matrix $U_{\A}$ that forms an orthonormal basis for subspace orthogonal to $\textnormal{span}\{A_i : i\in \A\}$ (where $A_i\in\R^d$ is the vector given by the $i$th row of $A$), that is,
 \begin{equation}\label{eqn:define_U} \textnormal{$U_{\A}\in\R^{d \times (d-\textnormal{rank}(\textnormal{span}\{A_i \}_{i\in \A}))}$ satisfies $U_{\A}U_{\A}^\top  = \mathcal{P}^\perp_{\textnormal{span}\{A_i\}_{ i\in \A}}$,}\end{equation}  
 so that $U_{\A}U_{\A}^\top $ projects to the subspace orthogonal to the span of constraints indexed by $\A$.
 \begin{definition}[SSOSP]\label{def:ssosp} A parameter $\theta\in\Theta$ is a strict second-order stationary point (SSOSP) of the optimization
 	problem~\eqref{eqn:def_thetahat} if it satisfies all of the following:		
 	\begin{enumerate}
 		\item Feasibility: 
 		\[A \theta \leq b.\]
 		\item  First-order necessary conditions, i.e., Karush--Kuhn--Tucker (KKT) conditions: 
 		\[\nabla_\theta \L(\theta;X,W) + \sum_{i=1}^r \lambda_i A_i = 0,\]
 		where $\lambda_i\geq 0$ for all $i$, and $\lambda_i=0$ for all $i\in[r]\backslash \A(\theta)$, 
 		where $\A(\theta) = \{i\in[r]: A_i^\top  \theta = b_i\}$ is the set of active constraints.
 		\item   Second-order sufficient condition: \[U_{\A(\theta)}^\top  \nabla^2_{\theta}\L(\theta; X,W) U_{\A(\theta)}\succ 0,\]
 		that is, the Hessian $ \nabla^2_{\theta}\L(\theta; X,W) $ is strictly positive definite when restricted to the subspace orthogonal 
 		to the active constraints.
 	\end{enumerate}  
 \end{definition}
 As in the unconstrained aCSS algorithm \citep{barber2020testing}, to allow for the possibility that our solver might fail to find a valid solution,
 if $\hat\theta(X,W)$ fails the SSOSP condition then we will set $\tilde{X}^{(1)}=\dots=\tilde{X}^{(M)}=X$ to trivially obtain a p-value of 1 (i.e., to avoid
 the possibility of a rejection in this scenario where our estimate $\hat\theta$ of $\theta_0$ is unreliable).

 \subsubsection{The conditional distribution}
 With the SSOSP condition in place, we are now ready to define the conditional distribution $P_\theta(\cdot\mid \hat\theta,\hat{g})$. 
 We first need some regularity conditions. 
 \begin{assumption}
 	\label{asm:reg}
 	Assume the family $\{P_{\theta}: \theta\in\Theta\}$ and regularization function $R(\theta)$ satisfy:
 	\begin{enumerate}
 		\item[$\bullet$]  $\Theta\subseteq \R^{d}$ is a convex and open set;
 		\item[$\bullet$] For each $\theta\in \Theta$, $P_{\theta}$ has density $f(x; \theta)>0$ with respect to a common base measure $\nu_{\X}$;
 		\item[$\bullet$] for each $x\in\X$, the function  $\theta\rightarrow \L(\theta; x) = -\log f(x;\theta) + R(\theta)$ 	is continuously twice differentiable.
 	\end{enumerate}
 \end{assumption} 
 This first assumption is the same as Assumption 1 of \cite{barber2020testing}, for the unconstrained aCSS setting. The following result, however,
 is a strict generalization of \cite[Lemma 1]{barber2020testing}, computing the conditional density of $X$ after solving for $\hat\theta$ under linear constraints (with the unconstrained setting
 as a special case).
 \begin{lemma}[Conditional density]\label{lemma:conditional_density}
 	Suppose Assumption \ref{asm:reg} holds. For $A \in \R^{r\times d}$, $b \in \R^{r}$, fix any $\theta_{0}\in \Theta$ and let $(X, W, \hat{\theta}, \hat{g})$ be drawn from the
 	joint model
 	\begin{equation}
 		\left\{ \begin{array}{l}
 			X\sim P_{\theta_{0}},\\
 			W\sim \mathcal{N}(0, \frac{1}{d}\I_{d}),\\
 			\hat\theta = \hat\theta(X, W), \\
 			\hat{g} = \hat{g}(X, W) = \nabla_{\theta}\L(\hat\theta; X, W).
 		\end{array}\right.
 	\end{equation}  
 	
 	Fix any $\A\subseteq[r]$, and assume that the event that
 	$\hat{\theta}(X, W)$ is a SSOSP of \eqref{eqn:def_thetahat} with active set $\A(\hat{\theta}(X, W)) = \A$ has positive probability. Then, conditional on this event,
 	the conditional distribution of $X| \hat{\theta}, \hat g$ has density  
 	\begin{equation}
 		\label{eqn:conditional_density}
 		p_{\theta_{0}}(\cdot\mid\hat{\theta}, \hat g )\propto f(x; \theta_{0})\cdot \exp\left\{-\frac{\|\hat{g} - \nabla_{\theta}\L(\hat\theta; x)\|^{2} }{2\sigma^{2}/d}\right\}\cdot\det\left(U_{\A}^\top \nabla^{2}_{\theta}\L(\hat\theta; x)U_{\A}\right) \cdot
 		\id_{x\in\X_{\hat\theta,\hat{g}}}
 	\end{equation} 	
 	with respect to the base measure $\nu_{\X}$, where $U_{\A}$ is defined in \eqref{eqn:define_U} and
 	\[\X_{\theta,g} = \left\{x\in\X: \textnormal{ for some $w\in\R^d$, $\theta=\hat\theta(x,w)$ is a SSOSP of~\eqref{eqn:def_thetahat}, and $g=\nabla\L(\theta;x,w)$}\right\}.\]
 \end{lemma}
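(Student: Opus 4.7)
The plan is to derive the conditional density of $X \mid (\hat\theta,\hat g)$ by computing the joint density of $(X,\hat\theta,\hat g)$ via a change of variables from $(X,W)$, and then reading off the conditional by absorbing everything that does not depend on $x$ into the proportionality constant.

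First I would characterize, on the event $E_\A$ that $\hat\theta(X,W)$ is an SSOSP with active set $\A(\hat\theta)=\A$, the manifold in which $(\hat\theta,\hat g)$ lives. Feasibility and the active-set condition force $A_i^\top \hat\theta = b_i$ for all $i\in\A$, so $\hat\theta$ lies in an affine subspace of dimension $d-s$ (where $s=\mathrm{rank}\{A_i:i\in\A\}$); I would parametrize this as $\hat\theta=\theta^*+U_\A\eta$ with $\eta\in\R^{d-s}$. The KKT condition forces $\hat g=-\sum_{i\in\A}\lambda_i A_i$ with $\lambda_i\ge 0$, so $\hat g$ lies in the $s$-dimensional subspace $\mathrm{span}\{A_i:i\in\A\}$, which I would parametrize as $\hat g=V\mu$ for $\mu\in\R^s$ with $V\in\R^{d\times s}$ an orthonormal basis of that span. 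Crucially $U_\A^\top V=0$, and $(\eta,\mu)$ provides a $d$-dimensional coordinate chart.

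Next I would invert the stationarity equation $\nabla_\theta\L(\hat\theta;X)+\sigma W=\hat g$ to obtain $W=\sigma^{-1}[\hat g-\nabla_\theta\L(\hat\theta;X)]$ and compute the Jacobian of the map $(\eta,\mu)\mapsto W$ for fixed $X=x$. Differentiating the stationarity relation yields $\sigma\, dw = V\,d\mu - \nabla^2_\theta\L(\hat\theta;x)\,U_\A\, d\eta$, so the $d\times d$ Jacobian is $\sigma^{-1}\bigl[-\nabla^2_\theta\L(\hat\theta;x)U_\A \,\bigm|\, V\bigr]$. Left-multiplying by the orthogonal matrix $[U_\A\mid V]^\top$ exploits $U_\A^\top V=0$ to produce a block-diagonal matrix with blocks $-U_\A^\top\nabla^2_\theta\L(\hat\theta;x)U_\A$ and $\I_s$, so $|\det|=\sigma^{-d}\det\bigl(U_\A^\top\nabla^2_\theta\L(\hat\theta;x)U_\A\bigr)$, which is strictly positive by the SSOSP condition. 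That same condition also lets me invoke the implicit function theorem, ensuring that $w\mapsto(\hat\theta,\hat g)$ is a local diffeomorphism so the change of variables is valid.

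Applying change of variables to the joint density $f(x;\theta_0)\phi_d(w)$ of $(X,W)$, where $\phi_d$ is the $\mathcal{N}(0,\I_d/d)$ density, gives the joint density of $(X,\hat\theta,\hat g)$ on $E_\A$ (with respect to $\nu_\X$ and Lebesgue measure in the $(\eta,\mu)$ coordinates) as
\[f(x;\theta_{0})\cdot\phi_d(w(x,\hat\theta,\hat g))\cdot\sigma^{-d}\det\bigl(U_\A^\top\nabla^2_\theta\L(\hat\theta;x)U_\A\bigr)\cdot\id_{x\in\X_{\hat\theta,\hat g}}.\]
Substituting $\|w\|^{2}=\|\hat g-\nabla_\theta\L(\hat\theta;x)\|^{2}/\sigma^{2}$ into $\phi_d$ produces exactly the Gaussian factor $\exp\{-\|\hat g-\nabla_\theta\L(\hat\theta;x)\|^{2}/(2\sigma^{2}/d)\}$ in \eqref{eqn:conditional_density}. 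Conditioning on the value of $(\hat\theta,\hat g)$ absorbs everything independent of $x$—namely $\sigma^{-d}(d/2\pi)^{d/2}$ and the marginal normalizer of $(\hat\theta,\hat g)$—into the proportionality constant, leaving precisely the density claimed. The main obstacle will be rigorously justifying the change of variables: partitioning $E_\A$ into neighborhoods where the implicit function theorem applies, handling the fact that $\hat\theta$ is only required to be a measurable selection (so different selections could coincide with different branches of stationary points), and checking that the resulting conditional density is independent of that selection. The SSOSP condition supplies exactly the invertibility and regularity needed to make this work.
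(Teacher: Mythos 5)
Your proposal is correct and follows essentially the same route as the paper's proof: the paper likewise establishes that $(x,w)\mapsto(x,\hat\theta,\hat g)$ is a bijection with explicit inverse $w=\sigma^{-1}(\hat g-\nabla_\theta\L(\hat\theta;x))$ (its Lemma on bijectivity), reparametrizes $(\hat\theta,\hat g)$ via orthonormal bases $U_\A$ and $V_\A$ for $\mathrm{span}\{A_i\}_{i\in\A}^\perp$ and $\mathrm{span}\{A_i\}_{i\in\A}$, computes the same Jacobian determinant $\sigma^{-d}\det\bigl(U_\A^\top\nabla^2_\theta\L(\hat\theta;x)U_\A\bigr)$, and applies the change of variables to read off the conditional. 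The only cosmetic differences are that the paper uses the explicit global inverse rather than the implicit function theorem, and the matrix you obtain after multiplying by $[U_\A\mid V]^\top$ is block-triangular rather than block-diagonal, which does not affect the determinant.
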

 The four terms of the conditional density reflect, respectively, the original distribution of $X$ in the first term; the
 Gaussian distribution of the noise $W$ in the second term; the determinant term, which captures a change-of-variables type calculation relating $(X,W)$ with $(X,\hat\theta,\hat{g})$;
 and the final indicator term, which accounts for possible failure to find a SSOSP.
 In the case where $\A = \emptyset$, i.e., no active constraints, we have $\hat{g} \equiv 0$ (by first-order optimality) and the conditional density then coincides
 with the calculations in \cite{barber2020testing} for the unconstrained case.
 
 With this calculation in place, we can now specify the estimated conditional distribution $P_{\hat\theta}(\cdot\mid\hat\theta,\hat{g})$,
 from which we would like to sample the copies $\tilde{X}^{(1)},\dots,\tilde{X}^{(M)}$ for the constrained aCSS algorithm: it is the distribution obtained
 by plugging in $\hat\theta$ in place of the unknown $\theta_0$, in the conditional distribution computed in Lemma~\ref{lemma:conditional_density}, namely,\footnote{For this to result in a well defined density, we need to verify that the right-hand side integrates to a positive and finite value; 
 	in fact, this holds almost surely on the event that $\hat\theta=\hat\theta(X,W)$ is a SSOSP, as we will verify in Appendix B.}
 \begin{equation}
 	\label{eqn:conditional_density_thetahat}
 	p_{\hat\theta}(\cdot\mid\hat{\theta}, \hat g )\propto f(x; \hat\theta)\cdot \exp\left\{-\frac{\|\hat{g} - \nabla_{\theta}\L(\hat\theta; x)\|^{2} }{2\sigma^{2}/d}\right\}\cdot\det\left(U_{\A(\hat\theta)}^\top \nabla^{2}_{\theta}\L(\hat\theta; x)U_{\A(\hat\theta)}\right) \cdot
 	\id_{x\in\X_{\hat\theta,\hat{g}}}.
 \end{equation} 
 As mentioned at the beginning of Section \ref{sec:SSOSP}, in practice $\mathcal{L}(\theta; X)$ can represent objective functions beyond the negative log-likelihood for the good estimation of $\theta$. From the proof in Appendix A.3, we can see that any $\mathcal{L}(\theta; X)$ is applicable for deriving the conditional density, as long as it is continuously twice differentiable.

 \subsubsection{Sampling strategies}
 
 In the informal version of the algorithm defined above, we require that the copies $\tilde{X}^{(m)}$ are drawn i.i.d.\ from the conditional density
 $p_{\hat\theta}(\cdot\mid \hat\theta,\hat{g})$, as calculated in~\eqref{eqn:conditional_density_thetahat}. In other words,
 conditional on $X,\hat\theta,\hat{g}$, the collection of copies is drawn from a product distribution,
 \begin{equation}\label{eqn:copies_iid}(\tilde{X}^{(1)},\dots,\tilde{X}^{(M)})\mid (X,\hat\theta,\hat g) \sim p_{\hat\theta}(\cdot\mid \hat\theta,\hat{g})\times \dots \times p_{\hat\theta}(\cdot\mid \hat\theta,\hat{g}).\end{equation}
 In some settings, this may be computationally very easy---we will see some examples of this type below when the parametric family $\{P_\theta\}$
 is Gaussian. In more complex settings, however, sampling directly from $p_{\hat\theta}(\cdot\mid \hat\theta,\hat{g})$ may be infeasible,
 and we will instead turn to approximations, such as MCMC-based strategies.
 Of course, without analyzing complex conditions such as the mixing time of the Markov chain, we cannot ensure that theoretical guarantees enjoyed by
 the algorithm would be preserved when sampling directly from $p_{\hat\theta}(\cdot\mid \hat\theta,\hat{g})$ is replaced with an approximation---particularly
 as this approximation might induce additional dependence among the copies. 
 
 In the unconstrained aCSS setting, \cite{barber2020testing} describe several exchangeable MCMC strategies, based on the work of \cite{besag1989generalized}, that 
 avoid these difficulties. For completeness, we will describe these schemes in more detail in  Appendix D.  
 In general, following \cite{barber2020testing}, we can generalize the sampling strategy~\eqref{eqn:copies_iid}, drawing the copies as
 \[(\tilde{X}^{(1)},\dots,\tilde{X}^{(M)}) \mid (X,\hat\theta,\hat{g}) \sim \tilde{P}_M(\cdot;X,\hat\theta,\hat{g})\]
 where the family of conditional distributions $\{\tilde{P}_M(\cdot;x,\theta,g)\}$ is required to satisfy the following condition:
 \begin{equation}\label{eqn:define_tilde_P}
 	\begin{tabular}{c}
 		\textnormal{If $X\sim p_\theta(\cdot\mid \theta,g)$ and $(\tilde{X}^{(1)},\dots,\tilde{X}^{(M)})\mid X \sim \tilde{P}_M(\cdot;X,\theta,g)$, then}\\
 		\textnormal{the random vector $(X,\tilde{X}^{(1)},\dots,\tilde{X}^{(M)})$ is exchangeable.}\end{tabular}
 \end{equation}
 In particular, we note that choosing
 \[\tilde{P}_M(\cdot;x,\theta,g) = p_\theta(\cdot\mid \theta,g)\times\dots\times p_\theta(\cdot\mid \theta,g),\]
 i.e., sampling the copies i.i.d.\ from $p_\theta(\cdot\mid \theta,g)$,
 will trivially always satisfy the exchangeability condition~\eqref{eqn:define_tilde_P}. More generally, however, if sampling the copies directly from $p_\theta(\cdot\mid \theta,g)$
 is computationally infeasible, the MCMC based strategy described in   Appendix D
 will also satisfy~\eqref{eqn:define_tilde_P} while allowing for more complex problems where direct sampling is not achievable.
 
 \subsubsection{Combining everything}
 With all our formal calculations and definitions in place, we can now state the full version of the constrained aCSS algorithm.
 \begin{quote}
 	\normalsize
 	\textbf{Constrained aCSS algorithm:}\begin{enumerate}
 		\item Observe data $X\sim P_{\theta_0}$.
 		\item Draw noise $W\sim\mathcal{N}(0,\frac{1}{d}\I_d)$.
 		\item Solve for a constrained perturbed MLE $\hat\theta = \hat\theta(X,W)$ as in~\eqref{eqn:def_thetahat},
 		and compute the corresponding gradient $\hat{g}=\hat{g}(X,W)$ as in~\eqref{eqn:def_ghat}.
 		\item If $\hat\theta$ is not a SSOSP of~\eqref{eqn:def_thetahat}, then set $\tilde{X}^{(1)}=\dots=\tilde{X}^{(M)}=X$.
 		Otherwise, sample copies $(\tilde{X}^{(1)},\dots,\tilde{X}^{(M)}) \mid (X,\hat\theta,\hat{g}) \sim \tilde{P}_M(\cdot;X,\hat\theta,\hat{g})$, where $\tilde{P}_M$
 		is chosen to satisfy property~\eqref{eqn:define_tilde_P} relative to the conditional density $p_{\hat\theta}(\cdot\mid \hat\theta,\hat g)$ as computed
 		in~\eqref{eqn:conditional_density_thetahat}.
 		\item Compute the p-value defined in~\eqref{eq:pval}
 		for our choice of test statistic $T$.
 	\end{enumerate}
 \end{quote}
 This more general form of the constrained aCSS algorithm is more flexible than our original informal definition: it allows us to handle
 settings where solving for the (perturbed, constrained) MLE is more challenging (e.g., convergence may not be guaranteed),
 as well as settings where sampling directly from the estimated conditional density~\eqref{eqn:conditional_density_thetahat} may be computationally infeasible.
 
 \section{Theoretical results}\label{sec:theory}
 
 In this section, we provide theoretical guarantees for the constrained aCSS procedures, establishing
 an upper bound on the Type I error level of the test. 
 First, in Section~\ref{sec:theory_general}, we give a general result that holds for any problem where constrained aCSS can be applied.
 We will then refine the result to provide a stronger bound for two special cases: Section~\ref{sec:theory_sparse}
 addresses the setting where $\hat\theta$ is sparse
 in some basis, and Section~\ref{sec:theory_gaussian} considers the setting of (potentially high-dimensional) Gaussian data.

 \subsection{General results: Type I error control}\label{sec:theory_general} 
 In order to establish a bound on the Type I error level of the constrained aCSS procedure,
 we first need several assumptions (in addition to the regularity conditions of Assumption~\ref{asm:reg}). 
 The following assumption ensures that, with high probability,
 we successfully find a strict second-order stationary point (SSOSP) $\hat\theta$ of the optimization problem~\eqref{eqn:def_thetahat},
 and this solution $\hat\theta$ is a good approximation to the true parameter $\theta_0$.
 \begin{assumption}
 	\label{asm:SSOSP_general} 
 	For any $\theta_{0}\in  \Theta$ in Assumption \ref{asm:reg}, the estimator $\hat{\theta}: \X\times\R^{d}\rightarrow \Theta$ satisfies
 	\begin{equation*}
 		\left\{ \begin{array}{l}
 			\hat{\theta}(X, W)\ \textnormal{is a SSOSP of the constrained  optimization problem \eqref{eqn:def_thetahat}},\\
 			\|\hat{\theta}(X, W) - \theta_{0}\|\le r(\theta_{0}), 
 		\end{array}\right.
 	\end{equation*}
 	with probability at least $1 - \delta(\theta_{0})$, where  the probability is taken with respect to the distribution $(X, W)\sim P_{\theta_{0}}\times N(0, \frac{1}{d}\I_{d})$.
 \end{assumption}  
 Next, we need an assumption on the Hessian of the log-likelihood.	Define $H(\theta; x) = -\nabla_{\theta}^{2}\log f(x; \theta)$,  and let $H(\theta) = \E_{\theta_{0}}\left[H(\theta; x)\right]$. 
 \begin{assumption}\label{asm:L_H}
 	For any $\theta_{0}\in \Theta$, the expectation $H(\theta)$ exists for all $\theta\in\B(\theta_{0}, r(\theta_{0}))\cap\Theta$, and furthermore 
 	\begin{equation}\label{eqn:Hessian_asm_1}
 		\E_{\theta_0} \left[\sup_{\theta\in \B(\theta_{0}, r(\theta_{0}))\cap \Theta} r(\theta_{0})^{2}\left(\lambda_{\max}\left(H(\theta) - H(\theta; X) \right)\right)_{+} \right]\le \epsilon(\theta_{0}),
 	\end{equation}
 	\begin{equation}\label{eqn:Hessian_asm_2}
 		\log\E_{\theta_{0}}\left[\exp\left\{\sup_{\theta\in\B(\theta_{0}, r(\theta_{0}))\cap\Theta}r(\theta_{0})^{2}\cdot \left(\lambda_{\max}(H(\theta; X)-H(\theta))\right)_{+}\right\}\right]\le\epsilon(\theta_{0}).
 	\end{equation}
 	Here $r(\theta_{0})$ is the same constant as that appears in Assumption \ref{asm:SSOSP_general}.
 \end{assumption}
 These two assumptions are analogous to Assumptions 2 and 3 in \cite{barber2020testing}'s theoretical results for unconstrained aCSS.
 However,  in the present work $\hat\theta$ is defined as the solution to the constrained, rather than unconstrained, perturbed maximum likelihood estimation problem. Since constraints allow for more accurate estimation in many settings, 
 we can expect that the error $\|\hat\theta - \theta_0\|$ might be substantially smaller in this constrained setting, making these assumptions
 more realistic for a broader range of problems.

 \begin{theorem}\label{theorem_general}
 	Suppose Assumptions~\ref{asm:reg}, \ref{asm:SSOSP_general},  \ref{asm:L_H} hold,
 	and the data is generated as $X\sim P_{\theta_0}$. Then the copies $\tilde{X}^{(1)}, \dots, \tilde{X}^{(M)}$
 	generated by the constrained aCSS procedure are approximately exchangeable with $X$, satisfying	
 	\begin{equation*}
 		d_{\textnormal{exch}}(X, \tilde{X}^{(1)}, \dots, \tilde{X}^{(M)})\le 3\sigma r(\theta_{0}) + \epsilon(\theta_{0}) + \delta(\theta_{0}),
 	\end{equation*}
 	where  $r(\theta_{0}), \epsilon(\theta_{0}) , \delta(\theta_{0})$ are defined in Assumptions \ref{asm:SSOSP_general} and \ref{asm:L_H}. In particular, this implies that for any predefined test statistic $T: \X \rightarrow \R$ and rejection threshold $\alpha\in[0,1]$, the p-value defined in~\eqref{eq:pval} satisfies
 	\begin{equation*}
 		\P\left(\textnormal{pval}_{T}(X, \tilde{X}^{(1)}, \dots, \tilde{X}^{(M)})\le \alpha \right)\le \alpha + 3\sigma r(\theta_{0}) + \epsilon(\theta_{0}) + \delta(\theta_{0}).
 	\end{equation*}
 \end{theorem}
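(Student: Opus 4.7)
The plan mirrors the three-move structure of \cite{barber2020testing}'s unconstrained proof, now using the constrained conditional density of Lemma~\ref{lemma:conditional_density}: construct an explicit exchangeable reference distribution for $(X,\tilde X^{(1)},\dots,\tilde X^{(M)})$, bound the total-variation distance between the actual joint law and that reference, and translate the resulting $d_{\textnormal{exch}}$ bound into the stated p-value bound via the usual symmetry argument.

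For the reference, I would couple the algorithm with an ``oracle'' process that uses the same $(X,W,\hat\theta,\hat g)$ but, on the SSOSP event, replaces $X$ by a fresh draw $X'\sim p_{\hat\theta}(\cdot\mid\hat\theta,\hat g)$ and then samples $(\tilde X'^{(m)})\mid(X',\hat\theta,\hat g)\sim \tilde P_M(\cdot;X',\hat\theta,\hat g)$; on the complementary event it sets $X'=X$ and $\tilde X'^{(m)}=X$. Property~\eqref{eqn:define_tilde_P} makes the SSOSP branch exchangeable conditional on $(\hat\theta,\hat g)$, and the non-SSOSP branch is trivially exchangeable, so the reference's marginal law is an exchangeable mixture. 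Letting $\calE_0$ denote the good event of Assumption~\ref{asm:SSOSP_general} and using $\P(\calE_0^c)\le\delta(\theta_0)$ to absorb the event where $\hat\theta$ is a SSOSP but $\|\hat\theta-\theta_0\|>r(\theta_0)$, the coupling yields
\[d_{\textnormal{exch}}(X,\tilde X^{(1)},\dots,\tilde X^{(M)})\;\le\;\delta(\theta_0)\;+\;\E_{\theta_0}\!\left[d_{\textnormal{TV}}\!\left(p_{\theta_0}(\cdot\mid\hat\theta,\hat g),\,p_{\hat\theta}(\cdot\mid\hat\theta,\hat g)\right)\mathbbm{1}_{\calE_0}\right].\]

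The core step is bounding the conditional TV on $\calE_0$. By Lemma~\ref{lemma:conditional_density}, the two densities share their Gaussian, determinant, and indicator factors and differ only through $f(x;\theta_0)$ versus $f(x;\hat\theta)$, so writing $p_{\theta_0}=p_{\hat\theta}\,e^{\psi}/Z$ with $\psi(x)=\log f(x;\theta_0)-\log f(x;\hat\theta)$ reduces the TV to an expected fluctuation of $\psi$ under $p_{\hat\theta}$. Taylor-expanding $\psi$ about $\hat\theta$ produces a linear piece $\langle\theta_0-\hat\theta,\nabla_\theta\log f(x;\hat\theta)\rangle$ and a quadratic Hessian remainder. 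Using $\nabla_\theta\log f(x;\hat\theta)=\nabla R(\hat\theta)-\nabla_\theta\L(\hat\theta;x)$ together with $\hat g=\nabla_\theta\L(\hat\theta;X,W)$, the linear piece splits into a term constant in $x$ (absorbed into $\log Z$) plus the residual $\langle\theta_0-\hat\theta,\hat g-\nabla_\theta\L(\hat\theta;x)\rangle$, whose $p_{\hat\theta}$-law mirrors the Gaussian factor in Lemma~\ref{lemma:conditional_density} at scale $\sigma$; on $\calE_0$ this yields $O(\sigma r(\theta_0))$. Centering the quadratic Hessian by $H(\tilde\theta)$ absorbs its deterministic part into $\log Z$, while Assumption~\ref{asm:L_H} (using both sides~\eqref{eqn:Hessian_asm_1}--\eqref{eqn:Hessian_asm_2}) controls the centered random part by $\epsilon(\theta_0)$. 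Combining gives the $3\sigma r(\theta_0)+\epsilon(\theta_0)$ contribution. For the p-value statement, exchangeability of an $(M+1)$-tuple makes~\eqref{eq:pval} stochastically dominated by Uniform$[0,1]$, and the TV coupling inflates the rejection probability by at most $d_{\textnormal{exch}}$.

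The main obstacle is the Taylor step: one must carefully classify terms as $x$-independent (cancellable via $Z$) versus genuine $x$-fluctuations, and verify that the MGF-type bound~\eqref{eqn:Hessian_asm_2}, which controls only one direction of Hessian deviation, is strong enough to pass from a pointwise Hessian inequality to an integrated TV bound. A secondary subtlety is keeping the reference exchangeable across the SSOSP/non-SSOSP split; using the two-branch mixture construction above handles this cleanly, as a mixture of exchangeable laws is exchangeable.
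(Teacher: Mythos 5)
Your proposal follows essentially the same route as the paper's proof: reduce $d_{\textnormal{exch}}$ to the expected total-variation distance between $p_{\theta_0}(\cdot\mid\hat\theta,\hat g)$ and its plug-in version $p_{\hat\theta}(\cdot\mid\hat\theta,\hat g)$ via an exchangeable reference built from property~\eqref{eqn:define_tilde_P}, then Taylor-expand the log-likelihood ratio, absorb the $x$-independent linear and mean-Hessian terms into the normalizing constant, control the residual linear term $\langle\theta_0-\hat\theta,\hat g-\nabla_\theta\L(\hat\theta;x)\rangle=\langle\theta_0-\hat\theta,\sigma W\rangle$ by the Gaussian MGF of $\|W\|$ to get $3\sigma r(\theta_0)$, and control the centered Hessian fluctuation by both sides of Assumption~\ref{asm:L_H} to get $\epsilon(\theta_0)$. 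The decomposition, the role of each assumption, and the final p-value step all match the paper's argument (which proves the result as the $k=d$ special case of Theorem~\ref{theorem_sparse}), so the proposal is correct in approach.
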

 
 The above upper bound on the Type I error appears identical to the result of \cite[Theorem 1]{barber2020testing}, but in fact this new
 result offers important contributions. Firstly, this new result holds for the more complex setting of a constrained optimization problem,
 which requires a more technical analysis. Moreover, as mentioned above, the estimation
 error $\|\hat\theta - \theta_0\|$   may be much smaller for the constrained optimization problem, since constraints can reduce
 the effective dimensionality of the statistical problem; consequently, the value of $r(\theta_0)$  can be much smaller in the constrained
 setting, leading to a tighter bound on Type I error control. (We will see that our empirical results, shown in Section~\ref{sec:numerical},  
 support this intuition.) 
 
 \subsubsection{Revisiting the role of $\sigma$}\label{sec:revisit_sigma}
 As discussed earlier in Section~\ref{sec:aCSS_choose_sigma}, the choice of $\sigma$ plays an important role in the performance of the method,
 typically with better Type I error control when $\sigma$ is smaller versus better power when $\sigma$ is larger. Now we return
 to this question in the context of constrained aCSS.
 The upper bound on Type I error shown in Theorem~\ref{theorem_general} suggests that $\sigma$ should not be too large---in particular,
 for most statistical settings with sample size $n$, we can expect $r(\theta_0)\asymp n^{-1/2}$ at best, suggesting that we need
 to choose $\sigma\ll n^{1/2}$ to ensure a  meaningful bound on Type I error. On the other hand,
 recalling that the noise $W$ in the perturbed maximum likelihood
 estimation problem~\eqref{eqn:def_thetahat} is generated as $W\sim \mathcal{N}(0,\frac{1}{d}\I_d)$, 
 in a high-dimensional setting where $d\gg n$ the perturbation term $\sigma W^\top\theta$ in~\eqref{eqn:def_thetahat}
 may therefore be negligible. This might lead to extremely low power and/or to computational challenges in sampling the copies
 $\tilde{X}^{(m)}$. This issue leads us to our next question: 
 are there any settings where we can improve the result of Theorem~\ref{theorem_general}, 
 and allow for a larger value of $\sigma$?

 \subsection{Special case: sparse structure}\label{sec:theory_sparse}
 We next turn to the special case where, due to the constraints imposed on the estimate $\hat\theta$,
 we can assume that the error $\hat\theta - \theta_0$ is likely to be sparse, relative to some basis.
 We will see that, in this setting, the upper bound on Type I error given in Theorem~\ref{theorem_general} can be improved
 to account for the lower effective dimension of $\hat\theta$, and that we are therefore free to use a substantially larger value of $\sigma$
 in the constrained aCSS procedure---leading downstream to higher power and easier computation.
 
 To formalize this idea, consider a fixed set of vectors $v_1,\dots,v_p\in\R^d$. We are interested in settings
 where the solution $\hat\theta$ to the perturbed constrained maximum likelihood estimation problem~\eqref{eqn:def_thetahat}
 is likely to lie in the span of a small subset of $v_i$'s. To motivate this setting, we can revisit several examples
 that we considered in Section~\ref{sec:examples_of_constraints}:
 \begin{itemize}
 	\item Sparsity: in a setting where we believe $\theta_0$ is sparse, we might use an $\ell_1$ constraint
 	for the optimization problem, requiring $\|\theta\|_1\leq C$, which is likely to lead to a solution $\hat\theta$ that is sparse as well.
 	In this setting, we can take $p=d$ and choose the set of vectors to be the canonical basis, i.e.,  $v_i = \mathbf{e}_i$ for $i\in[d]$,
 	reflecting our belief that the error $\hat\theta-\theta_0$ will itself be sparse.
 	\item Locally constant signal: if we believe $\theta_0$ is locally constant, 
 	we might choose the constraint $\sum_{i=1}^{d-1}|\theta_i - \theta_{i+1}|\leq C$.
 	This constraint often leads to solutions $\hat\theta$ that are piecewise constant, with $\hat\theta_i = \hat\theta_{i+1}$ 
 	for many indices $i\in[d-1]$, and therefore the error $\hat\theta-\theta_0$ will also 
 	be piecewise constant. Consequently, we can take $p = d$, and choose $v_i =  \mathbf{e}_{1} + ... + \mathbf{e}_{i}$ for $i\in[d]$.
 	(This choice of vectors $\{v_i\}$ means that, for any $w\in\R^d$, if $w$ has $\ell$ many changepoints---that is, $w_i \neq w_{i+1}$ 
 	for $\ell$ many indices $i$---then $w$ can be written as a linear combination of at most $\ell+1$ many $v_i$'s.)
 	\item Monotonicity: in a setting where we believe $\theta_0$ is monotone nondecreasing, 
 	we might use the isotonic constraint, choosing $A$ and $b$ to constrain $\theta_1 \leq \dots \leq \theta_d$.
 	This constraint often leads to solutions $\hat\theta$ that are piecewise constant, with $\hat\theta_i = \hat\theta_{i+1}$ 
 	for many indices $i\in[d-1]$.
 	If the true parameter $\theta_0$ is also piecewise constant, we therefore again have an error $\hat\theta-\theta_0$
 	that is likely to be piecewise constant, and we can then choose the same $v_i$'s as for the preceding example.
 \end{itemize}
 
 \subsubsection{Effective dimension definitions}\label{sec:effectivedim}
 For a given choice of vectors $\{v_i\}_{i\in[p]}$, we define
 \[\|w\|_{v,0} = \begin{cases}
 	\min\left\{ |S| : S\subseteq[p], w \in\textnormal{span}(\{v_i\}_{i\in S})\right\}, & w \in\textnormal{span}(\{v_i\}_{i\in[p]}),\\
 	+\infty,&\textnormal{otherwise}.\end{cases}\]
 for any $w\in\R^d$. In other words, $\|w\|_{v,0}$ is the minimum number of vectors $v_i$ needed 
 so that $w$ lies in their span. Note that, despite the notation, the function $w\mapsto \|w\|_{v,0}$ is not a norm. We choose this notation
 to agree with the commonly used ``$\ell_0$ norm'', $\|w\|_0$, the number of nonzero elements of the vector $w$; in particular,
 in the first example where $v_i = \mathbf{e}_i$, $i\in[d]$, we have $\|w\|_{v,0} = \|w\|_0$.
 
 Next, for each $k=0,\dots,d$, we define
 \[h_v(k) = \E_{Z\sim \mathcal{N}(0,\I_d)}\left[\max_{S\subseteq[p],|S|\leq k} \|\mathcal{P}_{v_S}(Z)\|^2\right],\]
 where $\mathcal{P}_{v_S}$ denotes projection to $\textnormal{span}(\{v_i\}_{i\in S})$.
 This quantity will play an important role in our theory below.
 We can think of $h_v(k)$ as describing the ``effective dimension'' of vectors that can be written as a $k$-sparse combination
 of the vectors $v_1,\dots,v_p$.
 In particular, we can see that for any $k$, we have
 $h_v(k) \leq \E_{Z\sim \mathcal{N}(0,\I_d)}[\|Z\|^2] = d$. On the other hand, if $k\ll d$, the following result shows that $h_v(k)$ can be substantially smaller:
 \begin{lemma}\label{lemma:proj_chisq}
 	For each $k$ it holds that
 	$h_v(k) \leq  \min\{4k\log(4p/k), d\}$.
 \end{lemma}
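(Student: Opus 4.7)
The plan is to combine a trivial dimension bound with a chi-squared union-bound argument. For the dimension bound, since $\|\mathcal{P}_{v_S}(Z)\|^2 \leq \|Z\|^2$ for every $S$, taking expectations immediately yields $h_v(k) \leq \E[\|Z\|^2] = d$.

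For the other bound, the strategy is to reduce the problem to controlling the maximum of a collection of $\chi^2_k$ random variables. For each $S \subseteq [p]$ with $|S| \leq k$, the subspace $\textnormal{span}(\{v_i\}_{i\in S})$ has dimension at most $k$, so I can choose any $k$-dimensional subspace $V_S \supseteq \textnormal{span}(\{v_i\}_{i\in S})$. Since projection onto a larger subspace can only increase the squared norm, $\|\mathcal{P}_{v_S}(Z)\|^2 \leq \|\mathcal{P}_{V_S}(Z)\|^2$, and each $\|\mathcal{P}_{V_S}(Z)\|^2$ is marginally $\chi^2_k$. Therefore
\[\max_{|S|\leq k} \|\mathcal{P}_{v_S}(Z)\|^2 \leq \max_{|S|\leq k} \|\mathcal{P}_{V_S}(Z)\|^2,\]
a maximum over $N := \sum_{j=0}^{k}\binom{p}{j}$ random variables, each distributed (marginally) as $\chi^2_k$.

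Next, I would invoke the Laurent--Massart tail bound $\P(\chi^2_k \geq k + 2\sqrt{kt} + 2t) \leq e^{-t}$ together with a union bound over the $N$ subspaces to obtain $\P(\max_S \|\mathcal{P}_{V_S}(Z)\|^2 \geq k + 2\sqrt{kt} + 2t) \leq N e^{-t}$, then integrate the tail (for instance, substituting $t = \log N + s$ and integrating over $s\geq 0$) to get an expectation bound of the form $k + 2\sqrt{k\log N} + 2\log N + O(1)$. The standard binomial estimate $\log N \leq k\log(ep/k) \leq k \log(4p/k)$, together with AM-GM ($2\sqrt{k\log N} \leq k + \log N$), consolidates the bound into one of the form $Ck\log(4p/k)$; careful tracking of constants gives $C=4$. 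Taking the minimum with the trivial dimension bound produces the claim.

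The main obstacles I anticipate are: (i) squeezing the constants so that the leading factor is exactly $4$ rather than a slightly larger value, which requires choosing the tail-integration contour carefully and perhaps using a sharper form of the chi-squared deviation inequality; and (ii) handling the boundary cases $p < k$ or $p$ close to $k$, where $4p/k$ may be small and the log factor is not useful, in which case the dimension bound $d$ must take over. Neither is conceptually deep, but the constant-pushing in (i) is what separates a clean statement of the form $Ck\log(4p/k)$ from the tighter form with $C=4$ claimed in the lemma.
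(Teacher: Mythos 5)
Your proposal shares the paper's core reduction---a union bound over the $\binom{p}{k}$ subsets combined with $\chi^2$ concentration for each projection---but executes the concentration step differently, and the difference matters for the stated constant. The paper works entirely through the moment generating function: by Jensen, $e^{\lambda h_v(k)} \le \E[\max_S e^{\lambda\|\mathcal{P}_{v_S}(Z)\|^2}] \le \sum_{|S|=k}\E[e^{\lambda\|\mathcal{P}_{v_S}(Z)\|^2}] \le \binom{p}{k}(1-2\lambda)^{-k/2}$, and then simply sets $\lambda = 1/4$ to read off $h_v(k) \le 2k\left(2\log(ep/k) - \log(1/2)\right) \le 4k\log(4p/k)$. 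This one-line Chernoff/soft-max bound delivers the constant $4$ with no tail integration at all. Your route---Laurent--Massart tail bound, union bound, then integrating the tail and consolidating via AM-GM---is a legitimate alternative, but as sketched it lands on something like $k + 2\sqrt{k\log N} + 2\log N$ plus lower-order terms from the integration, which after AM-GM is roughly $2k + 3k\log(ep/k) + O(\sqrt{k})$. Checking this against $4k\log(4p/k) = 4k\log(ep/k) + 4k\log(4/e)$ shows the slack is only about $0.5k$ when $p$ is close to $k$, so the $O(\sqrt{k})$ remainder from the tail integration is not absorbed for small $k$; you correctly flag this as the main obstacle but do not resolve it, so the lemma as stated (with the factor $4$) is not actually established by your argument. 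If you want the clean constant, switch to the MGF computation: it is both shorter and sharper here. The trivial bound $h_v(k)\le d$ and the enlargement of each span to a $k$-dimensional subspace (equivalently, bounding the MGF exponent $\dim(\textnormal{span}(\{v_i\}_{i\in S}))$ by $k$) match the paper exactly.
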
 

\subsubsection{Improved Type I error bound under low effective dimension}

For this setting, our main result given in Theorem~\ref{theorem_general} can be strengthened to the following
tighter bound.

\begin{theorem}\label{theorem_sparse}
	Under the notation and assumptions of Theorem~\ref{theorem_general}, 
	suppose it also holds that
	\[\P\{\|\hat\theta(X,W) - \theta_0\|_{v,0}\leq k(\theta_0)\} \geq 1 - \tilde\delta(\theta_0),\]
	for a fixed set of vectors $v_1,\dots,v_p\in\R^d$.
	Then  the copies $\tilde{X}^{(1)}, ..., \tilde{X}^{(M)}$
	generated by the constrained aCSS procedure are approximately exchangeable with $X$, satisfying	
	\begin{equation*}
		d_{\textnormal{exch}}(X, \tilde{X}^{(1)}, ..., \tilde{X}^{(M)})\le 3\sigma r(\theta_{0})  \cdot \sqrt{\frac{h_v(k(\theta_0))}{d}}+ \epsilon(\theta_{0}) + \delta(\theta_{0})+\tilde\delta(\theta_{0}).
	\end{equation*}
	In particular, this implies that for any predefined test statistic $T: \X \rightarrow \R$ and rejection threshold $\alpha\in[0,1]$, the p-value defined in~\eqref{eq:pval} satisfies
	\begin{equation*}
		\P\left(\textnormal{pval}_{T}(X, \tilde{X}^{(1)}, ..., \tilde{X}^{(M)})\le \alpha \right)\le \alpha + 3\sigma r(\theta_{0}) \cdot \sqrt{\frac{h_v(k(\theta_0))}{d}}+ \epsilon(\theta_{0}) + \delta(\theta_{0})+\tilde\delta(\theta_{0}).
	\end{equation*}
\end{theorem}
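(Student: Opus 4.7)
The plan is to retrace the proof of Theorem~\ref{theorem_general}, pinpoint the step that produces the $3\sigma r(\theta_0)$ term, and sharpen it using the additional sparsity of $\hat\theta - \theta_0$ in the $\{v_i\}$ basis.

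First, I would merge the new failure event into the union bound that already handles the SSOSP failure in Theorem~\ref{theorem_general}. Define the good event
\[\mathcal{E} = \bigl\{\hat\theta(X,W)\text{ is a SSOSP of \eqref{eqn:def_thetahat}},\ \|\hat\theta - \theta_0\|\le r(\theta_0),\ \|\hat\theta - \theta_0\|_{v,0}\le k(\theta_0)\bigr\}.\]
By Assumption~\ref{asm:SSOSP_general} and the new hypothesis, a union bound gives $\P(\mathcal{E}^c)\le\delta(\theta_0)+\tilde\delta(\theta_0)$. On $\mathcal{E}^c$ the method sets $\tilde X^{(1)}=\cdots=\tilde X^{(M)}=X$, so this event contributes at most $\delta(\theta_0)+\tilde\delta(\theta_0)$ to $d_{\textnormal{exch}}$.

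On $\mathcal{E}$, I would follow the argument of Theorem~\ref{theorem_general} to bound the total variation distance between the joint law of $(X,\tilde X^{(1)},\dots,\tilde X^{(M)})$ and its oracle, exchangeable counterpart in which $\theta_0$ is used in place of $\hat\theta$ inside the conditional density~\eqref{eqn:conditional_density_thetahat}. In that proof, the $3\sigma r(\theta_0)$ contribution ultimately traces back to a Cauchy--Schwarz bound on a Gaussian inner product of the form $\sigma\,|\langle W,\hat\theta - \theta_0\rangle|$, controlled by $\sigma\|W\|\cdot\|\hat\theta - \theta_0\|\le\sigma r(\theta_0)\cdot\|W\|$ together with $\E\|W\|\le 1$ (since $W\sim\mathcal{N}(0,\frac{1}{d}\I_d)$); the factor of three arises from splitting several such terms via the triangle inequality.

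Here is where sparsity is exploited. On $\mathcal{E}$, the shift direction $\Delta := \hat\theta-\theta_0$ lies in $\operatorname{span}(\{v_i\}_{i\in S})$ for some (random, data-dependent) $S\subseteq[p]$ with $|S|\le k(\theta_0)$, so $\langle W,\Delta\rangle = \langle \mathcal{P}_{v_S}W,\Delta\rangle$ and Cauchy--Schwarz sharpens to $|\langle W,\Delta\rangle|\le \|\mathcal{P}_{v_S}W\|\cdot r(\theta_0)$. Because $S$ is random and correlated with $W$, I would pass to the uniform bound
\[\|\mathcal{P}_{v_S}W\|\le \max_{S'\subseteq[p],\,|S'|\le k(\theta_0)}\|\mathcal{P}_{v_{S'}}W\|\]
\emph{before} taking expectations; Jensen then gives $\E\|\mathcal{P}_{v_S}W\|\le\sqrt{h_v(k(\theta_0))/d}$, using $\sqrt{d}\,W\sim\mathcal{N}(0,\I_d)$ and the definition of $h_v$. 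Inserting this into each of the three Gaussian terms in the proof of Theorem~\ref{theorem_general} turns $3\sigma r(\theta_0)$ into $3\sigma r(\theta_0)\sqrt{h_v(k(\theta_0))/d}$, while the Hessian-based contribution $\epsilon(\theta_0)$ stays intact since it does not involve the Gaussian mean-shift analysis.

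The main obstacle is precisely the data-dependence of $S$ on $(X,W)$: using a fixed $S$ in $\E\|\mathcal{P}_{v_S}W\|$ would ignore this coupling, so it is essential to worst-case over $S$ inside the expectation, which is exactly what $h_v(k(\theta_0))$ encodes (cf.\ Lemma~\ref{lemma:proj_chisq}). Carrying this worst-case bookkeeping through each of the three triangle-inequality pieces, while preserving the rest of the analysis of Theorem~\ref{theorem_general} unchanged, is the only delicate step; the remainder of the proof is a direct cosmetic modification of that argument.
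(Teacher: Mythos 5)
Your overall strategy is the paper's: Taylor-expand the log-likelihood ratio around $\theta_0$, observe that the first-order term reduces (at the observed data) to $\sigma$ times an inner product of $W$ with the error direction $\Delta=\hat\theta-\theta_0$, use the sparsity of $\Delta$ in the $\{v_i\}$ basis to replace Cauchy--Schwarz against $\|W\|$ with Cauchy--Schwarz against $\|\mathcal{P}_{v_S}W\|$, and worst-case over the data-dependent support $S$ inside the expectation so that $h_v(k(\theta_0))$ appears. That is exactly how the paper obtains the improved factor $\sqrt{h_v(k(\theta_0))/d}$, and your emphasis on taking the maximum over $S$ \emph{before} integrating (because $S$ is coupled with $W$) is the right key point.

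Two steps in your write-up would not survive as stated. First, you claim that on the complement of your good event $\mathcal{E}$ the algorithm sets $\tilde X^{(1)}=\cdots=\tilde X^{(M)}=X$. It does not: the algorithm can only detect failure of the SSOSP condition; the events $\|\hat\theta-\theta_0\|>r(\theta_0)$ and $\|\hat\theta-\theta_0\|_{v,0}>k(\theta_0)$ depend on the unknown $\theta_0$ and are invisible to the procedure. The paper handles these by splitting on $\hat\theta\in\Theta_0$ versus $\hat\theta\notin\Theta_0$ inside the total variation bound and using $d_{\textnormal{TV}}\le 1$ on the latter event, which contributes $\delta(\theta_0)+\tilde\delta(\theta_0)$ through a slightly more careful conditional accounting; the final numbers agree, but your mechanism is wrong and the fix is not purely cosmetic. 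Second, and more substantively, a first-moment bound $\E\bigl[\max_{|S|\le k}\|\mathcal{P}_{v_S}W\|\bigr]\le\sqrt{h_v(k(\theta_0))/d}$ via Jensen is not enough. The total variation bound in the proof of Theorem~\ref{theorem_general} involves the term in two ways: once as an expectation $\E[\Delta']$ (where your Jensen argument works) and once inside an exponential moment $\log\E[e^{\Delta}]$, paired with condition~\eqref{eqn:Hessian_asm_2} via Cauchy--Schwarz. For the latter you need to control $\log\E\bigl[\exp\{c\max_{|S|\le k}\|\mathcal{P}_{v_S}W\|\}\bigr]$, which the paper does by noting that $w\mapsto\max_{|S|\le k}\|\mathcal{P}_{v_S}w\|$ is $1$-Lipschitz and invoking Gaussian concentration ($\log\E e^{\lambda f(Z)}\le \lambda^2/2+\lambda\E f(Z)$), followed by a separate argument to absorb the quadratic term $\lambda^2/2$ into the bound (assuming $2\sigma r(\theta_0)\le d\sqrt{h_v(k(\theta_0))/d}$, else the claim is vacuous since total variation is at most $1$). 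This is also where the constant $3$ actually comes from --- it is not a triangle-inequality split of three Gaussian terms. Your proposal omits the exponential-moment step entirely, and without it the argument does not close.
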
 

As discussed above, a small value of $k(\theta_0)$ indicates that the error vector, $\hat\theta-\theta_0$, typically lies
in a region of $\R^d$ that is characterized by a lower effective dimension. As another interpretation, we can think of $k(\theta_0)$ as
capturing the effective degrees of freedom in our estimation problem. 

The result of Theorem~\ref{theorem_sparse} is strictly stronger than that of Theorem~\ref{theorem_general}.
In particular, Theorem~\ref{theorem_general} can be derived as a special case, by taking $v_1=\mathbf{e}_1,\dots,v_d=\mathbf{e}_d$
and $k(\theta_0)\equiv d$---then the additional condition of Theorem~\ref{theorem_sparse} holds trivially with $\tilde{\delta}(\theta_0)=0$,
and so the
two theorems give the same bound (since $h_v(d) = d$). On the other hand, if the constrained estimation problem
exhibits sparsity relative to the chosen set of vectors $\{v_i\}$, we may be able to choose a value $k(\theta_0)\ll d$ that allows for a low value of $\tilde{\delta}(\theta_0)$;
in this setting, $h_v(k(\theta_0))\ll d$ by Lemma~\ref{lemma:proj_chisq}, and consequently,
we see that we can afford to choose a much larger value of the perturbation noise parameter $\sigma$
while still retaining approximate Type I error control. Of course, to have $k(\theta_0)\ll d$ (or equivalently, $h_v(k(\theta_0))\ll d$), we need to choose a suitable set $\{v_i\}$  that corresponds well to the structure induced by the constraints $A \theta \le b$, as in the examples given above.

\begin{remark}
	As we will see in the proof,	the result of Theorem~\ref{theorem_sparse} holds even if we replace Assumption~\ref{asm:L_H} with a weaker condition: defining \[\Theta_0 = \{\theta\in\Theta : \|\theta-\theta_0\|\leq r(\theta_0), \|\theta-\theta_0\|_{v,0}\leq k(\theta_0)\},\]
	and writing $\theta_t = (1-t)\theta_0 + t\theta$ for any $\theta$, it suffices to assume
	\[
	\E_{\theta_0} \left[\sup_{\theta\in \Theta_0, t\in[0,1]} \left((\theta - \theta_0)^\top\left(H(\theta_t) - H(\theta_t; X) \right)(\theta - \theta_0)\right)_{+} \right]\le \epsilon(\theta_{0}),
	\]
	and
	\[
	\log\E_{\theta_{0}}\left[\exp\left\{\sup_{\theta\in \Theta_0, t\in[0,1]} \left((\theta - \theta_0)^\top\left(H(\theta_t;X) - H(\theta_t) \right)(\theta - \theta_0)\right)_{+}\right\}\right]\le\epsilon(\theta_{0}).
	\]
	in place of conditions~\eqref{eqn:Hessian_asm_1} and~\eqref{eqn:Hessian_asm_2}, respectively.
	That is, we only need to establish concentration of the error 
	in the Hessian along directions $\theta-\theta_0$ that have sparse structure with respect to the chosen vectors $\{v_i\}$, which may
	be a much more feasible condition in high-dimensional settings.	
\end{remark}

\subsection{Special case: Gaussian linear model}\label{sec:theory_gaussian}
In this section, we turn to another setting where the scaling of our result has a much more favorable dependence on dimension $d$,
for the special case of a Gaussian linear model. Unlike the result in Theorem~\ref{theorem_sparse} above, here we do not need to assume an underlying sparse structure.

For this special case, we assume that the parametric family $\{P_\theta\}$ is given by
\begin{equation}\label{eq:gaussian_model}
	P_\theta: \ X \sim \mathcal{N}(Z\theta, \nu^2\I_n)\end{equation}
where both the covariate matrix $Z\in\R^{n\times d}$ and the variance $\nu^2>0$ are fixed and known.
This model is parametrized by the coefficient vector, $\theta\in\Theta = \R^d$. 
In this setting, as described earlier in Section~\ref{sec:CSS}, co-sufficient sampling (CSS) can be directly applied to sample copies $\tilde{X}^{(m)}$ that are
\emph{exactly} exchangeable with $X$. Concretely, we can consider the sufficient statistic $\mathcal{P}_Z X$,
where $\mathcal{P}_Z\in\R^{n\times n}$ denotes the  projection matrix to the column span of $Z$, and sample the copies as 
\[\tilde{X}^{(m)}\mid \mathcal{P}_ZX \stackrel{\textnormal{iid}}{\sim} \mathcal{N}(\mathcal{P}_ZX, \nu^2\mathcal{P}_Z^\perp).\]
Then, under the null, $(X,\tilde{X}^{(1)},\dots,\tilde{X}^{(M)})$ is exchangeable, and so the p-value defined in~\eqref{eq:pval} is \emph{exactly} valid
for any test statistic $T$.

In a low-dimensional regime where $n > d$, the copies $\tilde{X}^{(m)}$ are distinct from $X$, and the resulting
test can have high power against the alternative for a suitably chosen statistic $T$. However, in the high-dimensional setting
with $d\geq n$, we will have $\mathcal{P}_Z = \I_n$, leading to copies $\tilde{X}^{(m)}$ that are identical to $X$
and, therefore, a powerless test. In the high-dimensional setting, therefore, we turn to aCSS as a practical alternative that can 
offer nontrivial power, while sacrificing some Type I error control.

The challenge for applying aCSS is that, as we are in a high-dimensional setting, the estimator $\hat\theta$ may have low accuracy---but 
we need a tight bound $r(\theta_0)$ on its error in order to achieve approximate Type I error control. In many settings, 
the accuracy of the estimator $\hat\theta$ will be greatly improved by adding constraints that reflect structure in the problem (e.g., 
an $\ell_1$ constraint if we believe $\theta_0$ is sparse), and so we would expect that constrained aCSS can offer a strong advantage
in this setting. 

However, the power of the method will rely on being able to choose a sufficiently large value of $\sigma$ in the implementation.
We 
are therefore motivated to develop a theoretical guarantee that is stronger than the general result of Theorem~\ref{theorem_general},
so that we can choose a higher value of $\sigma$ and, consequently, achieve higher power.
We will now see that the Gaussian case offers both computational and theoretical advantages.

First, we will assume that $R$ is chosen to ensure that the loss has strongly positive definite Hessian,
i.e., \begin{equation}\label{eqn:assume_R}
	\frac{1}{\nu^2}Z^\top Z + \nabla^2_\theta R(\theta)\succ c\I_d\textnormal{ for all $\theta\in\R^d$, for some $c>0$.}\end{equation}
For example, if $n\geq d$ and $Z$ has full rank $d$, then this holds with $R(\theta)\equiv 0$. More generally, for any $d,n$ and any $Z$, a ridge penalty $R(\theta) = \frac{\lambda_{\textnormal{ridge}}}{2}\|\theta\|^2$ (for some positive penalty parameter
$\lambda_{\textnormal{ridge}}>0$) will ensure that this condition holds. 

Then $\hat\theta$ is defined by the optimization problem
\[\hat\theta=\hat\theta(X,W) = \argmin_{\theta\in\R^d}\left\{ \frac{1}{2\nu^2}\|X-Z\theta\|^2 +  R(\theta)+ \sigma W^\top  \theta \  : \ A\theta\leq b \right\},\] and we compute the gradient as
\[\hat g = \frac{1}{\nu^2}Z^\top (Z\hat\theta - X)+ \nabla_\theta R(\hat\theta)+ \sigma  W.\]
Note that, by our assumptions on $R$, this optimization problem is guaranteed to have a unique minimizer, and moreover, this minimizer is guaranteed to be a SSOSP. In other words, we can assume that the event $X\in\X_{\hat\theta,\hat{g}}$ holds almost surely,  meaning that the indicator function in the sampling density is always equal to 1.
Then, applying Lemma~\ref{lemma:conditional_density}, we can compute the distribution $p_{\hat\theta}(\cdot\mid \hat\theta, \hat g)$ as
\begin{equation}\label{eqn:conditional_distribution_gaussian}
	\mathcal{N}\left(Z\hat\theta + \frac{d}{\sigma^2}\left(\I_{n}+\frac{d}{\sigma^2\nu^2}ZZ^\top \right)^{-1}Z(\nabla_\theta R(\hat\theta) - \hat{g}), \nu^2\left(\I_{n} + \frac{d}{\sigma^2\nu^2}ZZ^\top \right)^{-1}\right).
\end{equation} 
This means that it is possible to draw the copies $\tilde{X}^{(1)},\dots,\tilde{X}^{(M)}$ directly as i.i.d.\ draws from $p_{\hat\theta}(\cdot\mid \hat\theta, \hat g)$.

Next we turn to our theoretical guarantee, which shows an $O(\sqrt{d})$ improvement in the excess Type I error for the Gaussian case.	
\begin{theorem} 
	\label{theorem_gaussian}
	Consider the Gaussian linear model~\eqref{eq:gaussian_model}, and assume
	that  $R(\theta)$ is chosen so that condition~\eqref{eqn:assume_R} is satisfied. Assume also that $\P\{\|\hat\theta(X,W)-\theta_0\|\leq r(\theta_0)\}\geq 1-\delta(\theta_0)$.
	Then the copies $\tilde{X}^{(1)}, ..., \tilde{X}^{(M)}$
	generated by the constrained aCSS procedure are approximately exchangeable with $X$, satisfying	 
	\begin{equation*}
		d_{\textnormal{exch}}(X, \tilde{X}^{(1)}, ..., \tilde{X}^{(M)})\le \frac{\sigma}{2\sqrt{d}} r(\theta_0)+ \delta(\theta_{0}).
	\end{equation*}
	In particular, this implies that for any predefined test statistic $T: \X \rightarrow \R$ and rejection threshold $\alpha\in[0,1]$, the p-value defined in~\eqref{eq:pval} satisfies
	\begin{equation*}
		\P\left(\textnormal{pval}_{T}(X, \tilde{X}^{(1)}, ..., \tilde{X}^{(M)})\le \alpha \right)\le \alpha + \frac{\sigma}{2\sqrt{d}} r(\theta_0)+ \delta(\theta_{0}).
	\end{equation*}
\end{theorem}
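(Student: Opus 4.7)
\textbf{Proof proposal for Theorem~\ref{theorem_gaussian}.}

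My plan is to exploit the Gaussian structure to compute $p_\theta(\cdot \mid \hat\theta,\hat g)$ in closed form and then bound the TV distance between the plug-in distribution $p_{\hat\theta}(\cdot \mid \hat\theta,\hat g)$ and the true conditional $p_{\theta_0}(\cdot \mid \hat\theta,\hat g)$ using Pinsker's inequality; the $\tfrac{1}{\sqrt d}$ savings relative to Theorem~\ref{theorem_general} will come from a push-through identity that bounds the relevant Mahalanobis inner product.

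First, I would specialize Lemma~\ref{lemma:conditional_density}. In the Gaussian linear model, $-\log f(x;\theta) = \tfrac{1}{2\nu^2}\|x-Z\theta\|^2 + \text{const}$, so $\nabla_\theta \L(\hat\theta;x) = \tfrac{1}{\nu^2}Z^\top(Z\hat\theta - x) + \nabla R(\hat\theta)$ is affine in $x$, while $\nabla^2_\theta \L(\hat\theta;x) = \tfrac{1}{\nu^2}Z^\top Z + \nabla^2 R(\hat\theta)$ does not depend on $x$ at all. The condition~\eqref{eqn:assume_R} guarantees that the minimizer is automatically a SSOSP and that $\X_{\hat\theta,\hat g} = \X$ almost surely, so the indicator and determinant factors in the conditional density are (constant) normalizing terms. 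Collecting the two exponential pieces and completing the square in $x$ (as is done to derive~\eqref{eqn:conditional_distribution_gaussian}) shows that, for any plug-in $\theta\in\R^d$, $p_\theta(\cdot\mid\hat\theta,\hat g)$ is Gaussian with covariance $\Sigma = \nu^2 A^{-1}$, where $A := \I_n + \tfrac{d}{\sigma^2\nu^2}ZZ^\top$, and mean $\mu_\theta$ that depends affinely on $\theta$. In particular, the covariance does not depend on $\theta$, and
\[
\mu_{\theta_0} - \mu_{\hat\theta} = A^{-1} Z (\theta_0 - \hat\theta).
\]

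Next I would use the closed-form KL between two Gaussians with identical covariances:
\[
\mathrm{KL}\bigl(p_{\theta_0}(\cdot\mid\hat\theta,\hat g)\,\|\,p_{\hat\theta}(\cdot\mid\hat\theta,\hat g)\bigr) = \tfrac{1}{2}(\theta_0-\hat\theta)^\top Z^\top A^{-1} Z (\theta_0-\hat\theta)\big/\nu^2.
\]
Here is the key computation: via the push-through identity $(I+cZZ^\top)^{-1}Z = Z(I+cZ^\top Z)^{-1}$ with $c = d/(\sigma^2\nu^2)$, we obtain
\[
Z^\top A^{-1} Z \;=\; Z^\top Z\bigl(\I_d + \tfrac{d}{\sigma^2\nu^2} Z^\top Z\bigr)^{-1}.
\]
Every eigenvalue of this matrix has the form $m/(1+cm)\le 1/c = \sigma^2\nu^2/d$, so $Z^\top A^{-1}Z \preceq \tfrac{\sigma^2\nu^2}{d}\I_d$. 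This yields $\mathrm{KL}\le \tfrac{\sigma^2}{2d}\|\hat\theta-\theta_0\|^2$, and Pinsker's inequality then gives
\[
d_{\mathrm{TV}}\bigl(p_{\theta_0}(\cdot\mid\hat\theta,\hat g), p_{\hat\theta}(\cdot\mid\hat\theta,\hat g)\bigr) \;\le\; \tfrac{\sigma}{2\sqrt d}\,\|\hat\theta-\theta_0\|.
\]

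Finally, I would convert this single-copy TV bound to a bound on $d_{\mathrm{exch}}$ by the standard coupling used throughout the aCSS literature: construct an ideal joint $Q$ that uses the true $(\hat\theta,\hat g)$ marginal from $P_{\theta_0}$, but resamples $X$ and all copies i.i.d.\ from $p_{\hat\theta}(\cdot\mid \hat\theta,\hat g)$. Under $Q$ the full tuple is exchangeable, and $P$ and $Q$ differ only in the conditional law of $X$ given $(\hat\theta,\hat g)$, so $d_{\mathrm{TV}}(P,Q)$ equals the expected single-copy TV (no factor of $M$ appears because the copies are sampled by the same rule under $P$ and $Q$). Splitting on the event $\{\|\hat\theta-\theta_0\|\le r(\theta_0)\}$ and using $d_{\mathrm{TV}}\le 1$ on its complement yields
\[
d_{\mathrm{exch}}(X,\tilde X^{(1)},\dots,\tilde X^{(M)}) \le \tfrac{\sigma}{2\sqrt d}\, r(\theta_0) + \delta(\theta_0).
\]
The p-value statement follows from the general reduction that a $d_{\mathrm{exch}}$-bound inflates the Type I level by at most $d_{\mathrm{exch}}$.

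The main obstacle is the push-through step: it is what converts the naive operator-norm bound $\|A^{-1}\|\le 1$ (which would only recover the general Theorem~\ref{theorem_general}) into the tight bound $Z^\top A^{-1}Z \preceq \tfrac{\sigma^2\nu^2}{d}\I_d$ that delivers the $1/\sqrt d$ improvement. Everything else is bookkeeping: verifying the closed-form Gaussian conditional from Lemma~\ref{lemma:conditional_density}, and running the standard aCSS coupling argument.
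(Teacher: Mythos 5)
Your proposal is correct and follows essentially the same route as the paper's proof: reduce to the expected single-copy TV distance via the standard aCSS coupling, observe that both conditionals are Gaussians with common covariance $\nu^2(\I_n+\tfrac{d}{\sigma^2\nu^2}ZZ^\top)^{-1}$ and mean gap $(\I_n+\tfrac{d}{\sigma^2\nu^2}ZZ^\top)^{-1}Z(\theta_0-\hat\theta)$, then apply Pinsker. Your push-through bound $Z^\top A^{-1}Z\preceq\tfrac{\sigma^2\nu^2}{d}\I_d$ is just a restatement of the paper's operator-norm step $\bigl\|\bigl(\tfrac{\sigma^2\nu^2}{d}\I_n+ZZ^\top\bigr)^{-1/2}Z\bigr\|\le 1$, so there is no substantive difference.
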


The Type I error inflation described above offers an improvement by a factor of $O(\sqrt{d})$ in terms of dependence on $\sigma$, when compared to Theorem~\ref{theorem_general}.  In other words,
we see that we are free to choose a substantially larger $\sigma$ in this Gaussian setting to increase power without losing the guarantee of approximate Type I error control. 

 We also note from the proof that Theorem \ref{theorem_gaussian} does not depend on the specific form of $\mathcal{L}(\theta, X)$, due to the explicit total variation bound between two Gaussian distributions. That is, the theorem holds for any $\mathcal{L}(\theta, X)$ beyond just the negative log-likelihood. However, in the general case—whether or not a sparse structure is present—the Type I error control results in Theorems \ref{theorem_general} and \ref{theorem_sparse} do rely on $\mathcal{L}(\theta, X)$ being the negative log-likelihood.  In practice, though, when applying aCSS, one can still choose any suitable $\mathcal{L}(\theta, X)$, since Lemma \ref{lemma:conditional_density} for sampling from the conditional distribution remains applicable, as previously discussed.

 \begin{remark}[Practical considerations]\label{rmk:unknown_var}
		If the error variance $\nu$ is unknown, it can be treated as part of the unknown parameters $\theta$, and the general constrained (or regularized) aCSS procedure can still be applied.  That is, we optimize the objective function and compute the gradient with respect to both the regression coefficients and the error variance. Lemma \ref{lemma:conditional_density} (on conditional density) and Theorems \ref{theorem_general} and \ref{theorem_sparse} (on Type I error control) still hold when the parameter includes both the coefficients and the variance.
		
		For simpler closed-form sampling in practice,  we may consider perturbations only in the coefficients---similar to the known-variance case---and make a slight modification to the estimation (optimization) step. To align with the known-variance case, we continue to use $\theta$ to denote the coefficients and $g$ to denote the gradient with respect to the coefficients.
		We first solve for $\hat\theta$ by optimizing the objective without including the error variance, which is indeed more common in coefficient estimation,
		\[\hat\theta=\hat\theta(X,W) = \argmin_{\theta\in\R^d}\left\{\L(\theta; x, w) = \frac{1}{2}\|X-Z\theta\|^2 +  R(\theta)+ \sigma W^\top  \theta \  : \ A\theta\leq b \right\},\] 
		and compute the gradient as
		\[\hat g =  Z^\top (Z\hat\theta - X)+ \nabla_\theta R(\hat\theta)+ \sigma  W.\] 
		We further  estimate the variance via:  \[\hat\nu = \sqrt{\frac{1}{n}\|X - Z\hat\theta\|^2}.\]  
		Since  $\hat\nu$  is determined by $X$ and $\hat\theta$, the conditional density satisfies:
		\[p_{\theta, \nu}(\cdot\mid \hat\theta, \hat{g},\hat{\nu})\propto p_{\theta, \nu}(\cdot\mid \hat\theta, \hat{g}) \id_{\{\|X - Z\hat\theta\|^2 = n\hat{\nu}^2\}}.\]
		Combined with Lemma~\ref{lemma:conditional_density}, the  conditional density $p_{\hat\theta, \hat\nu}(\cdot\mid \hat\theta, \hat g, \hat\nu)$ is  proportional to 
		\begin{equation}\label{eqn:conditional_distribution_gaussian_unknownvar}
			\exp\{-\frac{1}{2}(x - \mu)^{\top}\Sigma^{-1}(x-\mu)\}\cdot \id_{\{\|x - Z\hat\theta\|^2 = n\hat{\nu}^2\}},
		\end{equation}
		where
		\[\mu = Z\hat\theta + \frac{d}{\sigma^2}\left(\frac{1}{\hat\nu^2}\I_{n}+\frac{d}{\sigma^2}ZZ^\top \right)^{-1}Z(\nabla_\theta R(\hat\theta) - \hat{g}), \ \Sigma =\left(\frac{1}{\hat\nu^2}\I_{n} + \frac{d}{\sigma^2}ZZ^\top \right)^{-1}. \]
		That is a Gaussian distribution constrained to the sphere $\{x\in\R^{n}: \|x - Z\hat\theta\|^2 = n\hat{\nu}^2\}$,   also known as the  Fisher--Bingham distribution.   We can efficiently sample from this distribution and use the generated samples to compute the p-value for our chosen test statistic  $T$. 
	\end{remark}

\section{Generalization of linear constraint: $\ell_{1}$ penalty}\label{sec:penalty}
Thus far, we have considered settings where the estimator $\hat\theta$ 
is obtained via a constrained optimization problem.  Section \ref{sec:theory} shows that the constraints introduced  can improve the estimation of unknown parameters, thereby leading to a tighter bound on Type I error control. One important example is placing a bound on $\|\theta\|_1$ to  encourage sparsity, a technique that is popular in high-dimensional settings. However, in many statistical applications, it is more common---and more effective---to use a $\ell_{1}$ penalty rather than a constraint.
Therefore, in this section, we will consider a $\ell_{1}$-penalized, rather than constrained, form of aCSS. 	 

We consider replacing the constrained optimization problem
\[
\hat\theta_C = \argmin_{\theta\in\Theta} \{\L(\theta; X, W)  :\|\theta\|_1\leq C\}
\]
with its penalized version, 
\begin{equation}\label{eqn:def_thetahat_l1pen} 
	\hat\theta_\lambda = \argmin_{\theta\in\Theta} \{\L(\theta; X, W)  + \lambda\|\theta\|_{1}\},
\end{equation} 
(i.e., the lasso \citep{tibshirani1996regression}, but with an added perturbation term due to $W$).
The penalized and constrained forms of the optimization problem have a natural correspondence---for $\ell_1$ regularization, each constrained solution $\hat\theta_C$ corresponds to some penalized solution $\hat\theta_\lambda$
for some data-dependent $\lambda$, and vice versa.
However, in a statistical analysis, these two versions of the problem often behave very differently: for $\ell_1$ regularization, the fact that the correspondence
between $C$ and $\lambda$ is data-dependent means that theoretical results obtained for $\hat\theta_\lambda$ at a fixed $\lambda$ do not transfer
over to a theoretical guarantee for $\hat\theta_C$ for a fixed $C$, and vice versa. Therefore, proper modification is needed for the $\ell_{1}$-penalized aCSS.

Before state the modified method, we first define SSOSP for the penalized problem. For $\theta\in\R^d$, we will write $S(\theta) = \{j\in[d]: \theta_j\neq 0\}$ to denote the support of $\theta$. 
\begin{definition}[SSOSP for the $\ell_{1}$-penalized problem]\label{def:ssosp-pen} A parameter $\theta\in\Theta$ is a strict second-order stationary point (SSOSP) of the optimization
	problem \eqref{eqn:def_thetahat_l1pen}	if it satisfies all of the following:		
	\begin{enumerate}
		\item  First-order necessary conditions, i.e., Karush--Kuhn--Tucker (KKT) conditions: 
		\[\nabla\L(\theta;X,W)+ \lambda s = 0, \textnormal{ \  where } 
		\begin{cases}
			s_j=			\textnormal{sign}(\theta_j),& j\in S(\theta),\\
			s_j\in		 [-1,1],&  j\not\in S(\theta).
		\end{cases}
		\]
		\item   Second-order sufficient condition: 
		\[ \nabla^2_{\theta}\L(\theta; X,W)_{S(\theta)}\succ 0,\]
		where for a matrix $M\in\R^{d\times d}$ and a nonempty subset $J\subseteq[d]$, $M_{J}\in\R^{|J|\times|J|}$ denotes the submatrix of $M$ restricted to row and column subsets $J$.
		That is, the Hessian $ \nabla^2_{\theta}\L(\theta; X,W) $ is strictly positive definite when restricted to the support of $\theta$.
	\end{enumerate}  
\end{definition}

\subsection{The conditional density in the penalized case}

Next we compute the conditional density  of $X$ given $(\hat\theta, \hat{g})$. We will see that this calculation looks quite similar to the constrained case (which was addressed in Lemma~\ref{lemma:conditional_density}).
\begin{lemma}[Conditional density for the $\ell_1$-penalized case]
	\label{lemma:conditional_density_l1pen}
	Suppose Assumption \ref{asm:reg} holds. Fix any $\theta_{0}\in\Theta$ and let $(X, W, \hat{\theta}, \hat{g})$ be drawn from the joint model
	\begin{equation*}
		\left\{ \begin{array}{l}
			X\sim P_{\theta_{0}},\\
			W\sim \mathcal{N}(0, \frac{1}{d}\I_{d}),\\
			\hat\theta = \hat\theta(X, W) \\
			\hat{g} = \hat{g}(X, W) = \nabla_{\theta}\L(\hat\theta; X, W).
		\end{array}\right.
	\end{equation*}
	Let $S\subseteq[d]$.	Assume that the event that $\hat{\theta}(X, W)$ is a SSOSP of \eqref{eqn:def_thetahat_l1pen} with support $S(\hat\theta(X, W))  = S$ has positive probability.  Then, conditional on this event, the conditional distribution of $X| \hat{\theta}, \hat g$ has density  
	\begin{equation} 
		\label{eqn:conditional_density-pen}
		p_{\theta_{0}}(\cdot| \hat{\theta}, \hat g )\propto
		f(x; \theta_{0}) \exp\left\{-\frac{\|\hat{g} - \nabla_{\theta}\L(\hat\theta; x)\|^{2} }{2\sigma^{2}/d}\right\}\det\left(\nabla_{\theta}^2\L(\hat\theta; x)_S\right) 
		\id_{x\in\tilde{\X}_{\hat\theta,\hat{g}}}
	\end{equation} 	
	with respect to the base measure $\nu_{\X}\times$Leb, and  
	\[\tilde{\X}_{\theta,g} = \left\{x\in\X: \textnormal{ for some $w\in\R^d$, $\theta=\hat\theta(x,w)$ is a SSOSP of~\eqref{eqn:def_thetahat_l1pen}, and $g=\nabla\L(\theta;x,w)$}\right\}.\]
\end{lemma}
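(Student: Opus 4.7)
The plan is to mirror the proof of Lemma~\ref{lemma:conditional_density}, with the support $S$ of $\hat\theta$ playing the role of the active constraint set $\A$. First I would fix a support $S\subseteq[d]$ together with a sign pattern $s_S\in\{\pm 1\}^{|S|}$, and restrict attention to the event that $\hat\theta(X,W)$ is a SSOSP of~\eqref{eqn:def_thetahat_l1pen} with $S(\hat\theta)=S$ and $\textnormal{sign}(\hat\theta_S)=s_S$. On this event, the KKT condition $\hat g+\lambda s=0$ forces $\hat g_S=-\lambda s_S$ to be deterministic (the sign pattern is locally constant on $S$), while $\hat g_{S^c}\in(-\lambda,\lambda)^{|S^c|}$ is free and $\hat\theta_{S^c}\equiv 0$. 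Hence the free continuous coordinates encoding $(\hat\theta,\hat g)$ are $(\hat\theta_S,\hat g_{S^c})$, totalling $d$ degrees of freedom---exactly matching the dimension of $W$.

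Next I would perform the change of variables $(X,W)\mapsto(X,\hat\theta_S,\hat g_{S^c})$ using the stationarity identity $\sigma W=\hat g-\nabla_\theta\L(\hat\theta;X)$ (where, by slight abuse, $\nabla_\theta^2\L$ does not depend on $W$ since the perturbation is linear in $\theta$). Differentiating with $\hat g_S=-\lambda s_S$ and $\hat\theta_{S^c}=0$ held fixed, the Jacobian $\partial W/\partial(\hat\theta_S,\hat g_{S^c})$ is block lower triangular,
\[
\frac{1}{\sigma}\begin{pmatrix}
-\nabla_\theta^2\L(\hat\theta;X)_{S,S} & 0\\
-\nabla_\theta^2\L(\hat\theta;X)_{S^c,S} & \I_{|S^c|}
\end{pmatrix},
\]
and therefore has absolute determinant $\sigma^{-d}\det\bigl(\nabla_\theta^2\L(\hat\theta;X)_S\bigr)$; the $S$-block is positive definite by the SSOSP condition in Definition~\ref{def:ssosp-pen}, so its determinant is positive and no absolute value is needed.

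Combining this Jacobian factor with the joint density $f(X;\theta_0)\cdot\phi_W(W)$, where $\phi_W(w)\propto\exp(-d\|w\|^2/2)$, and substituting $\|W\|^2=\|\hat g-\nabla_\theta\L(\hat\theta;X)\|^2/\sigma^2$, yields a joint density on $(X,\hat\theta_S,\hat g_{S^c})$ proportional to
\[
f(X;\theta_0)\cdot\exp\!\left\{-\frac{\|\hat g-\nabla_\theta\L(\hat\theta;X)\|^2}{2\sigma^2/d}\right\}\cdot\det\!\bigl(\nabla_\theta^2\L(\hat\theta;X)_S\bigr),
\]
restricted to $X\in\tilde\X_{\hat\theta,\hat g}$. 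Normalizing in $X$ (equivalently, dividing by the marginal density of $(\hat\theta_S,\hat g_{S^c})$) then produces the conditional density of $X\mid(\hat\theta,\hat g)$ on the event $S(\hat\theta)=S$, which is exactly~\eqref{eqn:conditional_density-pen}.

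The main obstacle is justifying the change of variables rigorously on the SSOSP event: one must show that, locally, $W\mapsto\hat\theta_S$ is a smooth diffeomorphism (via the implicit function theorem applied to the reduced optimization over $\R^S$, using strict positive definiteness of $\nabla_\theta^2\L(\hat\theta;X)_{S,S}$ and the strict condition $|s_{S^c}|<1$ implicit in the SSOSP), and one must verify that the events indexed by distinct $(S,s_S)$ pairs partition the SSOSP event so that no overcounting occurs when aggregating to form the conditional density. These verifications parallel---but are slightly more intricate than---the corresponding steps for Lemma~\ref{lemma:conditional_density}, where the active set and Lagrange multipliers play the roles now taken by the support $S$ and the free subgradient coordinates $\hat g_{S^c}$.
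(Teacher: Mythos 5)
Your proposal is correct and follows essentially the same route as the paper: reparametrize $(\hat\theta,\hat g)$ on the event $S(\hat\theta)=S$ by the $d$ free coordinates $(\hat\theta_S,\hat g_{S^c})$ (with $\hat\theta_{S^c}=0$ and $\hat g_S$ pinned by the KKT conditions), change variables from $W$ via $\sigma W=\hat g-\nabla_\theta\L(\hat\theta;X)$, and read off the block-triangular Jacobian with absolute determinant $\sigma^{-d}\det\bigl(\nabla^2_\theta\L(\hat\theta;X)_S\bigr)$, exactly as in the paper's adaptation of Lemma~\ref{lemma:conditional_density}. Your additional stratification by the sign pattern $s_S$ is a harmless refinement (the paper absorbs the sign into the reparametrization map, which is locally constant in sign), and your closing list of technical verifications matches the steps the paper delegates to the bijection lemma and to the change-of-variables arguments of \cite[Section B.4]{barber2020testing}.
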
 
Comparing to the analogous result given in Lemma~\ref{lemma:conditional_density} for the constrained case, we see that the only difference 
is in the $\det(\cdot)$ term: the density involves the determinant of a different matrix (namely, $U_{\A}^\top \nabla_{\theta}^2\L(\hat\theta; x)U_{\A}$ in the constrained case, and $\nabla_{\theta}^2\L(\hat\theta; x)_S$ in the penalized case). This is not merely a difference in notation: the matrices will actually have different dimension in the $\ell_1$-constrained and $\ell_1$-penalized settings, because under the constrained setting, if we know the support is $S$, the solution $\hat\theta$ effectively has $|S|-1$ degrees of freedom (due to the $\ell_1$ constraint which specifies the sum of the terms), in contrast to $|S|$ for the $\ell_1$-penalized setting.

\subsection{The aCSS method in the penalized case}
To implement an $\ell_1$-penalized version of aCSS, we can modify the constrained aCSS method in a straightforward way: we 
simply replace the constrained optimization problem~\eqref{eqn:def_thetahat} with the $\ell_1$-penalized optimization problem~\eqref{eqn:def_thetahat_l1pen}, and then proceed as before, using our new calculation for the conditional density as given
in Lemma~\ref{lemma:conditional_density_l1pen}. In particular, the copies $\tilde{X}^{(m)}$ will be sampled as
\[(\tilde{X}^{(1)},\dots,\tilde{X}^{(M)}) \mid (X,\hat\theta,\hat{g}) \sim \tilde{P}_M(\cdot;X,\hat\theta,\hat{g})\]
where $\{\tilde{P}_M(\cdot;x,\theta,g)\}$ is required to satisfy \eqref{eqn:define_tilde_P}, the same property as before, but now
relative to the conditional density $p_{\hat\theta}(\cdot\mid \hat\theta,\hat{g})$ calculated as
\begin{equation}
	\label{eqn:conditional_density_thetahat_l1pen}
	p_{\hat\theta}(\cdot\mid\hat{\theta}, \hat g )\propto f(x; \hat\theta)\cdot \exp\left\{-\frac{\|\hat{g} - \nabla_{\theta}\L(\hat\theta; x)\|^{2} }{2\sigma^{2}/d}\right\}\cdot\det\left(\nabla^{2}_{\theta}\L(\hat\theta; x)_{S(\hat\theta)}\right) \cdot
	\id_{x\in\X_{\hat\theta,\hat{g}}}.
\end{equation} 
As a special case, if computationally feasible, we can choose
\[\tilde{P}_M(\cdot;x,\hat\theta,\hat{g}) = p_{\hat\theta}(\cdot\mid \hat\theta,\hat{g}) \times\dots\times p_{\hat\theta}(\cdot\mid \hat\theta,\hat{g}),\]
i.e., sampling the copies i.i.d.\ from the conditional density $p_{\hat\theta}(\cdot\mid \hat\theta,\hat{g})$ defined in~\eqref{eqn:conditional_density_thetahat_l1pen}.

Formally, the algorithm is defined as follows.  The bold text highlights the only modifications in the algorithm, relative to constrained aCSS.

\begin{quote}
	\normalsize
	\textbf{$\ell_1$-penalized aCSS algorithm:}
	\begin{enumerate}
		\item Observe data $X\sim P_{\theta_0}$.
		\item Draw noise $W\sim\mathcal{N}(0,\frac{1}{d}\I_d)$.
		\item \textbf{\normalsize Solve for an $\ell_1$-penalized perturbed MLE $\hat\theta = \hat\theta(X,W)$ as in~\eqref{eqn:def_thetahat_l1pen}.}\\
		Compute the corresponding gradient $\hat{g}=\hat{g}(X,W)$ as in~\eqref{eqn:def_ghat}.
		\item If $\hat\theta$ is not a SSOSP of~\eqref{eqn:def_thetahat}, then set $\tilde{X}^{(1)}=\dots=\tilde{X}^{(M)}=X$.
		Otherwise, sample copies $(\tilde{X}^{(1)},\dots,\tilde{X}^{(M)}) \mid (X,\hat\theta,\hat{g}) \sim \tilde{P}_M(\cdot;X,\hat\theta,\hat{g})$, where $\tilde{P}_M$
		is chosen to satisfy property~\eqref{eqn:define_tilde_P} \textbf{\normalsize relative to the conditional density $p_{\hat\theta}(\cdot\mid \hat\theta,\hat g)$ as computed
			in~\eqref{eqn:conditional_density_thetahat_l1pen}.}
		\item Compute the p-value defined in~\eqref{eq:pval}
		for our choice of test statistic $T$.
	\end{enumerate}
\end{quote}

In contrast to the typical challenges for translating results between the constrained and penalized form of a regularized estimation problem, in the context of aCSS, both the conditional density in Lemma~\ref{lemma:conditional_density_l1pen} and our next result establish that the exact same results can be obtained for the $\ell_1$-penalized case. This unusually favorable behavior is due to the fact that aCSS operates conditionally on the solution $\hat\theta$---effectively, once we condition on $\hat\theta$, we no longer face the challenge of the data-dependent correspondence between the penalty parameter $\lambda$ versus the constraint parameter $C$, since both values are revealed by $\hat\theta$ itself.

\begin{theorem}\label{theorem_penalized}
 	The results of Theorems~\ref{theorem_general}, \ref{theorem_sparse}, and \ref{theorem_gaussian}
		all hold for the $\ell_1$-penalized form of aCSS in place of constrained aCSS, under the same assumptions—except that in Assumption \ref{asm:SSOSP_general}, the estimator $\hat{\theta}(X, W)$ is assumed to be a SSOSP of the $\ell_1$-penalized problem \eqref{def:ssosp-pen}.

\end{theorem}

In the context of utilizing the  $\ell_1$ penalty, it is commonly the case that the parameter is high-dimensional and sparse. This naturally directs our attention towards Theorem~\ref{theorem_sparse}, which offers the most relevant insights for this scenario.  Specifically, we can select the set of vectors $\{v_{i}\}$ as the canonical basis $\{\mathbf{e}_{i}\}_{i=1, ..., d}$. Then we have $\|w\|_{v,0} = \|w\|_{0}$ (i.e., the cardinality of the support of $w$). 
The result of Theorem~\ref{theorem_sparse} then gives a much stronger bound on the excess Type I error rate, as long as 
we can assume that
\[\|\hat\theta - \theta_0\|_0\leq k(\theta_0)\]
holds with high probability. This is very favorable for the $\ell_1$ penalized setting: if $\theta_0$ itself is sparse,
then the sparsity of $\hat\theta$ (which is ensured by the $\ell_1$ penalty) means that the difference $\hat\theta - \theta_0$ will also be sparse.

\section{Numerical experiments}\label{sec:numerical}
In this section, we will study the performance of aCSS with regularization on three simulated examples.\footnote{Code for reproducing all experiments
	is available at \url{http://rinafb.github.io/code/reg_acss.zip}.}
The first, Example \ref{example:mixgaussian}, is a Gaussian mixture model, which showcases a scenario where constraints on the parameters being estimated are essential to ensure the existence of a well-defined MLE. In the remaining examples, Example \ref{example:isotonic} (isotonic regression) and Example \ref{example:sparse} (sparse regression), 
we shift our focus to a high-dimensional Gaussian linear model, where the imposition of suitable constraints or penalties can allow for accurate estimation despite high dimensionality.

\subsection{Necessary constraints: the Gaussian mixture model}\label{sec:gmm}
In this section, we will examine the Gaussian mixture model example, where constraints are needed for ensuring the existence of a well-defined MLE.
\begin{example}[Gaussian mixture model]\label{example:mixgaussian}
	Suppose  we observe data   from  the Gaussian mixture model with a known number of components $J$,
	\[
	X_{1}, ..., X_{n} \overset{\textnormal{i.i.d.}}{\sim}\sum_{j=1}^{J}\pi_{j}\mathcal{N}(\mu_{j}, \eta_{j}^{2}),
	\]
	where $\{\pi_{j}\}_{j\in[J]}$ are the weights on the components, with $\pi_j > 0$ and $\sum_j \pi_j=1$.
	The family of distributions $\{P_{\theta}\}_{\theta\in\Theta}$ is parameterized by $\theta = (\pi_{1}, ..., \pi_{J-1}, \mu_{1}, \eta_{1}, ..., \mu_{J}, \eta_{J})\in\Theta$ where
	\[\Theta = \{t \in \R_+^{J-1} : \sum_i t_i < 1\} \times (\R \times \R_+)^J.\] Consequently we have $\Theta\subseteq\R^{d}$ with $d = 3J-1$. The density of  $P_{\theta}$, the distribution on the data $X=(X_1,\dots,X_n)$, is thus given by
	\begin{equation}\label{eqn:mixgaussian_density}
		f(x;\theta) = \prod_{i=1}^{n}\sum_{j=1}^{J}\pi_{j}\phi(x_{i};\mu_{j},\eta_{j}^2),\end{equation}
	where $\phi(\cdot;\mu,\eta^2)$ is the density of the normal distribution with mean $\mu$ and variance $\eta^2$.
\end{example}

Why is constrained aCSS useful for this example?
The Gaussian mixture model does not possess straightforward, compact sufficient statistics due to the presence of unobserved latent variables (i.e., 
identifying which of the $J$ components corresponds to the draw of each data point $X_i$).  Any sufficient statistic would reveal essentially all the information about the data $X$.
However, if we attempt to apply aCSS (without constraints), we are faced with a fundamental challenge:
the MLE does not exist for this model, because the likelihood approaches infinity if, for any component $j$, we take $\mu_j= X_i$ for some observation $i\in[n]$ and take $\eta_j\rightarrow 0$.
To prevent this divergence of the likelihood, one can impose a lower bound on the component variances, requiring $\eta_{j}\ge  c$ for each $j\in[J]$, where $c>0$ is some small constant. Under this restriction,  it can be shown that MLE is strongly consistent if the true parameter lies within the restricted parameter space  \cite{Tanaka2006strong}.
Then the constrained  aCSS framework is indeed suitable when generating sampling copies in the context of this example. 
   Existing methods for testing mixture models are primarily based on the likelihood ratio test (LRT), but they have certain limitations. The profiled LRT \citep{chen2014likelihood} and the EM test \citep{10.1214/08-AOS651} have tractable limiting distributions; however, both are restricted to two-component mixture models and assume that one of the components corresponds to the null distribution, which does not align with the setup considered in our example. The bootstrap LRT \citep{mclachlan1987bootstrapping} is commonly used but lacks finite-sample guarantees. Universal inference \citep{wasserman2020universal}, based on the split (or crossfit) LRT, is applicable to mixture models with multiple components; however, data splitting in these approaches often leads to a loss of power. 
	We later compare our method to this approach.  
As we will show in Appendix C,
for an appropriately-chosen initial estimator this example satisfies Assumptions~\ref{asm:reg},~\ref{asm:SSOSP_general}, and~\ref{asm:L_H} with $r(\theta_{0}) = O(\sqrt{\log{n}/n})$, $\delta(\theta_{0}) = O(n^{-1})$, and $\epsilon(\theta_{0}) = O( \sqrt{\frac{\log^3 n}{n}})$, as long as we assume $(\mu_1)_0\neq (\mu_2)_0$, i.e., the two components have distinct means under the true parameter $\theta_0$.
Therefore, Theorem~\ref{theorem_general} implies that constrained aCSS will have approximate Type I error control for this example.

\subsubsection{Simulation: setting}  We next examine the empirical performance of constrained aCSS for the Gaussian mixture model (Example~\ref{example:mixgaussian}).
For this setting, we will compare the null hypothesis of a Gaussian mixture model with $J=2$ components, against an alternative where there are more (specifically, $3$) 
components.
The setup of the simulation is summarized as follows:
\begin{itemize} 
	\item  To generate data, we take $n = 200$,  and draw the data points $X_1,\dots,X_n$ from a mixture of Gaussians  \[\pi_{0}\mathcal{N}(0, 0.01) + \frac{1 - \pi_{0}}{2}\mathcal{N}(0.4, 0.01) + \frac{1 - \pi_{0}}{2}\mathcal{N}(-0.4,0.01).\]
	\item Our null hypothesis is a mixture of \emph{two} Gaussians (i.e., a density of the form~\eqref{eqn:mixgaussian_density} with $J=2$). The data generating distribution above therefore corresponds to the null hypothesis~\eqref{eqn:mixgaussian_density}
	with parameter 
	\[\theta_{0} =  (\pi_1,\mu_1,\eta_1,\mu_2,\eta_2) = (0.5, \ 0.4, \ 0.1, \ -0.4, \ 0.1)\]
	in the case that $\pi_0 = 0$, while if $\pi_0>0$ then the null hypothesis is not true.
	\item  We enforce $r=2$ constraints, given by $\eta_{j}\ge 0.098$, $j=1,2$. (We choose the bound slightly below the true value $\eta_j = 0.1$, so that a reasonable proportion of constraints are active---this way, running our constrained aCSS procedure is meaningfully different than running unconstrained aCSS.) Constrained aCSS is then run with  noise level $\sigma = 8$, and  $M=300$ copies $\tilde{X}^{(m)}$, sampled via MCMC (see Appendix D of the Supplement for details).
	\item We compare constrained aCSS to the oracle method, which uses the same test statistic $T$ but is
	given full knowledge of the distribution of $X$ under null hypothesis, i.e., $P_{\theta_0} = 0.5\mathcal{N}(0.4, 0.01) + 0.5\mathcal{N}(-0.4,0.01)$, and can therefore sample the copies $\tilde{X}^{(m)}$ i.i.d.\ from the known null distribution. 
	\item The test statistic $T$ (used both for aCSS and for the oracle) is chosen as the decrease in total within-cluster sum of squares of the k-means algorithm,
	when the number of estimated clusters  is increased from $2$ to $3$.

\end{itemize}

\subsubsection{Simulation: results}
The results of the simulation are shown in Figure \ref{fig:necessary}. We see that the constrained aCSS method is empirically valid as a test of $H_0$, since the rejection probability  when $\pi_0=0$ (i.e., when $H_0$ is true) closely matches the nominal level $\alpha=0.05$. Of course, the power of constrained aCSS is lower than that of the oracle method,
as is expected since the oracle is given knowledge of the true null parameter $\theta_0$; nonetheless, constrained aCSS shows a good increase in power as the signal strength $\pi_0$ grows. 
 
We also compare our method with the universal inference approach \citep{wasserman2020universal}. In this method, the data is split into two halves: one half is used to fit a three-cluster mixture model via the EM algorithm, while the other half is used to fit a two-cluster model. The cross-fitted likelihood ratio is computed as the test statistic and compared against $1/\alpha$ to perform the hypothesis test.  Our method outperforms universal inference, demonstrating higher power while maintaining valid Type I error.
\begin{figure}
\centering   
\includegraphics[width=0.45\linewidth]{./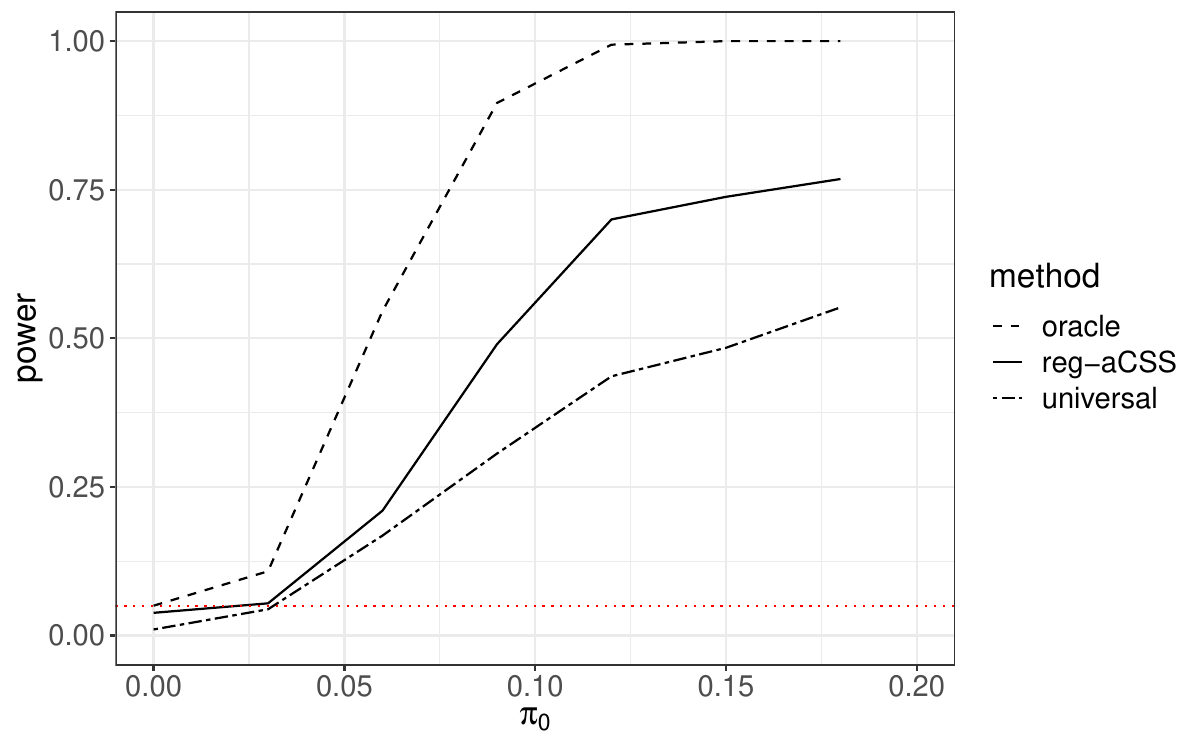}
\caption{ Power of the regularized  aCSS method, denoted as reg-aCSS, versus the oracle method and universal inference with crossfit LRT. The dotted red line denotes the nominal 5\% level. Results are based on 500 trials. $\pi_0=0$ corresponds to the null hypothesis being true.}
\label{fig:necessary}
\end{figure} 

\subsection{High dimensional setting: structured Gaussian linear model}\label{sec:example_hd}
We will now turn to the high-dimensional setting, where the data is distributed according to a Gaussian linear model with dimension $d\geq n$, 
\begin{equation*}  
	X\sim  \mathcal{N}(Z\theta, \nu^2\I_{n}), \ \textnormal{ with $Z\in\R^{n\times d}$, $\nu^2>0$ known},
\end{equation*} 
as in~\eqref{eq:gaussian_model}.
The family of distributions $\{P_{\theta}\}_{\theta\in\Theta}$ is parameterized  by  $\theta\in\Theta=\R^d$ and has density
\[
f(x;\theta) =   \frac{1}{(2\pi\nu^2)^{n/2}}e^{-\frac{\|x - Z\theta\|^2}{2\nu^2}}.
\]
In Section \ref{sec:theory_gaussian}, we examined the limitations of CSS testing, which will be powerless for this problem when $d\geq n$, 
as the copies $\tilde{X}^{(m)}$ will be identically equal to $X$.
We can instead run the aCSS method; however, the results of \cite{barber2020testing} indicate that the inflation in Type I error will scale
with our estimation error $\|\hat\theta - \theta_0\|$, which will in general be large when $d\geq n$, since the estimator $\hat\theta$ is computed with an unregularized maximum
likelihood estimation problem. (More precisely, aCSS does allow for a \emph{smooth} regularizer $R(\theta)$, such as a ridge penalty; however, it is challenging to achieve
accurate estimation in a high-dimensional setting unless we use \emph{nonsmooth} regularization, e.g., the $\ell_1$ norm).

In contrast, our proposed version of aCSS allows for constraints (or penalties) that allow us to achieve an accurate estimator $\hat\theta$, and consequently low Type I error,
in the high-dimensional setting. We now consider two specific examples where the application of appropriate regularization assists in the estimation process.

\begin{example}[Isotonic regression]\label{example:isotonic} 
	In the isotonic regression model, we are given a noisy observation $X\in\R^{n}$ of some monotone increasing signal $\theta_0\in \R^{n}$ with  
	\[(\theta_0)_{1}\le\cdots\le (\theta_0)_{n}.\]
	If the noise is Gaussian, with $X\sim\mathcal{N}(\theta_0,\nu^2\I_n)$, then this model  is a special case of the Gaussian linear model with $d = n$ and $Z = \I_{n}$.
\end{example}

To run constrained aCSS,  the perturbed  isotonic (least squares) regression is given by
\[
\hat\theta_{\textnormal{iso}} = \arg \min_{\theta\in\R^{n}} \{\L(\theta; X, W) \ : \ \theta_{1}\le\cdots\le \theta_{n}\}
\] 
to estimate the underlying signal.   
\cite{zhang2002risk} demonstrated that the isotonic least squares estimator (LSE),
which is given by minimizing $\|\theta-X\|$ subject to the constraints $\theta_1\leq\dots\leq\theta_n$, has an error rate scaling as  $\|\hat\theta-\theta_0\| = O(n^{1/6})$ (and choosing a sufficiently small $\sigma$ means that the perturbation will not substantially inflate this rate). This rate matches the minimax rate over the class of monotone and Lipschitz signals  \citep{chatterjee2015risk}. Thus, adding the monotonicity constraint will substantially reduce the error $\|\hat\theta-\theta_0\|$, which can help control the excess Type I error for our setting. In Appendix C, we will see that 
this example satisfies Assumptions~\ref{asm:reg},~\ref{asm:SSOSP_general}, and~\ref{asm:L_H} with $r(\theta_0) = O\left(n^{1/6}(\log n)^{1/3}\right)$, $\delta(\theta_0) = 1/n$, and $\epsilon(\theta_{0}) = 0$, if we choose $\sigma = O(1)$.			 Therefore, Theorem~\ref{theorem_gaussian} implies that constrained aCSS will have approximate Type I error control for this example.

Next, we examine a high-dimensional setting with a sparse parameter. 
\begin{example}[Sparse regression]\label{example:sparse}
	Let $d>n$, and let $Z\in\R^{d\times n}$ be a fixed covariate matrix. We assume the model	
	\[				X\sim  \mathcal{N}(Z\theta, \nu^2\I_{n}),\]
	for a known noise level $\nu^2$. This model is unidentifiable without further assumptions, but 
	becomes identifiable once we assume $\theta_0$ is sparse---specifically, as long as $Z$ satisfies some standard conditions (e.g., a restricted eigenvalue
	assumption). We will assume  that the underlying parameter $\theta_0$ is sparse, with
	\[\|\theta_0\|_{0}\le k\]
	for some sparsity bound $k$.
\end{example}

To address the problem of estimating a sparse $\theta_0$ in a linear model, the Lasso estimator \citep{tibshirani1996regression}, which combines the least squares loss with an $\ell_{1}$ penalty, is frequently employed. Under certain conditions, the error rate of the Lasso estimator can be on the order of $O( \sqrt{k\log(d)/ n})$ \citep{bickel2009simultaneous, hastie2015statistical}. Thus the perturbed Lasso is a suitable candidate for the estimator  in this context: for a given penalty level $\lambda>0$, we define
\[
\hat\theta_{\textnormal{lasso}} = \arg \min_{\theta\in\R^{d}} \{\L(\theta; X, W) + \lambda \|\theta\|_{1}\}.
\] 
In Appendix C, we will see that 
this example satisfies Assumptions~\ref{asm:reg},~\ref{asm:SSOSP_general}, and~\ref{asm:L_H} with $r(\theta_0) =  O(\sqrt{k\log d/n})$, $\delta(\theta_0) = 1/n$, and $\epsilon(\theta_{0}) = 0$, under suitable conditions.		 Therefore, Theorem~\ref{theorem_penalized} implies that constrained aCSS will have approximate Type I error control for this example.

\subsubsection{Simulation: setting} \label{sec:simu_setting_gaussian}
In this section, we demonstrate the advantage of regularized aCSS in high-dimensional settings.
Specifically, we will compare against the (unconstrained) aCSS method of \cite{barber2020testing}, to see how adding regularization allows for better estimation---consequently,
we can allow a high value of $\sigma$ without losing (approximate) Type I error control, which in turn leads to higher power.

For the isotonic regression setting  (Example~\ref{example:isotonic}), we will compare the null hypothesis that $X$ is given by an isotonic signal $\theta_0$ plus Gaussian noise, against
the alternative where $X$ also has dependence on an additional random variable $Y$. (Equivalently, we can take our covariate matrix $Z$ to be the identity, $Z=\I_d$, with $d=n$.) 
The setup of the simulation for isotonic regression is as follows:
\begin{itemize} 
	\item To generate data,
	we take $n = 100$, $\nu = 1$, and set the signal $\theta_{0}$ as
	\[\theta_0 = (0.1,\dots,0.1,0.2,\dots,0.2,\dots,1,\dots,1),\]
	with each value $0.1,0.2,0.3,\dots,1$ appearing 10 times. We then generate $X\sim\mathcal{N}(\theta_0,\nu^2\I_n)$. The additional random vector $Y$ is then drawn as 
	: 	\[Y\mid X \sim \mathcal{N}(\beta_0 X + (1-\beta_0)\theta_0,\I_n),\]
	where $\beta_0 \in \{0,0.05,0.1,\dots,0.5\}$, with $\beta_0=0$ corresponding to the null hypothesis.
	Formally, our null hypothesis
	is given by assuming that $X\mid Y\sim\mathcal{N}(\theta,\nu^2\I_n)$ for some $\theta\in\Theta=\R^n$, i.e., that the Gaussian model
	for $X$ is true even after conditioning on $Y$. If $\beta_0\neq 0$, then this null hypothesis does not hold.

	\item 
	For \cite{barber2020testing}'s aCSS method, $\hat\theta$ is computed via perturbed and unconstrained maximum likelihood estimation,
	\[\hat\theta = \hat\theta_{\textnormal{OLS}} = \argmin_{\theta\in\R^n} \left\{\frac{1}{2}\|X - \theta\|^2  + \sigma W^\top \theta\right\}. \]
	For our proposed constrained aCSS method, $\hat\theta$ is computed with the isotonic constraint,
	\[\hat\theta = \hat\theta_{\textnormal{iso}}  = \argmin_{\theta\in\R^n} \left\{\frac{1}{2}\|X - \theta\|^2  + \sigma W^\top \theta \ : \ \theta_1\leq \dots \leq \theta_n\right\}. \]
	For both methods, we sample the copies $\tilde{X}^{(m)}$ directly from the conditional distribution \eqref{eqn:conditional_distribution_gaussian}. 
 (When $\nu$ is unknown,  further details are provided in Appendix D of the Supplement.)
	
	\item For the oracle method, we assume oracle knowledge of the parameter $\theta_0$ that defines the null distribution, and 
	sample the copies $\tilde{X}^{(m)}$ i.i.d.\ from $P_{\theta_0} = \mathcal{N}(\theta_0,\I_n)$.
	
	\item For all   methods, the test statistic $T$ is
	given by the absolute value of the sample correlation between $X$ and $Y$. 
	
\end{itemize}

For the sparse regression setting (Example~\ref{example:sparse}), we will compare the null hypothesis that $X\mid Z$ follows a (sparse) Gaussian linear model, against
the alternative where $X$ also has dependence on an additional random variable $Y$.
The setup of the simulation for sparse regression is as follows:
\begin{itemize} 
	\item To generate data, we set $n = 50$, $d = 100$, $\nu = 1$, and $\theta_0 = (5, 5, 5, 5, 5, 0, ..., 0)$. 
	The covariate matrix $Z\in\R^{n\times d}$ is generated with i.i.d.\ $\mathcal{N}(0,1/d)$ entries, and we draw $X\mid Z\sim \mathcal{N}(Z\theta_0,\nu^2\I_n)$. 
	The random vector $Y\in\R^n$ is then generated with each entry $Y_i$ drawn as
	\[Y_i \mid X_i,Z_i \sim\mathcal{N}(\beta_0 X_i + \sum_{j=1}^5 Z_{i,j} , 1).\]
	We consider $\beta_{0}\in \{0, 0.1, 0.2, ..., 1\}$ with $\beta_{0} = 0$ corresponding to the setting where $Y\indep X\mid Z$. Formally, our null hypothesis
	is given by assuming that $X\mid Y,Z\sim\mathcal{N}(Z\theta,\nu^2\I_n)$ for some $\theta\in\Theta=\R^d$. If $\beta_0\neq 0$, then this null does not hold.
	\item For \cite{barber2020testing}'s aCSS method, we will use a ridge regularizer, $R(\theta) = \frac{\lambda_{\textnormal{ridge}}}{2}\|\theta\|^2$, for parameter estimation. We define
	\[\hat\theta = \hat\theta_{\textnormal{ridge}} = \argmin_{\theta\in\R^d} \left\{\frac{1}{2}\|X - Z\theta\|^2 + \frac{\lambda_{\textnormal{ridge}}}{2}\|\theta\|^{2} + \sigma W^\top \theta\right\}.\] 
	Adding ridge regularization  allows for a unique solution $\hat\theta$,
	achieving strict second-order stationarity conditions, to avoid a trivial result where the method achieves zero power (as would be the case if the SSOSP conditions are never satisfied).  
	For our proposed $\ell_{1}$-penalized aCSS method, in order to be more comparable to aCSS, we also add the regularizer $R(\theta)$. This means that our estimator
	is given by the elastic net \citep{zou2005regularization}, incorporating both $\ell_1$ and $\ell_2$ penalization:
	\[\hat\theta=\hat\theta_{\textnormal{elastic-net}} =  \argmin_{\theta\in\R^d} \left\{\frac{1}{2}\|X - Z\theta\|^2 + \frac{\lambda_{\textnormal{ridge}}}{2}\|\theta\|^{2} + \lambda\|\theta\|_1+ \sigma W^\top \theta\right\}.\] 
	For both methods, we sample the copies $\tilde{X}^{(m)}$ directly from  the conditional distribution \eqref{eqn:conditional_distribution_gaussian}. 
 (When $\nu$ is unknown,  further details are provided in Appendix D of the Supplement.)
	\item For the oracle method, we assume oracle knowledge of the parameter $\theta_0$ that defines the null distribution, and 
	sample the copies $\tilde{X}^{(m)}$ i.i.d.\ from $P_{\theta_0} = \mathcal{N}(Z\theta_0,\I_n)$.
	\item For all methods, the test statistic $T$ is
	given by the absolute value of the  estimate of the coefficient on $X$, 
	when $Y$ is regressed on $X,Z$ with elastic net for penalization on the coefficients on $Z$---specifically, the fitted coefficient $\hat\beta_X$ in the optimization problem
	\[(\hat\beta_X,\hat\beta)=\argmin_{\beta_X, \beta}\left\{\frac{1}{2}\|Y-X\beta_X - Z\beta \|^2+\frac{3}{2}\|\beta\|_{2}^{2} + 7\|\beta\|_{1}\right\}.\]
\end{itemize}

\subsubsection{Simulation: results}

Next, we turn to the results of this simulation.
In Figure~\ref{fig:regression1}, we show the power of the methods for isotonic regression (left) and sparse regression (right).
We see that aCSS (in its original unconstrained form as proposed by \cite{barber2020testing}) quickly loses Type I error control as $\sigma$ increases---this is exactly as expected from the theory,
since the excess Type I error rate is characterized by a term scaling as $\sigma r(\theta_0)$, where $r(\theta_0)$ bounds the estimation error $\|\hat\theta-\theta_0\|$ and therefore
is high in the unconstrained setting. This means that, to maintain (approximate) Type I error control
with  aCSS, we would need to use a small value of $\sigma$, which in turn leads to low power under the alternative. On the other hand,
for our proposed methods---constrained aCSS in the isotonic example,
and $\ell_{1}$-penalized aCSS in the sparse example---we see that approximate Type I error control is well maintained even for larger values of $\sigma$, which allows 
for fairly high power without losing validity. Of course, in each case, the power of the oracle method is higher, as the oracle is given access to the true parameter $\theta_0$ for the null
distribution.     We  also compare our method to alternative approaches that might be considered but are  invalid for testing conditional independence between   $X$ and $Y$ when the  conditional distribution $Y\mid X$ is  unknown. 
	In Example \ref{example:isotonic}, we compare against a  $t$-test that regresses $Y$ on $X$ and tests the significance of the coefficient for $X$. In Example \ref{example:sparse}, we compare against  high-dimensional inference using  the de-sparsified Lasso  \citep{dezeure2015high}, which  tests the significance of  $X$’s coefficient on $Y$. While this approach accounts for high-dimensional settings, it does not accommodate correlation or conditional independence. As shown in Figure~\ref{fig:regression1}, these alternative methods fail to control the Type I error, yielding values significantly larger than the nominal level of 0.05.

\begin{figure} 	
	\centering  
	\includegraphics[width=0.45\linewidth]{./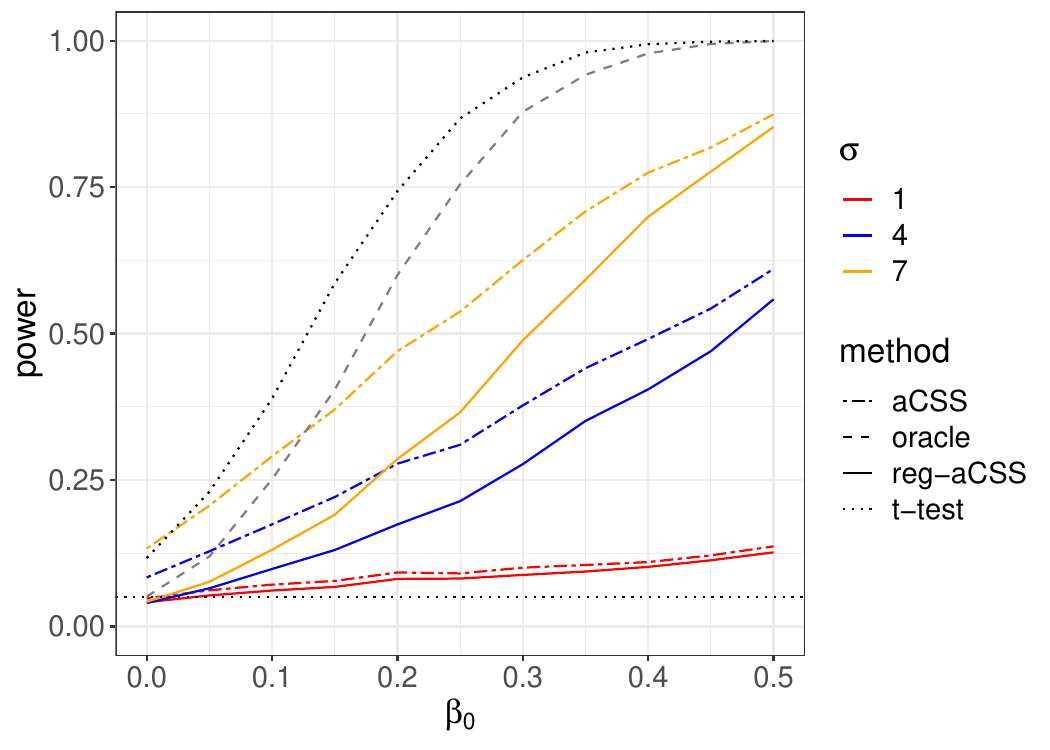}  	
	\includegraphics[width=0.45\linewidth]{./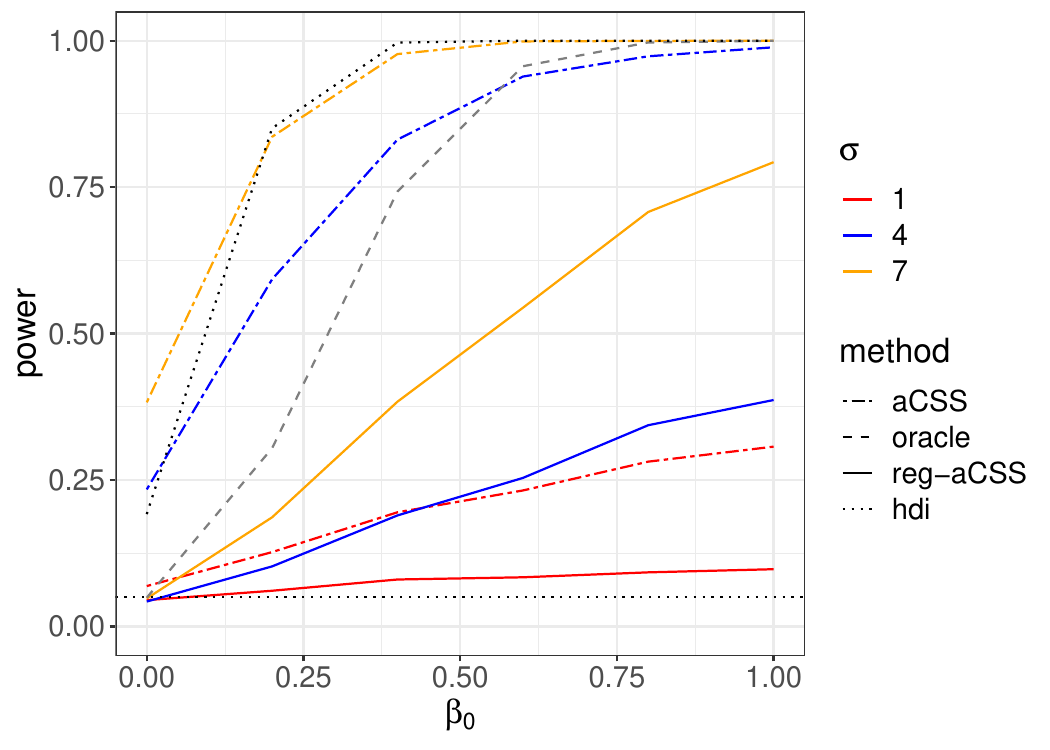}   
 	\caption{Power comparison of regularized aCSS (denoted as reg-aCSS) with aCSS (across different values of $\sigma$), the oracle method, and the t-test (for isotonic regression, left) / de-sparsified Lasso (denoted as hdi, for sparse regression, right), over 5000 independent trials.  The dotted red line denotes the nominal 5\% level. For both settings, $\beta_0=0$ corresponds to the null hypothesis being true.}
	\label{fig:regression1}
\end{figure}

\section{Discussion}\label{sec:discussion}
In this paper, we discuss how to extend the aCSS algorithm to cases where linear constraints, such as an $\ell_1$ constraint or an isotonicity constraint, are applied to enable better
accuracy in the estimator $\hat\theta$. We also extend to the case of an $\ell_{1}$ penalty (e.g., the lasso). This methodology addresses one of the primary open questions proposed in \cite{barber2020testing}, who pose the problem of ``Relaxing regularity conditions and extending to high dimensions". We demonstrate that this extension of the aCSS algorithm can accommodate complex estimators $\hat\theta$, which may be more stable and accurate in high-dimensional settings. Moreover, we show that the regularized aCSS testing has theoretical guarantees for high dimensions when the estimator exhibits a low-dimensional structure. 

A remaining challenge is the problem of efficient sampling for aCSS: as for \cite{barber2020testing}'s earlier work in the unconstrained setting,
aside from special cases such as a Gaussian linear model, overcoming computational challenges for sampling the copies $\tilde{X}^{(m)}$ will greatly increase  the practical utility of this methodology,
and remains an important issue to address in future work.

\section*{Funding}
	W.Z.\ and R.F.B.\ were supported by the Office of Naval Research via grant N00014-20-1-2337.
	R.F.B.\ was additionally supported by the National Science Foundation via grant DMS-2023109.

 
\appendix

	In this supplement,  appendix \ref{sec:proof} presents proofs of the main theoretical results, while Appendix \ref{sec:appdex_otherproof} provides supplementary proofs supporting these results. Appendix \ref{sec:check_asm} contains detailed proofs for the examples discussed in the paper.  We provide details for experiment in Section \ref{sec:exp_detail}.

	\section{Proofs of main results}\label{sec:proof}
	In this section, we provide proofs for our main results: Theorems~\ref{theorem_general}, \ref{theorem_sparse}, \ref{theorem_gaussian}, \ref{theorem_penalized} for establishing Type I error control, and Lemmas~\ref{lemma:conditional_density}, \ref{lemma:conditional_density_l1pen} for computing the conditional density. 
	\subsection{Proof of Theorems~\ref{theorem_general}, \ref{theorem_sparse}: error control for constrained aCSS}\label{sec:proof_sparse}
	\begin{proof}
		As mentioned in Section \ref{sec:theory},  Theorem \ref{theorem_general} is a special case of Theorem \ref{theorem_sparse}, achieved by taking $k(\theta_{0}) = d$
		and taking $v_i = \mathbf{e}_i$ for $i=1,\dots,d$.   Therefore, it is  sufficient to prove Theorem \ref{theorem_sparse}.
		Moreover, it is sufficient to bound the distance to exchangeability, since as argued in \cite{barber2020testing} we have
		\[\P\left(\textnormal{pval}_{T}(X, \tilde{X}^{(1)}, ..., \tilde{X}^{(M)})\le \alpha \right)\le \alpha +  d_{\textnormal{exch}}(X, \tilde{X}^{(1)}, ..., \tilde{X}^{(M)}).\]
		From this point on, then, we only need to establish the bound on $d_{\textnormal{exch}}(X, \tilde{X}^{(1)}, ..., \tilde{X}^{(M)})$.
		
		\subsubsection{Step 1: reduce to total variation distance}\label{sec:step1_proof_mainthm}			We first show that we can obtain the upper bound of the distance to exchangeability through the total variation distance between $P_{\theta_{0}}(\cdot\mid\hat{\theta},\hat{g})$ and its plug-in version. This part of the proof follows the same arguments as the analogous part of the proof of \cite[Theorem 1]{barber2020testing} for unconstrained aCSS. Let 	
		$$ 
		\Omega_{\textnormal{SSOSP}} = \left\{(x, w) \in X\times \R^{d}  : \hat\theta(x, w)  \textnormal{ is a SSOSP of \eqref{eqn:def_thetahat}} \right\},
		$$
		and $P_{\theta_{0}}^{*}$ be the distribution of $(X, W) \sim P_{\theta_{0}}\times \mathcal{N}(0, \frac{1}{d}\I_{d})$ conditional on the event $(X, W)\in \Omega_{\textnormal{SSOSP}}$.
		Consider the joint distribution (a)
		$$
		\textnormal{Distrib.\,(a)}\left\{ \begin{array}{l}
			(X, W) \sim P_{\theta_{0}}^{*},\\ 
			\hat\theta = \hat\theta(X, W), \hat{g} = \nabla\L(\hat\theta; X, W) = \nabla\L(\hat\theta; X) + \sigma W\\
			\tilde{X}^{(1)}, \dots, \tilde{X}^{(M)}\mid X,\hat{g} , \hat{\theta} \sim  \tilde{P}_{M}(\cdot; X, \hat\theta, \hat{g}),
		\end{array}\right.
		$$
		which is equivalent to the aCSS procedure conditional on the event $(X, W)\in \Omega_{\textnormal{SSOSP}}$. On the other hand, if $(X, W)\notin \Omega_{\textnormal{SSOSP}}$, then $\tilde{X}^{(1)}= \dots= \tilde{X}^{(M)}= X$ according to definition and therefore $(X, \tilde{X}^{(1)}, \dots, \tilde{X}^{(M)})$ is exchangeable. Thus, the exchangeability is violated only on the event $(X, W)\in \Omega_{\textnormal{SSOSP}}$. Combined with convex property of distance-to-exchangeability, we have
		$$
		d_{\textnormal{exch}}(X, \tilde{X}^{(1)},\dots, \tilde{X}^{(M)})\le d_{\textnormal{exch}}(\textnormal{Distribution of}\ X, \tilde{X}^{(1)}, \dots, \tilde{X}^{(M)} \textnormal{ under Distrib.\,(a)}),
		$$
		Let $Q_{\theta_{0}}^{*}$ be the joint distribution of $(\hat\theta(X, W), \hat{g}(X, W))$ under $(X, W) \sim P_{\theta_{0}}^{*}$ . Define distribution (b)
		$$
		\textnormal{Distrib.\,(b)}\left\{ \begin{array}{l}
			(\hat\theta,\hat{g})  \sim Q_{\theta_{0}}^{*},\\
			X\mid \hat\theta, \hat{g} \sim p_{\theta_{0}}(\cdot\mid\hat\theta, \hat{g}),\\
			\tilde{X}^{(1)}, \dots, \tilde{X}^{(M)}\mid X, \hat{\theta}, \hat{g} \sim  \tilde{P}_{M}(\cdot; X, \hat\theta, \hat{g}),
		\end{array}\right.
		$$
		where $p_{\theta_{0}}(\cdot\mid \hat\theta, \hat{g})$ is defined in Lemma \ref{lemma:conditional_density}. By definition of $p_{\theta_{0}}(\cdot\mid \hat\theta, \hat{g})$,
		it is clear that Distrib.\,(b) is equivalent to Distrib.\,(a), and then 
		$$
		d_{\textnormal{exch}}(X, \tilde{X}^{(1)},\dots, \tilde{X}^{(M)})\le d_{\textnormal{exch}}(\textnormal{Distribution of}\ X, \tilde{X}^{(1)}, \dots, \tilde{X}^{(M)} \textnormal{ under Distrib.\,(b)}),
		$$
		Further let $p_{\hat\theta}(\cdot\mid \hat\theta, \hat{g})$ be the plug-in version of $p_{\theta_{0}}(\cdot\mid \hat\theta, \hat{g})$ and define 
		$$
		\textnormal{Distrib.\,(c)}\left\{ \begin{array}{l}
			(\hat\theta,\hat{g})  \sim Q_{\theta_{0}}^{*},\\
			X\mid  \hat\theta, \hat{g} \sim p_{\hat\theta}(\cdot| \hat\theta, \hat{g}),\\
			\tilde{X}^{(1)}, \dots, \tilde{X}^{(M)}\mid X, \hat{\theta}, \hat{g} \sim  \tilde{P}_{M}(\cdot; X, \hat\theta, \hat{g}).
		\end{array}\right.
		$$
		From the definition of  $\tilde{P}_{M}$, $(X, \tilde{X}^{(1)}, \dots, \tilde{X}^{(M)})$ is exchangeable under Distrib.\,(c). Then, 
		$$
		d_{\textnormal{exch}}(\textnormal{Distribution of } X, \tilde{X}^{(1)}, \dots, \tilde{X}^{(M)} \textnormal{ under Distrib.\,(b)})\le d_{\textnormal{TV}}(\textnormal{Distrib.\,(b)}, \textnormal{Distrib.\,(c)}).
		$$
		Since the only difference between Distrib.\,(b) and Distrib.\,(c) lies in the conditional distribution $X| \hat\theta, \hat g$,  
		$$
		d_{\textnormal{TV}}(\textnormal{Distrib.\,(b)}, \textnormal{Distrib.\,(c)}) = \E_{Q_{\theta_{0}}^{*}}\left[d_{\textnormal{TV}}(p_{\theta_{0}}(\cdot\mid \hat\theta, \hat{g}), p_{\hat\theta}(\cdot\mid  \hat\theta, \hat{g}))\right].
		$$
		Therefore we can bound the distance to exchangeability  as 
		\begin{equation}\label{eq: exchangebound}
			d_{\textnormal{exch}}(X, \tilde{X}^{(1)}, \dots, \tilde{X}^{(M)})\le \E_{Q_{\theta_{0}}^{*}}\left[d_{\textnormal{TV}}(p_{\theta_{0}}(\cdot\mid  \hat\theta, \hat{g}), p_{\hat\theta}(\cdot\mid \hat\theta, \hat{g}))\right],
		\end{equation}
		i.e., the distance to exchangeability of $X, \tilde{X}^{(1)}, \dots, \tilde{X}^{(M)}$ from the constrained aCSS procedure is bounded by the expected total variation distance between the true conditional distribution and the plug-in conditional distribution. 
		
		\subsubsection{Step 2: bound the total variation distance}	
		Our next step is to bound this expected total variation distance. Here our arguments will need to address a more challenging setting than 
		the corresponding part of the proof of \cite[Theorem 1]{barber2020testing}, as we need to handle constrained rather than unconstrained optimization,
		as well as the issue of the sparse structure reflected by $k(\theta_0)$.
		
		To begin, we  calculate
		\begin{equation}		
			\begin{split}
				d_{\textnormal{TV}}(p_{\theta_{0}}(\cdot\mid \hat\theta, \hat{g}), p_{\hat\theta}(\cdot\mid \hat\theta, \hat{g})) & =   \E_{p_{\theta_{0}}(\cdot\mid  \hat\theta, \hat{g})}\left[\left(1 - \frac{p_{\hat\theta}(X\mid \hat\theta, \hat{g})}{p_{\theta_{0}}(X\mid\hat\theta, \hat{g})}\right)_{+}\right]\\
				& = \E_{p_{\theta_{0}}(\cdot\mid  \hat\theta, \hat{g})}\left[\left(1 -  \frac{\frac{f(X;\hat\theta)}{f(X; \theta_{0})}}{ \E_{p_{\theta_{0}}(\cdot\mid \hat\theta, \hat{g})}\frac{f(X';\hat\theta)}{f(X'; \theta_{0})}}\right)_{+}\right],
			\end{split}
		\end{equation}
		where $(x)_+ = \max\{x,0\}$. Here the first step holds by properties of the total variation distance, while the second step holds 
		by the density calculation in~\eqref{eqn:conditional_density}. 
		To bound this quantity, we first want to show that $\frac{f(X;\hat\theta)}{f(X; \theta_{0})}$ is almost a constant over $p_{\theta_{0}}(\cdot\mid \hat\theta, \hat{g})$. For any $x, \theta$, we take a Taylor series for the function $\theta\rightarrow \log f(x; \theta)$:
		\[
		\log f(x; \theta_{0})- \log f(x; \theta)\\ = (\theta_{0} - \theta)^\top \nabla_{\theta}\log f(x, \theta) + \int_{t=0}^{1}t( \theta - \theta_{0})^\top \nabla_{\theta}^{2}\log f(x; \theta_{t})( \theta - \theta_{0})\;\mathsf{d}t, 
		\] 	where we write $\theta_{t} = (1-t) \theta_{0} + t\theta $. Therefore, we have
		\begin{align*}		
			&\frac{f(x; \theta)}{f(x; \theta_{0})}  = \exp\left\{\log f(x; \theta)- \log f(x; \theta_{0}) \right\}\\
			&= \exp\left\{-( \theta_{0} - \theta)^\top \nabla_{\theta}\log f(x; \theta) - \int_{t=0}^{1}t( \theta - \theta_{0})^\top \nabla_{\theta}^{2}\log f(x; \theta_{t})( \theta - \theta_{0})\;\mathsf{d}t\right\}\\
			&= \exp\bigg\{( \theta_{0} - \theta)^\top (\nabla_{\theta}\L(x; \theta) -g) + \int_{t=0}^{1}t( \theta - \theta_{0})^\top \left( H(\theta_{t}; x) - H(\theta_{t})\right)( \theta - \theta_{0})\;\mathsf{d}t\\
			&\quad\quad\quad  +( \theta_{0} - \theta)^\top (g - \nabla_\theta R(\theta))+ \int_{t=0}^{1}t( \theta - \theta_{0})^\top  H(\theta_{t}) ( \theta - \theta_{0})\;\mathsf{d}t\bigg\},
		\end{align*}	
		where the last step holds for any fixed value $g\in\R^d$ (which will be chosen later), using the fact that $-\nabla_{\theta}\log f(x; \theta)=\nabla_\theta\L(x;\theta) - \nabla_\theta R(\theta)$ by definition of $\L$.

		Next let $\Theta_{0} = \B(\theta_{0}, r(\theta_{0}))\cap \Theta\cap \{\theta:  \|\theta-\theta_{0}\|_{v, 0} \le k(\theta_{0}) \} $.
		If $\theta \in\Theta_0$, then by definition of 		$ \|\theta-\theta_{0}\|_{v, 0}$,  there exists  a subset $S(\theta, \theta_{0})\subseteq[p]$ with $|S(\theta,\theta_0)|\leq k(\theta_0)$, such that $(\theta - \theta_{0})\in \textnormal{span}(\{v_{i}\}_{i\in S(\theta, \theta_{0})})$.	Recall that for any set $S\subseteq[p]$, $\mathcal{P}_{v_{S}}$ denotes the projection to   $\textnormal{span}(\{v_{i}\}_{i\in S})$. Then we have
		\begin{multline*}
			\left|( \theta_{0} - \theta)^\top (\nabla_{\theta}\L(x; \theta) - g) \right|
			= \left|( \theta_{0} - \theta)^\top  \mathcal{P}_{v_{S(\theta, \theta_{0})}}(\nabla_{\theta}\L(x; \theta) - g)\right|\\
			\le \| \theta_{0} - \theta\|\max_{S: |S|\le  k(\theta_0)}\| \mathcal{P}_{v_{S}}(\nabla_{\theta}\L(\theta; x)-g)\|
			\le r(\theta_0)\max_{S: |S|\le  k(\theta_0)}\| \mathcal{P}_{v_{S}}(\nabla_{\theta}\L(\theta; x)-g)\|.
		\end{multline*}
		We also calculate, for $\theta\in\Theta_0$,
		\begin{align*}&\int_{t=0}^{1}t( \theta - \theta_{0})^\top \left( H(\theta_{t}; x) - H(\theta_{t})\right)( \theta - \theta_{0})\;\mathsf{d}t
			\\&\leq \int_{t=0}^{1}t \| \theta - \theta_{0}\|^2 \cdot\lambda_{\max}(H(\theta_{t}; x) - H(\theta_{t}))\;\mathsf{d}t\\
			&\leq  \frac{1}{2}\sup_{\theta'\in\Theta_0}\big(\lambda_{\max}(H(\theta'; x) - H(\theta'))\big)_+ \cdot \|\theta-\theta_0\|^2\\&\leq  \frac{r(\theta_0)^2}{2}\sup_{\theta'\in\Theta_0}\big(\lambda_{\max}(H(\theta'; x) - H(\theta'))\big)_+,\end{align*}
		and similarly,
		\[\int_{t=0}^{1}t( \theta - \theta_{0})^\top \left( H(\theta_{t}; x) - H(\theta_{t})\right)( \theta - \theta_{0})\;\mathsf{d}t \geq  - \frac{r(\theta_0)^2}{2} \sup_{\theta'\in\Theta_0}\big(\lambda_{\max}(H(\theta') - H(\theta';x))\big)_+.\]
		Combining all these calculations, for any $\theta\in\Theta_0$ we have
		\begin{multline*}		
			\frac{f(x; \theta)}{f(x; \theta_{0})}  
			\leq  \exp\bigg\{r(\theta_0)\max_{S: |S|\le  k(\theta_0)}\| \mathcal{P}_{v_{S}}(\nabla_{\theta}\L(\theta; x)-g)\|\\
			+\frac{r(\theta_0)^2}{2} \sup_{\theta'\in\Theta_0}\big(\lambda_{\max}(H(\theta';x) - H(\theta'))\big)_+\\
			+( \theta_{0} - \theta)^\top (g - \nabla_\theta R(\theta))+ \int_{t=0}^{1}t( \theta - \theta_{0})^\top  H(\theta_{t}) ( \theta - \theta_{0})\;\mathsf{d}t\bigg\},
		\end{multline*}	
		and similarly,
		\begin{multline*}		
			\frac{f(x; \theta)}{f(x; \theta_{0})}  
			\geq  \exp\bigg\{ - r(\theta_0)\max_{S: |S|\le  k(\theta_0)}\| \mathcal{P}_{v_{S}}(\nabla_{\theta}\L(\theta; x)-g)\| \\
			- \frac{r(\theta_0)^2}{2} \sup_{\theta'\in\Theta_0}\big(\lambda_{\max}(H(\theta') - H(\theta';x))\big)_+\\
			+( \theta_{0} - \theta)^\top (g - \nabla_\theta R(\theta))+ \int_{t=0}^{1}t( \theta - \theta_{0})^\top  H(\theta_{t}) ( \theta - \theta_{0})\;\mathsf{d}t\bigg\},
		\end{multline*}	
		Now let
		$$
		\Delta_1(\theta,  g; x) = r(\theta_{0})\max_{S: |S|\le k(\theta_{0})}\|\mathcal{P}_{v_{S}}(\nabla_{\theta}\L(\theta; x)-g)\|+ \frac{r(\theta_{0})^{2}}{2}\sup_{\theta'\in \Theta_{0}}\left(\lambda_{\max}\left(H(\theta'; x) - H(\theta')\right)\right)_{+},
		$$
		and                                                       
		$$
		\Delta_1'(\theta,  g; x) = r(\theta_{0})\max_{S: |S|\le k(\theta_{0})}\|\mathcal{P}_{v_{S}}(\nabla_{\theta}\L(\theta; x)-g)\|+ \frac{r(\theta_{0})^{2}}{2}\sup_{\theta'\in \Theta_{0}}\left(\lambda_{\max}\left(H(\theta') - H(\theta'; x) \right)\right)_{+}.
		$$
		Then in our work above, we have shown that
		\[e^{-\Delta_1'(\theta,  g; x)} \leq \frac{f(x; \theta)}{f(x; \theta_{0})}\cdot e^{-( \theta_{0} - \theta)^\top (g - \nabla_\theta R(\theta))- \int_{t=0}^{1}t( \theta - \theta_{0})^\top  H(\theta_{t}) ( \theta - \theta_{0})\;\mathsf{d}t}\leq e^{\Delta_1(\theta,  g; x)}\]
		holds for all $x$, all $g$, and all $\theta\in\Theta_0$. This means that, for all $x,x'\in\X$, all $g$, and all $\theta\in\Theta_0$,
		\[\frac{ \ \frac{f(x'; \theta)}{f(x'; \theta_{0})} \ }{\frac{f(x; \theta)}{f(x; \theta_{0})}} \leq \frac{e^{\Delta_1(\theta,g;x')}}{e^{-\Delta_1'(\theta,  g; x)}}.\]
		In particular, on the event that $\hat\theta\in\Theta_0$,
		plugging in $g = \hat{g}$, we have
		\[\frac{ \ \frac{f(x'; \hat{\theta})}{f(x'; \theta_{0})} \ }{\frac{f(x; \hat{\theta})}{f(x; \theta_{0})}} \leq \frac{e^{\Delta_1(\hat{\theta},\hat{g};x')}}{e^{-\Delta_1'(\hat{\theta},  \hat{g}; x)}},\]
		again for all $x,x'\in\X$. Taking an expected value with respect to $X'\sim p_{\theta_0}(\cdot;\hat{\theta},\hat{g})$, then,
		\begin{multline*}		
			\frac{ \E_{p_{\theta_{0}}(\cdot\mid\hat\theta, \hat{g})}\left[\frac{f(X'; \hat{\theta})}{f(X'; \theta_{0})} \right]}{\frac{f(x; \hat{\theta})}{f(x; \theta_{0})}} 
			=\E_{p_{\theta_{0}}(\cdot\mid\hat\theta, \hat{g})}\left[\frac{ \frac{f(X'; \hat{\theta})}{f(X'; \theta_{0})} }{\frac{f(x; \hat{\theta})}{f(x; \theta_{0})}} \right]\\
			\leq \E_{p_{\theta_{0}}(\cdot\mid\hat\theta, \hat{g})}\left[ \frac{e^{\Delta_1(\hat{\theta},\hat{g};X')}}{e^{-\Delta_1'(\hat{\theta},  \hat{g}; x)}}\right]
			= \frac{ \E_{p_{\theta_{0}}(\cdot\mid\hat\theta, \hat{g})}\left[e^{\Delta_1(\hat{\theta},\hat{g};X')}\right]}{e^{-\Delta_1'(\hat{\theta},  \hat{g}; x)}}.
		\end{multline*}
		Therefore, on the event that $\hat\theta\in\Theta_0$, we have shown that 
		\[\left(1 - \frac{\frac{f(x; \hat{\theta})}{f(x; \theta_{0})}}{ \E_{p_{\theta_{0}}(\cdot\mid\hat\theta, \hat{g})}\left[\frac{f(X'; \hat{\theta})}{f(X'; \theta_{0})} \right]} \right)_+
		\leq 1 -  \frac{e^{-\Delta_1'(\hat{\theta},  \hat{g}; x)}}{ \E_{p_{\theta_{0}}(\cdot\mid\hat\theta, \hat{g})}\left[e^{\Delta_1(\hat{\theta},\hat{g};X')}\right]}.\]
		(Note that the right-hand side is always nonnegative, since the functions $\Delta_1,\Delta_1'$ both return only nonnegative values.)
		In particular, on the event that $\hat\theta\in\Theta_0$, 
		\begin{multline*}d_{\textnormal{TV}}(p_{\theta_{0}}(\cdot\mid \hat\theta, \hat{g}), p_{\hat\theta}(\cdot\mid \hat\theta, \hat{g}))
			= \E_{p_{\theta_{0}}(\cdot\mid \hat\theta, \hat{g})}\left[\left(1  - \frac{\frac{f(x; \hat{\theta})}{f(x; \theta_{0})}}{ \E_{p_{\theta_{0}}(\cdot\mid\hat\theta, \hat{g})}\left[\frac{f(X'; \hat{\theta})}{f(X'; \theta_{0})} \right]} \right)_+\right] \\\leq \E_{p_{\theta_{0}}(\cdot\mid \hat\theta, \hat{g})}\left[ 1 -  \frac{e^{-\Delta_1'(\hat{\theta},  \hat{g}; X)}}{ \E_{p_{\theta_{0}}(\cdot\mid\hat\theta, \hat{g})}\left[e^{\Delta_1(\hat{\theta},\hat{g};X')}\right]}\right]. \end{multline*}
		Combining both cases (i.e., $\hat\theta\in\Theta_0$ and $\hat\theta\not\in\Theta_0$), we see that
		\[d_{\textnormal{TV}}(p_{\theta_{0}}(\cdot\mid \hat\theta, \hat{g}), p_{\hat\theta}(\cdot\mid \hat\theta, \hat{g}))
		\leq \id_{\hat\theta\not\in\Theta_0} + \id_{\hat\theta\in\Theta_0} \E_{p_{\theta_{0}}(\cdot\mid \hat\theta, \hat{g})}\left[ 1 -  \frac{e^{-\Delta_1'(\hat{\theta},  \hat{g}; X)}}{ \E_{p_{\theta_{0}}(\cdot\mid\hat\theta, \hat{g})}\left[e^{\Delta_1(\hat{\theta},\hat{g};X')}\right]}\right].\]
		Therefore,
		\begin{align*}
			&\E_{Q_{\theta_{0}}^{*}}\left[d_{\textnormal{TV}}(p_{\theta_{0}}(\cdot\mid \hat\theta, \hat{g}), p_{\hat\theta}(\cdot\mid \hat\theta, \hat{g})) \right]\\
			&\leq  \P_{Q_{\theta_{0}}^{*}}\{\hat\theta\not\in\Theta_0\} +  \E_{Q_{\theta_{0}}^{*}}\left[\E_{p_{\theta_{0}}(\cdot\mid \hat\theta, \hat{g})}\left[ 1 -  \frac{e^{-\Delta_1'(\hat{\theta},  \hat{g}; x)}}{ \E_{p_{\theta_{0}}(\cdot\mid\hat\theta, \hat{g})}\left[e^{\Delta_1(\hat{\theta},\hat{g};X')}\right]}\right]\right]\\
			& \leq  \P_{Q_{\theta_{0}}^{*}}\{\hat\theta\not\in\Theta_0\} +  \E_{Q_{\theta_{0}}^{*}}\left[  \E_{p_{\theta_{0}}(\cdot\mid \hat\theta, \hat{g})}\left[\Delta_1'(\hat{\theta},  \hat{g}; X)\right]\right] + 1 -  \frac{1}{ \E_{Q_{\theta_{0}}^{*}}\left[\E_{p_{\theta_{0}}(\cdot\mid\hat\theta, \hat{g})}\left[e^{\Delta_1(\hat{\theta},\hat{g};X)}\right]\right]},
		\end{align*}
		where the last step follows the same calculation as in the analogous part of the proof of \cite[Theorem 1]{barber2020testing}.
		Next, by definition,  $(\hat\theta, \hat{g})\sim Q^{*}_{\theta_{0}}$ and $X\mid \hat\theta, \hat{g}\sim p_{\theta_{0}}(\cdot\mid  \hat\theta, \hat{g})$  is equivalent to the joint distribution of $(X,  \hat\theta(X, W), \hat{g}(X, W))$ when $(X, W)\sim P_{\theta_{0}}^{*}$. Therefore
		\begin{multline*}
			\E_{Q_{\theta_{0}}^{*}}\left[d_{\textnormal{TV}}(p_{\theta_{0}}(\cdot\mid\hat\theta, \hat{g}), p_{\hat\theta}(\cdot\mid \hat\theta, \hat{g})) \right]\\
			\le  \P_{P_{\theta_{0}}^{*}}\{\hat\theta\not\in\Theta_0\} + \E_{P_{\theta_{0}}^{*}}\left[ \Delta_1'(\hat\theta(X, W), \hat{g}(X, W); X) \right]  + 1 -\frac{1}{\E_{P_{\theta_{0}}^{*}}\left[ e^{\Delta_1(\hat\theta(X, W), \hat{g}(X, W); X)}\right] }.
		\end{multline*}
		Now define
		$$
		\Delta_{2}( x, w) = r(\theta_{0}) \sigma\max_{S: |S|\le k(\theta_{0})}\|\mathcal{P}_{v_{S}}w\|+ \frac{r(\theta_{0})^{2}}{2}\sup_{\theta'\in \Theta_{0}}\left(\lambda_{\max}\left(H(\theta'; x) - H(\theta')\right)\right)_{+},
		$$
		and
		$$
		\Delta'_{2}(x, w) = r(\theta_{0})\sigma\max_{S: |S|\le k(\theta_{0})}\|\mathcal{P}_{v_{S}}w\|  + \frac{r(\theta_{0})^{2}}{2}\sup_{\theta'\in \Theta_{0}}\left(\lambda_{\max}\left(H(\theta') - H(\theta'; x) \right)\right)_{+}.
		$$		
		Observe that $\hat{g}(X, W) = \nabla\L(\hat\theta; X, W) = \nabla\L(\hat\theta; X) + \sigma W$ by definition, and so we must have
		$$\Delta_1(\hat\theta(X, W), \hat{g}(X, W); X) = \Delta_{2}(X, W), \
		\Delta'_1(\hat\theta(X, W), \hat{g}(X, W); X) = \Delta'_{2}(X, W).$$
		Consequently, 
		\begin{multline*}	
			\E_{Q_{\theta_{0}}^{*}}\left[d_{\textnormal{TV}}(p_{\theta_{0}}(\cdot\mid\hat\theta, \hat{g}), p_{\hat\theta}(\cdot\mid \hat\theta, \hat{g})) \right]\\
			\le  \P_{P_{\theta_{0}}^{*}}\{\hat\theta\not\in\Theta_0\} + \E_{P_{\theta_{0}}^{*}}\left[ \Delta_2'(X,W) \right]  + \left(1 -\frac{1}{\E_{P_{\theta_{0}}^{*}}\left[ e^{\Delta_2(X,W)}\right] }\right).
		\end{multline*}	
		
		Next let $\calE_{\textnormal{SSOSP}}$  be the event that $(X, W)\in\Omega_{\textnormal{SSOSP}}$. Recall that  $P_{\theta_{0}}^{*}$ is the distribution of $(X, W) \sim P_{\theta_{0}}\times \mathcal{N}(0, \frac{1}{d}\I_{d})$ conditional on $\calE_{\textnormal{SSOSP}}$. Then, 
		following the exact same steps as the analogous part of the proof of \cite[Theorem 1]{barber2020testing}, it holds that
		\begin{multline*}
			\E_{Q_{\theta_{0}}^{*}}\left[d_{\textnormal{TV}}(p_{\theta_{0}}(\cdot\mid \hat\theta, \hat{g}), p_{\hat\theta}(\cdot\mid \hat\theta, \hat{g})) \right]\\
			\le  \frac{\P\{\{\hat\theta\not\in\Theta_0\}\cap\calE_{\textnormal{SSOSP}}\} + \E\left[  \Delta'_{2}(X, W) \right] + \log\E\left[ e^{ \Delta_{2}(X, W)}\right]}{1-\P\{\calE_{\textnormal{SSOSP}}^{\complement}\}} \\
			\le  \frac{\delta(\theta_{0})+\tilde\delta(\theta_{0}) - \P(\calE_{\textnormal{SSOSP}}^{\complement})+ \E\left[  \Delta'_{2}(X, W) \right] + \log\E\left[ e^{ \Delta_{2}(X, W)}\right]}{1-\P\{\calE_{\textnormal{SSOSP}}^{\complement}\}} ,
		\end{multline*}
		where now probability and expectation are taken with respect to $(X, W) \sim P_{\theta_{0}}\times \mathcal{N}(0, \frac{1}{d}\I_{d})$, and where the last step holds by Assumption \ref{asm:SSOSP_general}, together with the assumption in the theorem.
		
		Next, for a standard normal vector $Z\sim \mathcal{N}(0, \I_{d})$ and 1-Lipschitz function $f$, we have $\log \E e^{\lambda f(Z)}\le \frac{\lambda^2}{2} + \lambda \E[f(Z)]$ for all $\lambda$ \citep{boucheron2013concentration}. 
		We can verify that $f(z) = \underset{S \subseteq[p]: |S|\le k(\theta_{0})}{\max}\|\mathcal{P}_{v_{S}}z\|$ is a 1-Lipschitz function, and  by definition of $h_v$, we have  $\E[f(Z)^2] = h_{v}(k(\theta_{0}))$. Then, since $\sqrt{d} W$ is a standard normal random vector, we have
		\begin{multline*}
			\log\E\left[ e^{2r(\theta_{0})\sigma\max_{S: |S|\le k(\theta_{0})}\|\mathcal{P}_{v_{S}}W\| }\right] 
			= 	\log	\E\left[ e^{\frac{2r(\theta_{0})\sigma}{\sqrt{d}}f(\sqrt{d} W)}\right] \\
			\le  \frac{2r(\theta_{0})^2\sigma^2}{d}  + 2r(\theta_{0})\sigma \sqrt{\frac{h_{v}(k(\theta_{0}))}{d}}. 
		\end{multline*}
		Next, we can assume that $2\sigma r(\theta_{0})\le d\sqrt{\frac{h_{v}(k(\theta_{0}))}{d}}$.
		(To see why, observe that $h_v(k(\theta_0))\geq h_v(1) \geq 1$. If this inequality fails, then $3\sigma r(\theta_0)\sqrt{\frac{h_{v}(k(\theta_{0})}{d}}
		\geq \frac{3\sigma r(\theta_0)}{\sqrt{d}} \geq 1$, and so the bound in the theorem holds trivially since total variation distance can never exceed 1.) Then we have
		\[	\log\E\left[ e^{2r(\theta_{0})\sigma\max_{S: |S|\le k(\theta_{0})}\|\mathcal{P}_{v_{S}}W\| }\right]  \leq 3r(\theta_{0})\sigma \sqrt{\frac{h_{v}(k(\theta_{0}))}{d}}.\]
		Next, combining Cauchy--Schwarz and Assumption \ref{asm:L_H} we have 
		\begin{equation*}
			\begin{split}
				& \log\E\left[ e^{ \Delta_{2}(X, W)}\right] \\
				&\le  \frac{1}{2}\log\E\left[ e^{2r(\theta_{0})\sigma  \max_{S: |S|\le k(\theta_{0})}\|\mathcal{P}_{v_{S}}W\| }\right] + \frac{1}{2}\log\E\left[ e^{r(\theta_{0})^{2} \sup_{\theta'\in \Theta_0}\left(\lambda_{\max}\left(H(\theta'; x) - H(\theta')\right)\right)_{+}}\right]\\
				&\le 1.5 r(\theta_{0})\sigma \sqrt{\frac{h_{v}(k(\theta_{0}))}{d}} + \frac{\epsilon(\theta_{0})}{2}.
			\end{split}
		\end{equation*}
		Similarly, by Jensen's inequality, we have
		\begin{equation*}
			\begin{split}
				& \E\left[ \Delta'_{2}(X, W) \right] \\
				&=\E\left[ r(\theta_{0})\sigma  \max_{S: |S|\le k(\theta_{0})}\|\mathcal{P}_{v_{S}}W\|  \right] +  \frac{1}{2}\E \left[ r(\theta_{0})^{2}\sup_{\theta'\in \Theta_0}\left(\lambda_{\max}\left(H(\theta') - H(\theta'; x) \right)\right)_{+} \right]\\
				&\le  \frac{1}{2}\log\E\left[ e^{2r(\theta_{0})\sigma  \max_{S: |S|\le k(\theta_{0})}\|\mathcal{P}_{v_{S}}W\|  }\right] +  \frac{1}{2}\E \left[ r(\theta_{0})^{2}\sup_{\theta'\in \Theta_0}\left(\lambda_{\max}\left(H(\theta') - H(\theta'; x) \right)\right)_{+} \right]\\
				&\le 1.5 r(\theta_{0})\sigma \sqrt{\frac{h_{v}(k(\theta_{0}))}{d}}+ \frac{\epsilon(\theta_{0})}{2}.
			\end{split}
		\end{equation*}
		Therefore,			\begin{multline*}  
			\E_{Q_{\theta_{0}}^{*}}\left[d_{\textnormal{TV}}(p_{\theta_{0}}(\cdot\mid \hat\theta, \hat{g}), p_{\hat\theta}(\cdot\mid \hat\theta, \hat{g})) \right]\\
			\le \frac{\delta(\theta_{0})+\tilde\delta(\theta_{0}) - \P(\calE_{\textnormal{SSOSP}}^{\complement})+3\sigma r(\theta_{0})\sqrt{\frac{h_{v}(k(\theta_{0}))}{d}} +\epsilon(\theta_{0})}{1-\P\{\calE_{\textnormal{SSOSP}}^{\complement}\}} \\
			\le 3\sigma r(\theta_{0})\sqrt{\frac{h_{v}(k(\theta_{0}))}{d}} + \epsilon(\theta_{0}) + \delta(\theta_{0})+\tilde\delta(\theta_{0}),
		\end{multline*}
		where to verify the last step, we can apply the fact that $\frac{a-b}{1-b}\leq a$ for any $a\in[0,1]$ and $b\in[0,1)$ (note that we can assume that $3\sigma r(\theta_{0})\sqrt{\frac{h_{v}(k(\theta_{0}))}{d}} + \epsilon(\theta_{0}) + \delta(\theta_{0})+\tilde\delta(\theta_{0})\leq 1$, as otherwise the bound holds trivially since total variation distance can never exceed 1).
		This completes the proof.		\end{proof}

	\subsection{Proof of Theorem \ref{theorem_gaussian}: constrained aCSS for the Gaussian linear model}  \label{sec:proof_gaussian}
	Following the same reasoning as in the proof of Theorem~\ref{theorem_sparse}, we only need to bound 		
	\[ \E_{Q_{\theta_{0}}^{*}}\left[d_{\textnormal{TV}}(p_{\theta_{0}}(\cdot\mid  \hat\theta, \hat{g}), p_{\hat\theta}(\cdot\mid \hat\theta, \hat{g}))\right],
	\]
	where, as in that proof, $Q_{\theta_{0}}^{*}$ is the joint distribution of $(\hat\theta(X, W), \hat{g}(X, W))$ under $(X, W) \sim P_{\theta_{0}}^{*}$,
	where $P_{\theta_{0}}^{*}$ is the distribution of $(X, W) \sim P_{\theta_{0}}\times \mathcal{N}(0, \frac{1}{d}\I_{d})$ conditional on the event $(X, W)\in \Omega_{\textnormal{SSOSP}}$.
	For the Gaussian case, by our assumption~\eqref{eqn:assume_R} on $R(\theta)$,  the event $(X, W)\in \Omega_{\textnormal{SSOSP}}$ holds almost surely,
	and so $Q_{\theta_0}^*$ is in fact the joint distribution of $(\hat\theta(X, W), \hat{g}(X, W))$ under $(X, W) \sim \mathcal{N}(Z\theta_0,\nu^2\I_n)\times\mathcal{N}(0,\frac{1}{d}\I_{d})$.
	
	Next, applying Lemma~\ref{lemma:conditional_density}, we calculate
	\[p_{\theta_{0}}(x\mid  \hat\theta, \hat{g}) \propto \exp\left\{-\frac{1}{2\nu^2}\|x - Z\theta_0\|^2 - \frac{1}{2\sigma^2/d}\left\|\hat{g} - \left(\frac{1}{\nu^2}Z^\top(Z\hat\theta - x) + \nabla_\theta R(\hat\theta)\right)\right\|^2\right\}\]
	and
	\[p_{\hat\theta}(x\mid  \hat\theta, \hat{g}) \propto \exp\left\{-\frac{1}{2\nu^2}\|x - Z\hat\theta\|^2 - \frac{1}{2\sigma^2/d}\left\|\hat{g} - \left(\frac{1}{\nu^2}Z^\top(Z\hat\theta - x) + \nabla_\theta R(\hat\theta)\right)\right\|^2\right\},\]
	which simplifies to the normal distributions
	\[
	\mathcal{N}\left(Z\hat\theta +\left(\I_{n}+\frac{d}{\sigma^2\nu^2}ZZ^\top \right)^{-1}\left[ \frac{d}{\sigma^2} Z( \nabla_\theta R(\hat\theta) - \hat{g}) + Z(\theta_0 - \hat\theta)\right], \nu^2\left(\I_{n} + \frac{d}{\sigma^2\nu^2}ZZ^\top \right)^{-1}\right)\]
	and 
	\[		\mathcal{N}\left(Z\hat\theta +\left(\I_{n}+\frac{d}{\sigma^2\nu^2}ZZ^\top \right)^{-1}\left[ \frac{d}{\sigma^2} Z( \nabla_\theta R(\hat\theta) - \hat{g})\right], \nu^2\left(\I_{n} + \frac{d}{\sigma^2\nu^2}ZZ^\top \right)^{-1}\right),\]
	respectively.
	For any $\mu,\mu'\in\R^n$ and any positive definite $\Sigma\in\R^{n\times n}$,
	\begin{multline*}
		d_{\textnormal{TV}}\big(\mathcal{N}(\mu,\Sigma),\mathcal{N}(\mu',\Sigma)\big) \leq \sqrt{\frac{1}{2}d_{\textnormal{KL}}\big(\mathcal{N}(\mu,\Sigma)\big\|\mathcal{N}(\mu',\Sigma)\big)}\\
		= \sqrt{\frac{1}{2}\cdot \frac{1}{2}(\mu-\mu')^\top\Sigma^{-1}(\mu-\mu')} = \frac{1}{2}\|\Sigma^{-1/2}(\mu-\mu')\|,
	\end{multline*}
	where $d_{\textnormal{KL}}$ is the Kullback--Leibler divergence, and the first step holds by Pinsker's inequality.
	Applying this calculation to the distributions $p_{\theta_{0}}(\cdot\mid  \hat\theta, \hat{g})$ and $p_{\hat\theta}(\cdot\mid  \hat\theta, \hat{g})$ computed above, we have
	\begin{align*}
		d_{\textnormal{TV}}(p_{\theta_{0}}(\cdot\mid  \hat\theta, \hat{g}), p_{\hat\theta}(\cdot\mid \hat\theta, \hat{g}))
		& \leq \frac{1}{2\nu}\left\|\left(\I_{n} + \frac{d}{\sigma^2\nu^2}ZZ^\top \right)^{1/2} \cdot \left(\I_{n}+\frac{d}{\sigma^2\nu^2}ZZ^\top \right)^{-1}Z(\hat\theta - \theta_0)\right\|\\
		& \leq \frac{1}{2\nu}\left\|\left(\I_{n} + \frac{d}{\sigma^2\nu^2}ZZ^\top \right)^{-1/2} Z\right\|\cdot\|\hat\theta - \theta_0\|\\
		& = \frac{\sigma}{2\sqrt{d}}\left\|\left(\frac{\sigma^2\nu^2}{d}\I_{n} + ZZ^\top \right)^{-1/2} Z\right\|\cdot\|\hat\theta - \theta_0\|
		\leq \frac{\sigma}{2\sqrt{d}}\cdot\|\hat\theta - \theta_0\|.
	\end{align*}
	
	On the event that $\|\hat\theta - \theta_0\|\leq r(\theta_0)$ we therefore have $d_{\textnormal{TV}}(p_{\theta_{0}}(\cdot\mid  \hat\theta, \hat{g}), p_{\hat\theta}(\cdot\mid \hat\theta, \hat{g}))\leq \frac{\sigma}{2\sqrt{d}}r(\theta_0)$. Since total variation distance is always bounded by 1, and we therefore have
	\begin{multline*}\E_{Q_{\theta_{0}}^{*}}\left[d_{\textnormal{TV}}(p_{\theta_{0}}(\cdot\mid  \hat\theta, \hat{g}), p_{\hat\theta}(\cdot\mid \hat\theta, \hat{g}))\right]\\
		\leq \frac{\sigma}{2\sqrt{d}}r(\theta_0) \cdot \P_{Q_{\theta_{0}}^{*}}\{\|\hat\theta - \theta_0\|\leq r(\theta_0)\}+\P_{Q_{\theta_{0}}^{*}}\{\|\hat\theta - \theta_0\|> r(\theta_0)\}  \\\leq \frac{\sigma}{2\sqrt{d}}r(\theta_0) + \delta(\theta_0),\end{multline*}
	since $\|\hat\theta - \theta_0\|\leq r(\theta_0)$ holds with probability at least $1-\delta(\theta_0)$ by assumption.

	\subsection{Proof of Lemma  \ref{lemma:conditional_density}: conditional density}\label{sec:proof_conditional_density}
	We begin by introducing some  notation for remaining proofs. 
	For $A\in\R^{r\times d}, b\in\R^{r}$, define a subset of $\Theta$ with active set $\A\subseteq[r]$  as follows:
	$$\Theta_{A, b, \A} = \{\theta\in\Theta: A_i^\top\theta = b_{i}, \forall i\in\A;  A_i^\top\theta< b_{i}, \forall i\in[r]\backslash \A\},$$
	where $A_i$ is the $i$th row of $A$.
	We will write $\Theta_{\A}  = \Theta_{A, b, \A}$ when $A, b$ are fixed.
	As before, we define $\A(\theta) = \{i\in[r]:A_i^\top\theta = b_i\}$, the active set for a given $\theta\in\Theta$, so that
	we have $\theta\in\Theta_{A,b,\A(\theta)}$ by definition.
	
	Before proving Lemma~\ref{lemma:conditional_density}, we need a preliminary result, which we will prove below.
	
	\begin{lemma}\label{lemma:bij}
		For  index set $\A\in[r]$, define
		\[
		\Omega_{\textnormal{SSOSP},  \A} = \left\{(x, w) \in \X\times\R^{d}  : \hat\theta(x, w )  \textnormal{ is a SSOSP of~\eqref{eqn:def_thetahat}, and $\A(\hat\theta(x,w))=\A$}\right\},\]
		and
		\begin{multline*}\Psi_{\textnormal{SSOSP},   \A} =  \bigg\{
			(x, \theta, g)\in\X \times\Theta_{\A}\times\R^d : \exists w\in\R^d\textnormal{ such that }\\ \theta = \hat\theta(x,w)\textnormal{ is a SSOSP of~\eqref{eqn:def_thetahat}, and $g=\hat g(x,w)$} \bigg\}.\end{multline*}
		Define a map $\psi_{\A}$ from $\Omega_{\textnormal{SSOSP},  \A} $  as
		\begin{equation*}
			\psi_{\A}: (x, w )\rightarrow  \left(x,  \hat\theta(x, w),  \hat{g}(x, w) \right).
		\end{equation*}
		Then $\psi_{\A}$ is a bijection between $\Omega_{\textnormal{SSOSP}, \A}$ and $\Psi_{\textnormal{SSOSP}, \A}$ with inverse 
		\begin{equation*}
			\psi_{\A}^{-1}:  (x,  \theta, g)\rightarrow \left(x,  \frac{g - \nabla_{\theta}\L(\theta; x)}{\sigma} \right).
		\end{equation*} 
	\end{lemma}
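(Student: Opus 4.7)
The plan is to verify the bijection directly from the definitions, leveraging the explicit form
\[\L(\theta;x,w) = -\log f(x;\theta) + R(\theta) + \sigma w^\top\theta.\]
The key algebraic observation is that since the noise $w$ enters $\L$ only linearly, $\nabla_\theta\L(\theta;x,w) = \nabla_\theta\L(\theta;x) + \sigma w$. Consequently, whenever $g = \nabla_\theta\L(\theta;x,w)$ for some triple $(x,\theta,g)$, the noise vector $w$ is forced to equal $(g - \nabla_\theta\L(\theta;x))/\sigma$. This single identity motivates the stated form of $\psi_\A^{-1}$, delivers injectivity of $\psi_\A$, and supplies the uniqueness of the witness $w$ that appears in the definition of $\Psi_{\textnormal{SSOSP},\A}$.

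First I would confirm that $\psi_\A$ maps $\Omega_{\textnormal{SSOSP},\A}$ into $\Psi_{\textnormal{SSOSP},\A}$. This is essentially by construction: for $(x,w)\in\Omega_{\textnormal{SSOSP},\A}$, setting $\theta=\hat\theta(x,w)$ and $g=\hat{g}(x,w)$, we have $\theta\in\Theta_{\A}$ since its active set is $\A$, and the original $w$ itself serves as the witness required in the defining condition of $\Psi_{\textnormal{SSOSP},\A}$.

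Next I would verify $\psi_\A^{-1}\circ\psi_\A = \mathrm{id}$ on $\Omega_{\textnormal{SSOSP},\A}$: starting from $(x,w)$, applying $\psi_\A$ yields $\hat g = \nabla_\theta\L(\hat\theta;x) + \sigma w$, after which $\psi_\A^{-1}$ recovers $w$ by the gradient identity. For the reverse direction, I would take any $(x,\theta,g)\in\Psi_{\textnormal{SSOSP},\A}$, invoke the existence of some witness $w^\dagger$ for which $\theta=\hat\theta(x,w^\dagger)$ is a SSOSP with active set $\A$ and $g=\hat g(x,w^\dagger)$, and use the gradient identity to conclude $w^\dagger = (g - \nabla_\theta\L(\theta;x))/\sigma$, which matches precisely the output of $\psi_\A^{-1}$. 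Applying $\psi_\A$ to this $w^\dagger$ then returns $(x,\theta,g)$ by construction, finishing the verification that $\psi_\A\circ\psi_\A^{-1} = \mathrm{id}$ on $\Psi_{\textnormal{SSOSP},\A}$.

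The main obstacle is genuinely mild: the proof is essentially bookkeeping once the gradient identity is noted. The one point that warrants care is the uniqueness of the witness $w^\dagger$ appearing in the definition of $\Psi_{\textnormal{SSOSP},\A}$; without it, $\psi_\A^{-1}$ could not be expressed as a single explicit formula. No assumptions beyond Assumption~\ref{asm:reg} are required, since the linear-in-$w$ structure of $\L$ is already built into the setup.
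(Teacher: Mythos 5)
Your proposal is correct and follows essentially the same route as the paper's proof: both arguments rest on the observation that $w$ enters $\L$ only linearly, so the identity $\hat g = \nabla_\theta\L(\hat\theta;x) + \sigma w$ uniquely determines $w$ from $(x,\theta,g)$, which yields the explicit inverse; the paper phrases this as injectivity plus a two-sided verification that $\Psi_{\textnormal{SSOSP},\A}$ is the image of $\psi_\A$, which is equivalent to your check that the two compositions are identities.
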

	To give intuition for this result, 
	the bijection between $\Omega_{\textnormal{SSOSP}, \A}$ and $\Psi_{\textnormal{SSOSP}, \A}$ helps us see why we need to condition on both $\hat\theta$ and $\hat g$, rather than on $\hat\theta$ alone as for the (unconditional) aCSS of \cite{barber2020testing}.  Intuitively, the estimator $\hat\theta$ itself cannot reflect enough information for data $(x,w)$ when constraints appear in the optimization step, because $\hat\theta$ may have lower effective dimension (e.g., if one constraint is active, then the value of $\hat\theta$ has $d-1$ degrees of freedom; this means that $(x,\hat\theta)$ cannot contain 
	sufficient information to recover $(x,w)$, since $w$ is $d$-dimensional). 
	In the unconstrained case, $\hat{g} \equiv 0$ due to the first-order optimality conditions, so conditioning on $(\hat\theta,\hat{g})$ is equivalent to simply conditioning on $\hat\theta$, in that case.
	
	With this result in place, we are now ready to prove Lemma \ref{lemma:conditional_density}, which calculates the conditional density.
	
	\begin{proof}[Proof of Lemma \ref{lemma:conditional_density}]
		Consider the joint distribution $(X, W)\sim P_{\theta_{0}}\times \mathcal{N}(0, \frac{1}{d}\I_{d})$. By assumption in the lemma, the event $(X, W )\in\Omega_{\textnormal{SSOSP}, \A}$ has positive probability. Then the joint density of $(X, W )$, conditioning on the event that $\hat\theta(X, W)$ is a SSOSP of \eqref{eqn:def_thetahat} with active set $\A$, i.e., $(X, W )\in\Omega_{\textnormal{SSOSP}, \A}$, is proportional to the function
		$$
		h_{\theta_0}(x, w) = f(x;\theta_{0})\exp\left\{-\frac{d}{2}\|w \|^{2} \right\}\id_{(x, w )\in \Omega_{\textnormal{SSOSP}, \A}}.
		$$		 
		By Lemma \ref{lemma:bij}, $\psi_{\A}$ is a bijection between $\Omega_{\textnormal{SSOSP}, \A}$ and $\Psi_{\textnormal{SSOSP}, \A}$. For any measurable set $I_{\A}\subseteq \Psi_{\textnormal{SSOSP}, \A}$, define  $$\psi_{\A}^{-1}(I_{\A}) = \{(x, w )\in \Omega_{\textnormal{SSOSP}, \A}: \psi_{\A}(x, w )\in I_{\A})\}.$$
		Then,  we calculate
		\begin{align*}
			&\P\left\{(X, \hat\theta(X, W), \hat{g}(X, W))\in I_{\A}\mid (X,\hat\theta(X,W),\hat{g}(X,W))\in\Psi_{\textnormal{SSOSP},\A} \right\} \\
			&= \P\left\{(X, W )\in \psi_{\A}^{-1}(I_{\A})\mid (X,W)\in\Omega_{\textnormal{SSOSP},\A} \right\}\textnormal{ by Lemma~\ref{lemma:bij}}\\
			& = \frac{\int_{\psi_{\A}^{-1}(I_{\A})}h_{\theta_0}(x, w)\;\mathsf{d}\nu_{\X}(x)\;\mathsf{d}w }{\int_{\X\times\R^{d}}h_{\theta_0}(x', w')\;\mathsf{d}\nu_{\X}(x')\;\mathsf{d}w' }\\
			& = \frac{\int_{\psi_{\A}^{-1}(I_{\A})} f(x;\theta_{0})\exp\left\{-\frac{d}{2}\|w \|^{2} \right\}\id_{(x, w )\in \Omega_{\textnormal{SSOSP}, \A}}\;\mathsf{d}\nu_{\X}(x)\;\mathsf{d}w }{\int_{\X\times\R^{d}}h_{\theta_0}(x', w')\;\mathsf{d}\nu_{\X}(x')\;\mathsf{d}w' }\textnormal{ by definition of $h_{\theta_0}(x,w)$}\\
			& = \frac{\int_{\psi_{\A}^{-1}(I_{\A})} f(x;\theta_{0})e^{-\frac{d}{2\sigma^2}\|\hat{g}(x,w)-\nabla_\theta\L(\hat\theta(x,w);x) \|^{2} }\id_{(x, w )\in \Omega_{\textnormal{SSOSP}, \A}}\;\mathsf{d}\nu_{\X}(x)\;\mathsf{d}w }{\int_{\X\times\R^{d}}h_{\theta_0}(x', w')\;\mathsf{d}\nu_{\X}(x')\;\mathsf{d}w' }\\
			& = \frac{\int_{\X} f(x;\theta_{0})\int_{\R^d}e^{-\frac{d}{2\sigma^2}\|\hat{g}(x,w)-\nabla_\theta\L(\hat\theta(x,w);x) \|^{2} }\id_{(x, w )\in  \psi_{\A}^{-1}(I_{\A})}\;\mathsf{d}w \;\mathsf{d}\nu_{\X}(x)}{\int_{\X\times\R^{d}}h_{\theta_0}(x', w')\;\mathsf{d}\nu_{\X}(x')\;\mathsf{d}w' } ,
		\end{align*}
		where the last step holds  since $ \psi_{\A}^{-1}(I_{\A})\subseteq \Omega_{\textnormal{SSOSP},\A}$.
		
		Next, we need to reparameterize $\theta$ and $g$, since given the active set $\A$, these variables must lie in lower-dimensional subspaces of $\Theta$ and of $\R^d$, respectively.
		Let $k=\textnormal{rank}(\textnormal{span}(A_\A)^\perp)$, let $U_\A\in\R^{d\times k}$ be an orthonormal basis for $\textnormal{span}(A_\A)^\perp$ as before,
		and let $V_\A\in\R^{d\times (d-k)}$ be an orthonormal basis for $\textnormal{span}(A_\A)$, so that $(U_\A \ V_\A)\in\R^{d\times d}$ is an orthogonal matrix. 
		Define $\Theta' = \{U_\A^\top \theta : \theta\in\Theta_\A\}\subseteq\R^{k}$. Then $\theta' = U_\A^\top \theta$ and $g' = V_\A^\top g$ are a reparametrization of $(\theta,g)$,
		which now take values in $\Theta'$ and $\R^{d-k}$, respectively. To see why, let $\theta_*\in\R^{d-k}$ be the unique value such that $  \theta_* = V_\A^\top \theta$ for all $\theta\in\Theta_\A$, i.e.,
		$\theta_*$ is determined by the active constraints (specifically, if $A_\A = MDV_\A^\top$ is a singular value decomposition, then $\theta_* = D^{-1}M^\top b_\A$). Then $\theta = U_\A\theta' + V_\A\theta_*$, and $g = V_\A g'$, whenever $(\theta,g)$ corresponds to a SSOSP with active set $\A$
		(i.e., for any $\theta\in\Theta_\A$ and  $g\in\textnormal{span}(A_\A)$).
		
		Next, for $\theta\in\Theta_\A$ and $g\in\textnormal{span}(A_\A)$, if $(x,\theta,g)\in \Psi_{\textnormal{SSOSP},\A}$ then by the SSOSP conditions we must have some $w$ such that
		$ \theta = \hat\theta(x,w)$  is a SSOSP of~\eqref{eqn:def_thetahat}, and $g=\hat g(x,w) =  \nabla_\theta\L(\theta;x,w) =  \nabla_\theta\L(\theta;x) + \sigma w$. 
		Combining with the work above, we can write
		\[w = \phi_x(\theta',g') \textnormal{ where }\phi_x(\theta',g') = \frac{V_\A g' -   \nabla_\theta\L(U_\A\theta'+V_\A\theta_*;x)}{\sigma},\]
		and so 
		\[\theta = \hat\theta(x,w) = \hat\theta\left(x, \phi_x(\theta',g')\right), \ g = \hat{g}(x,w) = \hat{g}\left(x,  \phi_x(\theta',g')\right).\]
		Therefore,
		\[\theta' = U_\A^\top \hat\theta\left(x, \phi_x(\theta',g')\right), \ g' = V_\A^\top  \hat{g}\left(x,  \phi_x(\theta',g')\right).\]
		We can also calculate
		\[\nabla_{\theta'} \phi_x(\theta',g') = -\sigma^{-1}U_\A^\top  \nabla^2_\theta \L(U_\A\theta'+V_\A\theta_*;x)
		\]
		and\[\nabla_{g'} \phi_x(\theta',g') = \sigma^{-1} V_\A^\top .\]
		Therefore, 
		\begin{align*}
			\det\!\left(\nabla\phi_x(\theta',g')\right)
			&= \det\!\left(
			\begin{array}{c}
				\nabla_{\theta'}\phi_x(\theta',g')\\
				\nabla_{g'}\phi_x(\theta',g')
			\end{array}\right)\\
			&= \det\!\left(
			\left(\begin{array}{c}
				\nabla_{\theta'}\phi_x(\theta',g')\\
				\nabla_{g'}\phi_x(\theta',g')
			\end{array}\right)(U_\A\ V_\A)\right)\\
			&= \det\!\left(
			\begin{array}{c@{\ }c}
				\nabla_{\theta'}\phi_x(\theta',g')U_\A &
				\nabla_{\theta'}\phi_x(\theta',g')V_\A \\
				\nabla_{g'}\phi_x(\theta',g')U_\A &
				\nabla_{g'}\phi_x(\theta',g')V_\A
			\end{array}\right)\\
			&= \det\!\left(
			\begin{array}{c@{\ }c}
				-\sigma^{-1}U_\A^\top \, \nabla^2_\theta \L(\,\cdot\,;x)\, U_\A &
				-\sigma^{-1}U_\A^\top \, \nabla^2_\theta \L(\,\cdot\,;x)\, V_\A \\
				\sigma^{-1}V_\A^\top U_\A &
				\sigma^{-1}V_\A^\top V_\A
			\end{array}\right)\\
			&\quad(\text{where }\;\nabla^2_\theta \L(\,\cdot\,;x)
			= \nabla^2_\theta \L(U_\A\theta' + V_\A\theta_*;x))\\
			&= \det\!\left(
			\begin{array}{c@{\ }c}
				-\sigma^{-1}U_\A^\top \, \nabla^2_\theta \L(\,\cdot\,;x)\, U_\A &
				-\sigma^{-1}U_\A^\top \, \nabla^2_\theta \L(\,\cdot\,;x)\, V_\A \\
				0 & \sigma^{-1}\I_{d-k}
			\end{array}\right)\\
			&= (-1)^k \sigma^{-d}\,
			\det\!\left(U_\A^\top \nabla^2_\theta \L(U_\A\theta'+V_\A\theta_*;x)U_\A\right).
		\end{align*}
		
		From this point on, following similar arguments as \cite[Section B.4]{barber2020testing} to verify the validity of applying the change-of-variables formula for integration, we calculate
		\begin{multline*}\int_{\R^d}e^{-\frac{d}{2\sigma^2}\|\hat{g}(x,w)-\nabla_\theta\L(\hat\theta(x,w);x) \|^{2} }\id_{(x, w )\in  \psi_{\A}^{-1}(I_{\A})}\;\mathsf{d}w\\
			= \sigma^{-d}\int_{\Theta'\times \R^{d-k}} e^{-\frac{d}{2\sigma^2}\|V_\A g'-\nabla_\theta\L(U_\A\theta'+V_\A\theta_*;x) \|^{2} }\cdot  \textnormal{det}_{\A,\theta',x}\cdot \id_{(x, \phi_x(\theta',g') )\in  \psi_{\A}^{-1}(I_{\A})}\;\mathsf{d}g'\;\mathsf{d}\theta',\end{multline*}
		where we write $\textnormal{det}_{\A,\theta',x} = \textnormal{det}\left(U_\A^\top  \nabla^2_\theta \L(U_\A\theta'+V_\A\theta_*;x) U_\A\right)$
		(note that this determinant must be positive, by the SSOSP conditions). We can also verify from our definitions that $ \id_{(x, \phi_x(\theta',g') )\in  \psi_{\A}^{-1}(I_{\A})} = \id_{(x,U_\A\theta'+V_\A\theta_*,V_\A g')\in I_\A}$. 
		With this calculation in place we then have
		\begin{multline*}
			\P\left\{(X, \hat\theta(X, W), \hat{g}(X, W))\in I_{\A}\mid (X,\hat\theta(X,W),\hat{g}(X,W))\in\Psi_{\textnormal{SSOSP},\A} \right\} \\
			= \sigma^{-d}\int_{\X} f(x;\theta_{0})\int_{\Theta'\times \R^{d-k}} \frac{\exp\{-\frac{d}{2\sigma^2}\|V_\A g'-\nabla_\theta\L(U_\A\theta'+V_\A\theta_*;x) \|^{2} \}}{\int_{\X\times\R^{d}}h_{\theta_0}(x', w')\;\mathsf{d}\nu_{\X}(x')\;\mathsf{d}w' }\\
			\cdot \textnormal{det}_{\A,\theta',x} \cdot \id_{(x,U_\A\theta'+V_\A\theta_*,V_\A g')\in I_\A}\;\mathsf{d}g'\;\mathsf{d}\theta' \;\mathsf{d}\nu_{\X}(x),\end{multline*}
		In particular, this verifies that 			
		\[\frac{\sigma^{-d} f(x;\theta_{0}) \cdot e^{ - \frac{d}{2\sigma^2}\|V_\A g'-\nabla_\theta\L(U_\A\theta'+V_\A\theta_*;x) \|^2}\cdot\textnormal{det}_{\A,\theta',x} \cdot\id_{(x,U_\A\theta'+V_\A\theta_*,V_\A g')\in \Psi_{\textnormal{SSOSP},\A}}}{\int_{\X\times\R^{d}}h_{\theta_0}(x', w')\;\mathsf{d}\nu_{\X}(x')\;\mathsf{d}w' }\]
		is the joint density of $(X,U_\A^\top \hat{\theta},V_\A^\top \hat{g}) = (X, U_\A^\top \hat\theta(X, W),V_\A^\top  \hat{g}(X, W))$, conditional on the event $(X, \hat\theta(X, W), \hat{g}(X, W))\in\Psi_{\textnormal{SSOSP},\A}$. 
		Therefore, the conditional density of $X\mid (U_\A^\top \hat\theta,V_\A^\top \hat{g})$ (again conditioning on this same event) can be written as
		\[\propto f(x;\theta_{0}) \cdot e^{ - \frac{d}{2\sigma^2}\|V_\A g'-\nabla_\theta\L(U_\A\theta'+V_\A\theta_*;x) \|^2}\cdot\textnormal{det}_{\A,\theta',x} \cdot\id_{(x,U_\A\theta'+V_\A\theta_*,V_\A g')\in \Psi_{\textnormal{SSOSP},\A}}.\]
		Moreover, $U_\A^\top \hat\theta$ and $V_\A^\top\hat g$ uniquely determine $\hat\theta$ and $\hat g$ on the event that $\A$ is the active set, as described earlier,
		so we can equivalently condition on $(\hat\theta,\hat g)$ and  can rewrite this density as
		\begin{equation} 
			p_{\theta_{0}}(\cdot\mid \hat\theta, \hat{g}) \propto f(x;\theta_{0}) \cdot e^{ - \frac{d}{2\sigma^2}\|\hat{g}- \nabla_\theta \L(\hat\theta;x)\|^2}\cdot\textnormal{det}\left(U_\A^\top \nabla^2_\theta\L(\hat\theta;x) U_\A\right)\cdot\id_{(x,\hat\theta,\hat{g})\in \Psi_{\textnormal{SSOSP},\A}}.
		\end{equation}
		Finally, by definition, $(x,\hat\theta,\hat{g})\in \Psi_{\textnormal{SSOSP},\A}$ if and only if $\hat\theta\in\Theta_\A$ and $x\in\X_{\hat\theta,\hat{g}}$, so $\id_{(x,\hat\theta,\hat{g})\in \Psi_{\textnormal{SSOSP},\A}}$ $= \id_{x\in\X_{\hat\theta,\hat{g}}}$ for $\hat\theta\in\Theta_\A$. 
	\end{proof}

	\subsection{Proof of Theorem \ref{theorem_penalized}: error control for aCSS with an $\ell_{1}$ penalty} \label{sec:proof_l1pen}
	At a high level, the strategies underlying the proofs of Theorems~\ref{theorem_general},  \ref{theorem_sparse}, and~\ref{theorem_gaussian} are fundamentally the same.
	In the constrained case, first Lemma~\ref{lemma:conditional_density} is applied to calculate the conditional density of $X$ given $(\hat\theta,\hat g)$
	as the expression $p_{\theta_0}(\cdot\mid \hat\theta,\hat g)$ given in the lemma. This then justifies
	the sampling distribution used for the copies $\tilde X^{(m)}$, i.e., $p_{\hat\theta}(\cdot\mid \hat\theta,\hat g)$, and the distance to exchangeability is then bounded by
	bounding $d_{\textnormal{TV}}(p_{\theta_0}(\cdot\mid \hat\theta,\hat g), p_{\hat\theta}(\cdot\mid \hat\theta,\hat g))$. 
	
	In examining the $\ell_{1}$-penalized case, the arguments are exactly identical. First, by applying Lemma~\ref{lemma:conditional_density_l1pen}
	in place of Lemma~\ref{lemma:conditional_density}, the reasoning of Section~\ref{sec:step1_proof_mainthm} 
	verifies that it suffices to bound $\E_{Q_{\theta_{0}}^{*}}\left[d_{\textnormal{TV}}(p_{\theta_{0}}(\cdot\mid  \hat\theta, \hat{g}), p_{\hat\theta}(\cdot\mid \hat\theta, \hat{g}))\right]$, where $Q_{\theta_0}^{*}$ is now defined as the distribution of $(\hat\theta(X,W), \hat{g}(X,W))$ conditioning on the event that $(X,W)\in \Omega_{\textnormal{SSOSP}, S}^{\textnormal{pen}}$ where
	\[
	\Omega_{\textnormal{SSOSP}}^{\textnormal{pen}} = \left\{(x, w) \in \X\times\R^{d}  : \hat\theta(x, w)  \textnormal{ is a SSOSP of~\eqref{eqn:def_thetahat_l1pen}} \right\},\]
	i.e., we are conditioning on the event 
	of finding a SSOSP for the $\ell_1$-penalized (rather than constrained) optimization problem. The calculation of the bound on this expected
	total variation distance is then identical to the constrained case.
	
	\subsection{Proof of Lemma \ref{lemma:conditional_density_l1pen}: conditional density for aCSS with an  $\ell_{1}$ penalty}
	Now we revisit the proof of Lemma~\ref{lemma:conditional_density} and  revise it for the $\ell_{1}$-penalized case.  
	Define a subset of $\Theta$ with support $S$ as
	$$\Theta_S = \{\theta\in\Theta: S(\theta) = S\}.$$ 
	Further define
	\[
	\Omega_{\textnormal{SSOSP}, S}^{\textnormal{pen}} = \left\{(x, w) \in \X\times\R^{d}  : \hat\theta(x, w)  \textnormal{ is a SSOSP of~\eqref{eqn:def_thetahat_l1pen}, and $S(\hat\theta(x,w))=S$} \right\}.\]
	By a result analogous to Lemma \ref{lemma:bij}, we have a bijection between $\Omega_{\textnormal{SSOSP}, S}^{\textnormal{pen}}$ and $\Psi_{\textnormal{SSOSP}, S}^{\textnormal{pen}}$, where
	\begin{multline*}\Psi_{\textnormal{SSOSP}, S}^{\textnormal{pen}}=  \bigg\{
		(x, \theta, g)\in\X \times\Theta_S \times\R^{d} : \exists w\in\R^d\textnormal{ such that }\\ \theta = \hat\theta(x,w)\textnormal{ is a SSOSP of~\eqref{eqn:def_thetahat_l1pen}, and $g=\hat g(x,w)$} \bigg\},\end{multline*}  
	which is defined by the map $
	\psi_S: (x, w )\rightarrow  \left(x,  \hat\theta(x, w),  \hat{g}(x, w) \right)$,  with inverse 
	$
	\psi_S^{-1}:  (x,  \theta, g)\rightarrow \left(x,  \frac{g - \nabla_{\theta}\L(\theta; x)}{\sigma} \right).
	$
	
	Consider the joint distribution $(X, W)\sim P_{\theta_{0}}\times \mathcal{N}(0, \frac{1}{d}\I_{d})$. 
	By assumption, the event $(X, W )\in\Omega_{\textnormal{SSOSP},S}^{\textnormal{pen}}$ has positive probability. Then the joint density of $(X, W )$, conditioning on the event that $\hat\theta(X, W)$ is a SSOSP of \eqref{eqn:def_thetahat_l1pen} with support $S$, i.e., $(X, W )\in\Omega_{\textnormal{SSOSP},S}^{\textnormal{pen}}$, is proportional to the function
	$$
	h_{\theta_0}(x, w) = f(x;\theta_{0})\exp\left\{-\frac{d}{2}\|w \|^{2} \right\}\id_{(x, w )\in \Omega_{\textnormal{SSOSP}, S}^{\textnormal{pen}}}.
	$$		 
	For any measurable set $I_S\subseteq \Psi_{\textnormal{SSOSP}, S}^{\textnormal{pen}}$, define  $$\psi_{S}^{-1}(I_S) = \{(x, w )\in \Omega_{\textnormal{SSOSP}, S}^{\textnormal{pen}}: \psi_{S}(x, w )\in I_S)\}.$$
	Then, following the same calculation for 
	\[\P\left\{(X, \hat\theta(X, W), \hat{g}(X, W))\in I_\A\mid (X,\hat\theta(X,W),\hat{g}(X,W))\in\Psi_{\textnormal{SSOSP},\A} \right\}\]
	as in the proof of Lemma \ref{lemma:conditional_density} (with $\Omega_{\textnormal{SSOSP}, \A}$ replaced by $\Omega_{\textnormal{SSOSP}, S}^{\textnormal{pen}}$), we  have
	\begin{multline*}
		\P\left\{(X, \hat\theta(X, W), \hat{g}(X, W))\in I_S\mid (X,\hat\theta(X,W),\hat{g}(X,W))\in\Psi_{\textnormal{SSOSP}, S}^{\textnormal{pen}} \right\}  \\
		= \frac{\int_{\X} f(x;\theta_{0})\int_{\R^d}e^{-\frac{d}{2\sigma^2}\|\hat{g}(x,w)-\nabla_\theta\L(\hat\theta(x,w);x) \|^{2} }\id_{(x, w )\in  \psi_S^{-1}(I_S)}\;\mathsf{d}w \;\mathsf{d}\nu_{\X}(x)}{\int_{\X\times\R^{d}}h_{\theta_0}(x', w')\;\mathsf{d}\nu_{\X}(x')\;\mathsf{d}w' }.
	\end{multline*}
	
	Next we need to reparametrize $(\hat\theta,\hat{g})$, since, as in the constrained case, these parameters, which each have dimension $d$, actually 
	contain only $d$ degrees of freedom in total (i.e., since there is a bijection between $(x,w)$ and $(x,\hat\theta,\hat{g})$, and $w\in\R^d$). 
	In fact, in the $\ell_1$-penalized setting, this is simple: once we condition on the event that $S(\hat\theta)=S$, this implies
	that $\hat\theta_{S^\complement} = \mathbf{0}_{d-|S|}$, and that $\hat{g}_S =\lambda \textnormal{sign}(\hat\theta_S)$. In other words,
	$(\hat\theta_S,\hat{g}_{S^\complement})$ captures the full information contained in $(\hat\theta,\hat{g})$---which agrees with our calculation of degrees of freedom since $|S|+|S^\complement|=d$.
	For convenience, we now define $\mathbf{I}_S$ as the $d$-by-$|S|$ matrix obtained by taking the $d$-by-$d$ identity and extracting
	columns corresponding to $S$, and $\mathbf{I}_{S^\complement}$ similarly for $S^\complement$. Then, for $(x,\theta,g)\in \Psi_{\textnormal{SSOSP},S}$,  we have calculated
	\[\theta = \mathbf{I}_S \theta_S, \ g = \mathbf{I}_S \cdot\lambda\textnormal{sign}(\theta_S) + \mathbf{I}_{S^\complement}\cdot g_{S^\complement}.\]

	Next,  if $(x,\theta,g)\in \Psi_{\textnormal{SSOSP},S}$ then by the SSOSP conditions we must have some $w$ such that
	$ \theta = \hat\theta(x,w)$  is a SSOSP of~\eqref{eqn:def_thetahat_l1pen}, and $g=\hat g(x,w) =  \nabla_\theta\L(\theta;x,w) =  \nabla_\theta\L(\theta;x) + \sigma w$. 
	Combining with the work above, we can write
	\[w = \phi_x(\theta_S,g_{S^\complement}) \textnormal{ where }\phi_x(\theta_S,g_{S^\complement}) = \frac{ \mathbf{I}_S \cdot\lambda\textnormal{sign}(\theta_S) + \mathbf{I}_{S^\complement}\cdot g_{S^\complement} -   \nabla_\theta\L( \mathbf{I}_S \theta_S;x)}{\sigma},\]
	and so 
	\[\theta = \hat\theta(x,w) = \hat\theta\left(x, \phi_x(\theta_S,g_{S^\complement})\right), \ g = \hat{g}(x,w) = \hat{g}\left(x,  \phi_x(\theta_S,g_{S^\complement})\right).\]
	Therefore,
	\[\theta_S =\mathbf{I}_S^\top \hat\theta\left(x, \phi_x(\theta_S,g_{S^\complement})\right), \ g_{S^\complement} = \mathbf{I}_{S^\complement}^\top  \hat{g}\left(x,  \phi_x(\theta_S,g_{S^\complement})\right).\]
	We can also calculate
	\[\nabla_{\theta_S} \phi_x(\theta_S,g_{S^\complement}) = -\sigma^{-1}\mathbf{I}_S^\top  \nabla^2_\theta \L(\mathbf{I}_S \theta_S;x)
	\]
	and\[\nabla_{g_{S^\complement}} \phi_x(\theta_S,g_{S^\complement}) = \sigma^{-1}  \mathbf{I}_{S^\complement}^\top .\]
	Therefore,
	\begin{align*} \textnormal{det}\left( \nabla\phi_x(\theta_S,g_{S^\complement})\right)
		&= \textnormal{det}\left( \left(\begin{array}{c} \nabla_{\theta_S}\phi_x(\theta_S,g_{S^\complement})\\  \nabla_{g_{S^\complement}}\phi_x(\theta_S,g_{S^\complement})\end{array}\right)\right)\\
		&= \textnormal{det}\left( \left(\begin{array}{c} \nabla_{\theta_S}\phi_x(\theta_S,g_{S^\complement})\\  \nabla_{g_{S^\complement}}\phi_x(\theta_S,g_{S^\complement})\end{array}\right) \cdot(\mathbf{I}_S \ \mathbf{I}_{S^\complement}) \, \right) \\
		&= \textnormal{det}\left( \left(\begin{array}{c@{\ \ }c} \nabla_{\theta_S}\phi_x(\theta_S,g_{S^\complement})\mathbf{I}_S  &  \nabla_{\theta_S}\phi_x(\theta_S,g_{S^\complement})\mathbf{I}_{S^\complement} \\  \nabla_{g_{S^\complement}}\phi_x(\theta_S,g_{S^\complement})\mathbf{I}_S & \nabla_{g_{S^\complement}}\phi_x(\theta_S,g_{S^\complement})\mathbf{I}_{S^\complement}\end{array}\right)\right) \\
		&=
		\textnormal{det}\left( \left(\begin{array}{c@{\ \ }c}  -\sigma^{-1}\mathbf{I}_S^\top  \nabla^2_\theta \L(\mathbf{I}_S\theta_S;x) \mathbf{I}_S  &  -\sigma^{-1}\mathbf{I}_S^\top  \nabla^2_\theta \L(\mathbf{I}_S\theta_S;x)\mathbf{I}_{S^\complement} \\  \sigma^{-1} \mathbf{I}_{S^\complement}^\top \mathbf{I}_S &\sigma^{-1} \mathbf{I}_{S^\complement}^\top \mathbf{I}_{S^\complement}\end{array}\right)\right) \\
		&=
		\textnormal{det}\left( \left(\begin{array}{c@{\ \ }c}  -\sigma^{-1}\mathbf{I}_S^\top  \nabla^2_\theta \L(\mathbf{I}_S\theta_S;x) \mathbf{I}_S  &  -\sigma^{-1}\mathbf{I}_S^\top  \nabla^2_\theta \L(\mathbf{I}_S\theta_S;x)\mathbf{I}_{S^\complement} \\  0&\sigma^{-1} \I_{d-|S|}\end{array}\right)\right) \\
		&= (-1)^{|S|}\sigma^{-d} \cdot \textnormal{det}\left(\mathbf{I}_S^\top  \nabla^2_\theta \L(\mathbf{I}_S\theta_S;x) \mathbf{I}_S\right)\\
		&= (-1)^{|S|}\sigma^{-d} \cdot \textnormal{det}\left(\nabla^2_\theta \L(\mathbf{I}_S\theta_S;x)_S\right).
	\end{align*}
	From this point on, following similar arguments as \cite[Section B.4]{barber2020testing} to verify the validity of applying the change-of-variables formula for integration, we calculate
\begin{multline*}
	\int_{\R^d}
	\exp\left(-\frac{d}{2\sigma^2}	\left\|\hat{g}(x, w)- \nabla_\theta \L(\hat\theta(x, w); x)	\right\|^{2}	\right)	\cdot\id_{(x, w) \in \psi_{S}^{-1}(I_{S})}	\;	\mathsf{d}w	\\[6pt]
	=	\sigma^{-d}	\int_{\R^{|S|}}	\int_{\R^{d - |S|}}\exp\left(-\frac{d}{2\sigma^2}	\left\|	\mathbf{I}_S \cdot \lambda \, \textnormal{sign}(\theta_S)	+ \mathbf{I}_{S^\complement} g_{S^\complement}	- \nabla_\theta \L(\mathbf{I}_S \theta_S; x)	\right\|^{2}	\right)\\[6pt]
	\times	\det\left(	\nabla^2_\theta \L(\mathbf{I}_S \theta_S; x)_S	\right)	\cdot	\id_{\left(	x,\,	\phi_x(\theta_S, g_{S^\complement})		\right)		\in			\psi_{S}^{-1}(I_{S})}	\;	\mathsf{d}g_{S^\complement}	\;	\mathsf{d}\theta_S.
\end{multline*}
where we note that $\textnormal{det}\left(  \nabla^2_\theta \L(\mathbf{I}_S\theta_S;x)_S\right)$  must be positive, by the SSOSP conditions. We can also verify from our definitions that $ \id_{(x, \phi_x(\theta_S,g_{S^\complement}) )\in  \psi_{S}^{-1}(I_{S})} = \id_{(x,\mathbf{I}_S\theta_S,\mathbf{I}_S\cdot\lambda\textnormal{sign}(\theta_S) + \mathbf{I}_{S^\complement} g_{S^\complement})\in I_S}$. 
With this calculation in place we then have
\begin{multline*}
	\P\Big\{	(X, \hat\theta(X, W), \hat{g}(X, W)) \in I_{S}\;\Big|\;(X, \hat\theta(X, W), \hat{g}(X, W)) \in \Psi_{\textnormal{SSOSP},S}\Big\} \\[4pt]
	= \int_{\X} f(x;\theta_{0}) \int_{\R^{|S|}} \int_{\R^{d - |S|}} \frac{\exp\left(	-\frac{d}{2\sigma^2}	\left\|\mathbf{I}_S \cdot \lambda \, \textnormal{sign}(\theta_S)+ \mathbf{I}_{S^\complement}g_{S^\complement}- \nabla_\theta \L(\mathbf{I}_S \theta_S; x)\right\|^{2}\right)}{\sigma^{d}\int_{\X \times \R^{d}}h_{\theta_0}(x', w')\;\mathsf{d}\nu_{\X}(x')\;\mathsf{d}w'} \\[6pt]
	\times\det\left(\nabla^2_\theta \L(\mathbf{I}_S \theta_S; x)_S\right)\cdot\id_{\left(x,\,\mathbf{I}_S \theta_S,\,\mathbf{I}_S \cdot \lambda \, \textnormal{sign}(\theta_S)+ \mathbf{I}_{S^\complement} g_{S^\complement}\right) \in I_S}\;\mathsf{d}g_{S^\complement}	\;\mathsf{d}\theta_S\;\mathsf{d}\nu_{\X}(x).
\end{multline*}
In particular, this verifies that 			
\begin{multline*}
	\frac{ f(x; \theta_{0}) \cdot \exp\left( -\frac{d}{2\sigma^2} \left\| \mathbf{I}_S \cdot \lambda \, \textnormal{sign}(\theta_S) + \mathbf{I}_{S^\complement} g_{S^\complement}- \nabla_\theta \L(\mathbf{I}_S \theta_S; x) \right\|^2 \right)}{\sigma^{d}\cdot\displaystyle\int_{\X \times \R^d}h_{\theta_0}(x', w')\;\mathsf{d}\nu_{\X}(x')\;\mathsf{d}w'}\\
	\times\, \det\left(\nabla^2_\theta \L(\mathbf{I}_S \theta_S; x)_S\right)\cdot\,\id_{\left(x,\,\mathbf{I}_S \theta_S,\,\mathbf{I}_S \cdot \lambda \, \textnormal{sign}(\theta_S)+ \mathbf{I}_{S^\complement} g_{S^\complement}\right)\in\Psi_{\textnormal{SSOSP}, S}}
\end{multline*}
is the joint density of $(X,\hat{\theta}_S,\hat{g}_{S^\complement}) = (X,  \hat\theta(X, W)_S,  \hat{g}(X, W)_{S^\complement})$, conditional on the event $(X, \hat\theta(X, W), \hat{g}(X, W))\in\Psi_{\textnormal{SSOSP},S}$. 
Therefore, the conditional density of $X\mid (\hat\theta_S, \hat{g}_{S^\complement})$ (again conditioning on this same event) can be written as
\begin{multline*}
	\propto f(x; \theta_{0}) \cdot \exp\left( -\frac{d}{2\sigma^2} \left\| \mathbf{I}_S \cdot \lambda \, \textnormal{sign}(\theta_S) + \mathbf{I}_{S^\complement} g_{S^\complement}- \nabla_\theta \L(\mathbf{I}_S \theta_S; x) \right\|^2 \right) \\
	\times\, \det\left(\nabla^2_\theta \L(\mathbf{I}_S \theta_S; x)_S\right)\cdot\,\id_{\left(x,\,\mathbf{I}_S \theta_S,\,\mathbf{I}_S \cdot \lambda \, \textnormal{sign}(\theta_S)+ \mathbf{I}_{S^\complement} g_{S^\complement}\right)\in\Psi_{\textnormal{SSOSP}, S}}
\end{multline*}
Moreover, $\hat\theta_S$ and $\hat{g}_{S^\complement}$ uniquely determine $\hat\theta$ and $\hat g$ on the event that $S$ is the support, as described earlier,
so we can equivalently condition on $(\hat\theta,\hat g)$ and  can rewrite this density as
\begin{equation} 
	p_{\theta_{0}}(\cdot\mid \hat\theta, \hat{g}) \propto f(x;\theta_{0}) \cdot e^{ - \frac{d}{2\sigma^2}\|\hat{g}- \nabla_\theta \L(\hat\theta;x)\|^2}\cdot\textnormal{det}\left( \nabla^2_\theta\L(\hat\theta;x) _S\right)\cdot\id_{(x,\hat\theta,\hat{g})\in \Psi_{\textnormal{SSOSP},S}}.
\end{equation}
Finally, by definition, $(x,\hat\theta,\hat{g})\in \Psi_{\textnormal{SSOSP},S}$ if and only if $\hat\theta\in\Theta_S$ and $x\in\X_{\hat\theta,\hat{g}}$, so $$\id_{(x,\hat\theta,\hat{g})\in \Psi_{\textnormal{SSOSP},S}}= \id_{x\in\X_{\hat\theta,\hat{g}}}$$ for $\hat\theta\in\Theta_S$.

\section{Additional proofs}
\label{sec:appdex_otherproof}

\subsection{Verifying that the plug-in version of $p_{\theta_0}(\cdot\mid \hat\theta, \hat{g})$ defines a density}\label{sec:app_verify_density}
To ensure  that our procedure is well-defined in both constrained and $\ell_{1}$-penalized cases, we need to  verify that the plug-in version of the conditional density  
\[p_{\hat\theta}(\cdot\mid \hat\theta, \hat{g})\propto p_{\hat\theta,\hat{g}}^{\textnormal{un}}(x) \]
defines a valid density with respect to $\nu_{\X}$, where $p_{\theta, g}^{\textnormal{un}}(x)$ represents the unnormalized density, namely,
\[p_{\theta, g}^{\textnormal{un}}(x) = f(x;\theta) \cdot e^{ - \frac{d}{2\sigma^2}\|g- \nabla_\theta \L(\theta;x)\|^2}\cdot\textnormal{det}\left(U_{\A(\theta)}^\top \nabla^2_\theta\L(\theta;x) U_{\A(\theta)}\right)\cdot\id_{(x,\theta,g)\in \Psi_{\textnormal{SSOSP},\A(\theta)}}\]
in the constrained case as in~\eqref{eqn:conditional_density_thetahat}; and 
\[p_{\theta, g}^{\textnormal{un}}(x) = f(x;\theta) \cdot e^{ - \frac{d}{2\sigma^2}\|g- \nabla_\theta \L(\theta;x)\|^2}\cdot\textnormal{det}\left(\nabla^2_\theta\L(\theta;x)_{S(\theta)}\right)\cdot\id_{(x,\theta,g)\in \Psi_{\textnormal{SSOSP},S(\theta)}^{\textnormal{pen}}}\]
in the $\ell_{1}$-penalized case as in~\eqref{eqn:conditional_density_thetahat_l1pen}.
To verify this we only need to check that this unnormalized density integrates to a finite and positive value (the analogous result
for aCSS appears in \cite[Section B.3]{barber2020testing}).

\begin{lemma}
	If Assumption~\ref{asm:reg} and ~\ref{asm:L_H} hold, then for $\theta\in\Theta$ and $g\in\R^{d}$,
	the unnormalized density $p_{\theta, g}^{\textnormal{un}}(x)$ is nonnegative and integrable with respect to $\nu_{\mathcal{\X}}$. Furthermore, if the event $\hat\theta = \hat\theta(X, W)$ is a SSOSP has positive probability, then conditional on this event, $\int_{\mathcal{\X}}p_{\hat\theta, \hat{g}}^{\textnormal{un}}(x)\mathsf{d}\nu_{\mathcal{\X}}(x) >0$ holds almost surely.
\end{lemma}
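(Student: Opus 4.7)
The plan is to handle the three assertions (nonnegativity, integrability, and positivity on the SSOSP event) separately, as each relies on a distinct ingredient.

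For nonnegativity, I would inspect each factor of $p_{\theta, g}^{\textnormal{un}}(x)$: Assumption~\ref{asm:reg} gives $f(x;\theta)>0$, the exponential factor is always strictly positive, and the indicator is $\{0,1\}$-valued. The only nontrivial factor is the determinant, but on the event $(x,\theta,g)\in\Psi_{\textnormal{SSOSP},\A(\theta)}$ there exists $w$ with $\theta=\hat\theta(x,w)$ an SSOSP, so by Definition~\ref{def:ssosp} the matrix $U_{\A(\theta)}^\top\nabla^2_\theta\L(\theta;x)U_{\A(\theta)}$ is strictly positive definite and hence has strictly positive determinant. Off the indicator set the product vanishes, giving $p_{\theta,g}^{\textnormal{un}}(x)\ge 0$ everywhere.

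For integrability, I would bound the exponential and the indicator by $1$, reducing matters to showing $\E_\theta[\det(U_{\A(\theta)}^\top\nabla^2_\theta\L(\theta;X)U_{\A(\theta)})]<\infty$. Using $\det(M)\le \lambda_{\max}(M)^k$ for a $k\times k$ PSD matrix $M$ together with Weyl's inequality, and noting that $\nabla^2_\theta R(\theta)$ is a fixed finite matrix by Assumption~\ref{asm:reg}, it suffices to bound $\E_\theta[\lambda_{\max}(H(\theta;X))^d]$. Applying Assumption~\ref{asm:L_H} with $\theta_0=\theta$ yields a finite exponential moment of order $r(\theta)^2$ for $(\lambda_{\max}(H(\theta;X)-H(\theta)))_+$, which via the standard inequality $z^k\le (k!/c^k)e^{cz}$ implies all polynomial moments of $\lambda_{\max}(H(\theta;X))$ are finite.

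For positivity, I would set $r_{\theta,g}(x)=\exp\{-\tfrac{d}{2\sigma^2}\|g-\nabla_\theta\L(\theta;x)\|^2\}\cdot\det(U_{\A(\theta)}^\top\nabla^2_\theta\L(\theta;x)U_{\A(\theta)})\cdot\id_{x\in\X_{\theta,g}}$, so that $p_{\theta',g}^{\textnormal{un}}(x)=f(x;\theta')\,r_{\theta,g}(x)$ for any choice $\theta'$ of the first argument. Under a true parameter $\theta_0$ for which the SSOSP event has positive probability, Lemma~\ref{lemma:conditional_density} asserts that $f(x;\theta_0)\,r_{\hat\theta,\hat g}(x)$ is proportional to a genuine conditional density almost surely on that event, hence integrates to a strictly positive finite constant. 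Combined with $f(\cdot;\theta_0)>0$ (Assumption~\ref{asm:reg}), this forces $\{x:r_{\hat\theta,\hat g}(x)>0\}$ to have positive $\nu_\X$ measure. Since $f(\cdot;\hat\theta)>0$ as well, we conclude $\int p_{\hat\theta,\hat g}^{\textnormal{un}}(x)\,\mathsf{d}\nu_\X(x)=\int f(x;\hat\theta)\,r_{\hat\theta,\hat g}(x)\,\mathsf{d}\nu_\X(x)>0$ almost surely on the SSOSP event. The main technical wrinkle lies in the exponential-to-polynomial moment conversion in the integrability step; this is standard but must be handled carefully because the constant $r(\theta)$ may be small.
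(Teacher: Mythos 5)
Your proposal is correct and follows essentially the same route as the paper's own proof: nonnegativity via the SSOSP positive-definiteness of the determinant factor on the indicator set, integrability by bounding the determinant by $\lambda_{\max}^d$ and converting the polynomial moment to the exponential moment of Assumption~\ref{asm:L_H} (with $\theta_0=\theta$) through a $z^k\le (k!/c^k)e^{cz}$ inequality with $c=r(\theta)^2$, and positivity by noting that $p^{\textnormal{un}}_{\hat\theta,\hat g}$ differs from the genuine conditional density of Lemma~\ref{lemma:conditional_density} only by the strictly positive ratio $f(x;\theta_0)/f(x;\hat\theta)$. No substantive differences to report.
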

\begin{proof}  
	\noindent{\bf Constrained case:} We first check nonnegativity. 
	For any $\theta\in\Theta$ and any $x$, we have $f(x;\theta)>0$ by Assumption~\ref{asm:reg}. Furthermore, if $x\in\X_{\theta,g}$ then $\textnormal{det}\left(U_{\A(\theta)}^\top \nabla^2_\theta\L(\theta;x) U_{\A(\theta)}\right)>0$ by definition of $\X_{\theta,g}$ and the SSOSP conditions. This verifies the  nonnegativity for $p_{\theta, g}^{\textnormal{un}}(x)$ for any $(\theta, g,x)$. 
	Next we check integrability.
	\begin{align*}
		&\int_{\mathcal{\X}}p_{\theta,g}^{\textnormal{un}}(x)\mathsf{d}\nu_{\mathcal{\X}}(x)\\
		&\le\int_{\mathcal{\X}}f(x;\theta)\cdot\textnormal{det}\left(U_{\A(\theta)}^\top \nabla^2_\theta\L(\theta;x) U_{\A(\theta)}\right)\cdot\id_{U_{\A(\theta)}^\top \nabla^2_\theta\L(\theta;x) U_{\A(\theta)} \succ 0}\mathsf{d}\nu_{\mathcal{\X}}(x)\\
		&\le \int_{\mathcal{\X}}f(x;\theta)\cdot\left(\lambda_{\max}\left(U_{\A(\theta)}^\top \nabla^2_\theta\L(\theta;x) U_{\A(\theta)}\right)\right)^d  \cdot\id_{U_{\A(\theta)}^\top \nabla^2_\theta\L(\theta;x) U_{\A(\theta)} \succ 0}\mathsf{d}\nu_{\mathcal{\X}}(x)\\
		&\le \int_{\mathcal{\X}}f(x;\theta)\cdot\left(\lambda_{\max}\left( \nabla^2_\theta\L(\theta;x)\right)\right)_{+}^d \mathsf{d}\nu_{\mathcal{\X}}(x)\\
		&\le\frac{d!}{r(\theta)^{2d}}\int_{\mathcal{\X}}f(x;\theta)\cdot\exp\Big\{r(\theta)^2(\lambda_{\max}\left(H(\theta, x) - H(\theta)\right))_{+} \\
		&\qquad\qquad\qquad\qquad+r(\theta)^2\lambda_{\max}\left(H(\theta) - \nabla_{\theta}^{2}\mathcal{R}(\theta)\right)_{+} \Big\} \mathsf{d}\nu_\X(x)\\
		&=\frac{d!}{r(\theta)^{2d}}\exp\left\{r(\theta)^2\lambda_{\max}\left(H(\theta) - \nabla_{\theta}^{2}\mathcal{R}(\theta)\right)_{+} \right\} \cdot \E_{P_\theta}\left[\exp\left\{r(\theta)^2\lambda_{\max}\left(H(\theta, x) - H(\theta)\right)_{+}\right\}\right]\\
		&\le \frac{d!}{r(\theta)^{2d}}e^{\epsilon(\theta)}\exp\left\{r(\theta)^2\lambda_{\max}\left(H(\theta) - \nabla_{\theta}^{2}\mathcal{R}(\theta)\right)_{+} \right\},
	\end{align*}
	where the third-to-last step holds since $t^d \leq d!e^d$ for any $t\geq 0$, and the last step holds by applying Assumption~\ref{asm:L_H}.
	This verifies that $\int_{\mathcal{\X}}p_{\theta,g}^{\textnormal{un}}(x)\mathsf{d}\nu_{\mathcal{\X}}(x)$ is finite. 
	Finally, we  check $\int_{\mathcal{\X}}p_{\hat\theta, \hat{g}}^{\textnormal{un}}(x)\mathsf{d}\nu_{\mathcal{\X}}(x) >0$ holds almost surely. For any $x$, we have $\frac{f(x, \theta_0)}{f(x, \hat\theta)}>0$ by Assumption \ref{asm:reg}. Combined with the fact that $p_{\hat\theta, \hat{g}}^{\textnormal{un}}(x)$ is nonnegative as proved above, it is therefore equivalent  to verify that $\int_{\mathcal{\X}}\frac{f(x, \theta_0)}{f(x, \hat\theta)}p_{\hat\theta, \hat{g}}^{\textnormal{un}}(x)\mathsf{d}\nu_{\mathcal{\X}}(x) >0$. This last claim must hold since $p_{\theta_0}(x\mid\hat\theta, \hat{g})\propto\frac{f(x, \theta_0)}{f(x, \hat\theta)}p_{\hat\theta, \hat{g}}^{\textnormal{un}}(x)$ is the conditional density of $X\mid\hat\theta, \hat{g}$.
	\bigskip
	
	\noindent{\bf $\ell_{1}$-penalized case:} The proof for this case mirrors that for the constrained case.
	For any $\theta\in\Theta$ and $x$, we have $f(x;\theta)>0$ by Assumption~\ref{asm:reg}. Furthermore, if $(x,\theta,g)\in \Psi_{\textnormal{SSOSP},S(\theta)}^{\textnormal{pen}}$ then $\textnormal{det}\left( \nabla^2_\theta\L(\theta;x)_{S(\theta)}\right)>0$ by definition of $\Psi_{\textnormal{SSOSP},S(\theta)}^{\textnormal{pen}}$ and the SSOSP conditions. This verifies the  nonnegativity of $p_{\theta, g}^{\textnormal{un}}(x)$ for any $(\theta, g,x)$. 
	To check integrability, we have 
	\begin{align*}
		&\int_{\mathcal{\X}}p_{\theta,g}^{\textnormal{un}}(x)\mathsf{d}\nu_{\mathcal{\X}}(x) 
		\le\int_{\mathcal{\X}}f(x;\theta)\cdot\textnormal{det}\left(\nabla^2_\theta\L(\theta;x)_{S(\theta)}\right)\cdot\id_{\nabla^2_\theta\L(\theta;x)_{S(\theta)} \succ 0}\mathsf{d}\nu_{\mathcal{\X}}(x)\\
		&\le \int_{\mathcal{\X}}f(x;\theta)\cdot\left(\lambda_{\max}\left(\nabla^2_\theta\L(\theta;x)_{S(\theta)}\right)\right)^d  \cdot\id_{\nabla^2_\theta\L(\theta;x)_{S(\theta)} \succ 0}\mathsf{d}\nu_{\mathcal{\X}}(x)\\
		&\le \int_{\mathcal{\X}}f(x;\theta)\cdot\left(\lambda_{\max}\left( \nabla^2_\theta\L(\theta;x)\right)\right)_{+}^d \mathsf{d}\nu_{\mathcal{\X}}(x)\\
		&\le\frac{d!}{r(\theta)^{2d}}\int_{\mathcal{\X}}f(x;\theta)\cdot\exp\Big\{r(\theta)^2\left(\lambda_{\max}\left(H(\theta, x) - H(\theta)\right)\right)_{+} \\
		&\qquad\qquad\qquad\qquad+r(\theta)^2\left(\lambda_{\max}\left(H(\theta) - \nabla_{\theta}^{2}\mathcal{R}(\theta)\right)\right)_{+} \Big\} \\
		&\le \frac{d!}{r(\theta)^{2d}}e^{\epsilon(\theta)}\exp\left\{r(\theta)^2\left(\lambda_{\max}\left(H(\theta) - \nabla_{\theta}^{2}\mathcal{R}(\theta)\right)\right)_{+} \right\}.
	\end{align*}
	Finally,  $\int_{\mathcal{\X}}p_{\hat\theta, \hat{g}}^{\textnormal{un}}(x)\mathsf{d}\nu_{\mathcal{\X}}(x) >0$ holds almost surely for the same reason as in the constrained case.  
\end{proof}

\subsection{Proof of Lemma \ref{lemma:bij}}
\begin{proof}
	First we check that $\psi_{\A}$ is injective on $\Omega_{\textnormal{SSOSP}, \A}$. For any $(x_{1}, w_{2}), (x_{2}, w_{2}) \in\Omega_{\textnormal{SSOSP}, \A}$, if  $\psi_{\A}(x_{1}, w_{1}) = \psi_{\A}(x_{2}, w_{2}) = (x,  \theta, g)$, then by definition of $\psi_{\A}$, we have $x_{1} = x_{2} = x$ trivially. By definition of $\psi_{\A}$ and $\hat g$, 
	$$\nabla_{\theta}\L(\theta;x)+\sigma w_{1} = \hat g(x_{1}, w_{1}) =  g = \hat g(x_{2}, w_{2}) =  \nabla_{\theta}\L(\theta;x)+\sigma w_{2},$$
	therefore $w_{1} = w_{2} = \frac{g - \nabla_{\theta}\L(\theta;x)}{\sigma}$. This establishes that $\Psi_{\A}$ is injective and that the inverse function (on the image of $\psi_{\A}$) is given as claimed above.
	
	Then we verify that $\Psi_{\textnormal{SSOSP}, \A}$ is the image of $\psi_{\A}$.
	Suppose $(x, \theta, g)\in \psi_{\A}(\Omega_{\textnormal{SSOSP}, \A})$, i.e, for some $w$ such that $(x, w )\in \Omega_{\textnormal{SSOSP}, \A}$, we have $\theta =  \hat{\theta}(x, w )$, which is a SSOSP with active set $\A$, and $g = \nabla_{\theta}\L(\hat{\theta}(x, w ); x, w) = \hat{g}(x, w)$. Then for this $w$,  $\theta = \hat\theta(x, w )\in\Theta_{\A}$, and $g = \hat g(x,w)$.  Therefore, $(x, \theta, g)\in \Psi_{\textnormal{SSOSP}, \A}$, and so we have shown that $\psi_{\A}(\Omega_{\textnormal{SSOSP}, \A})\subseteq \Psi_{\textnormal{SSOSP}, \A}$.
	
	Conversely suppose that $(x, \theta, g)\in \Psi_{\textnormal{SSOSP}, \A}$. By definition of $\Psi_{\textnormal{SSOSP}, \A}$, there exists $ w $ such that $\theta = \hat\theta(x, w )$ is a SSOSP of \eqref{eqn:def_thetahat} with active set $\A$, and $g = \hat g(x, w )$. Therefore, for this $w$ we have $(x, w ) \in \Omega_{\textnormal{SSOSP}, \A}$. Then $(x,  \theta, g) = (x, \hat\theta(x,w),\hat g(x,w)) = \psi_{\A}(x, w )\in \psi_{\A}(\Omega_{\textnormal{SSOSP}, \A})$. This verifies that $\Psi_{\textnormal{SSOSP}, \A} \subseteq \psi_{\A}(\Omega_{\textnormal{SSOSP}, \A})$, and thus completes the proof.
\end{proof} 

\subsection{Proof of Lemma \ref{lemma:proj_chisq}}
\begin{proof}
	Fix any $\lambda\in(0,1/2)$. We calculate
	\begin{align*}
		e^{\lambda h_v(k)} 
		&= \exp\left\{\lambda \E_{Z\sim \mathcal{N}(0,\I_d)}\left[\max_{S\subseteq[p],|S|\leq k} \|\mathcal{P}_{v_S}(Z)\|^2\right]\right\}\\
		&\leq \E_{Z\sim \mathcal{N}(0,\I_d)}\left[\exp\left\{\lambda \max_{S\subseteq[p],|S|\leq k} \|\mathcal{P}_{v_S}(Z)\|^2\right\}\right]\textnormal{ by Jensen's inequality}\\
		&=\E_{Z\sim \mathcal{N}(0,\I_d)}\left[\max_{S\subseteq[p],|S|\leq k}\exp\left\{\lambda  \|\mathcal{P}_{v_S}(Z)\|^2\right\}\right]\\
		&\leq \E_{Z\sim \mathcal{N}(0,\I_d)}\left[\sum_{S\subseteq[p],|S|= k}\exp\left\{\lambda  \|\mathcal{P}_{v_S}(Z)\|^2\right\}\right]  \\
		&= \sum_{S\subseteq[p],|S|= k} \E_{Z\sim \mathcal{N}(0,\I_d)}\left[\exp\left\{\lambda  \|\mathcal{P}_{v_S}(Z)\|^2\right\}\right].
	\end{align*}
	Since  $\|\mathcal{P}_{v_S}(Z)\|^2\sim \chi^2_{\textnormal{dim}(\textnormal{span}(\{v_i\}_{i\in S}))}$, we have 
	\begin{align*}
		e^{\lambda h_v(k)} &\le \sum_{S\subseteq[p],|S|=k} \left(1-2\lambda\right)^{-\frac{1}{2}\textnormal{dim}(\textnormal{span}(\{v_i\}_{i\in S}))} \\
		&\leq  \sum_{S\subseteq[p],|S|= k} \left(1-2\lambda\right)^{-k/2}
		={p\choose k}\left(1-2\lambda\right)^{-k/2} \leq \left(\frac{ep}{k}\right)^k \left(1-2\lambda\right)^{-k/2}.
	\end{align*}
	Therefore,
	\[h_v(k) \leq \inf_{\lambda\in(0,1/2)}\left\{\lambda^{-1}\log\left[\left(\frac{ep}{k}\right)^k\left(1-2\lambda\right)^{-k/2}\right]\right\}
	= \frac{k}{2} \inf_{\lambda\in(0,1/2)} \left\{\frac{2\log(ep/k) - \log(1-2\lambda)}{\lambda}\right\}.\]
	Taking $\lambda = 1/4$, 
	\[h_v(k) \leq 2k \left(2\log(ep/k) - \log(1/2)\right)\leq 4k\log(4p/k).\]
	
	Finally, we have $\max_{S\subseteq[p],|S|\leq k} \|\mathcal{P}_{v_S}(Z)\|^2\leq \|Z\|^2$, and therefore,
	\[h_v(k) = \E_{Z\sim \mathcal{N}(0,\I_d)}\left[\max_{S\subseteq[p],|S|\leq k} \|\mathcal{P}_{v_S}(Z)\|^2\right]
	\leq \E_{Z\sim \mathcal{N}(0,\I_d)}\left[\|Z\|^2\right] = d,\]
	since $\|Z\|^2\sim\chi^2_d$.

\end{proof}

\section{Checking assumptions for examples}  \label{sec:check_asm}
In this section, we verify that Assumptions~\ref{asm:reg},~\ref{asm:SSOSP_general}, and~\ref{asm:L_H} hold for the three examples considered
in Section~\ref{sec:numerical}: the Gaussian mixture model (Example~\ref{example:mixgaussian}), isotonic Gaussian linear regression (Example~\ref{example:isotonic}),
and sparse high-dimensional Gaussian linear regression (Example~\ref{example:sparse}).
\subsection{Verifying assumptions for Examples \ref{example:isotonic} (isotonic regression) and \ref{example:sparse} (sparse regression)}
We first verify the assumptions for the two examples in the Gaussian linear model setting, since these are more straightforwards.
First, Assumption~\ref{asm:reg} holds trivially by construction---we have $\Theta = \R^d$, and twice-differentiability of $\L(\theta;x)$
holds both with and without the ridge penalty.

Next we check Assumption~\ref{asm:SSOSP_general}.
In both examples, the optimization problem that defines $\hat\theta(X,W)$ is strongly convex,
meaning that we can define $\hat\theta(X,W)$ as the unique minimizer, and the SSOSP conditions then hold surely. 
Next we need to verify a high probability bound on $\|\hat\theta(X,W) - \theta_0\|$.
First, for isotonic regression,  
we see that $\hat\theta(X,W)$ can equivalently be written as
\[\hat\theta(X,W)=\arg\min_{\theta\in\R^d}\left\{\frac{1}{2}\|\theta - (X - \sigma W)\|^2_2 : \theta_1 \leq \dots \leq \theta_n\right\},\]
i.e., the isotonic projection of $X - \sigma W$. Since $X - \sigma W \sim \mathcal{N}(\theta_0, (\nu^2+\sigma^2/n)\mathbf{I}_n)$, 
applying the result of \cite[Theorem 5 and Appendix A.1]{yang2019contraction} we have a high-probability bound on the error,
\[\|\hat\theta(X,W) - \theta_0\| \leq O\left(n^{1/6}(\log n)^{1/3}(1+\sigma^2)^{2/3}\right) \textnormal{ with probability $\geq 1-1/n$.}\]
If we choose $\sigma = O(1)$, we can therefore take $r(\theta_0) = O\left(n^{1/6}(\log n)^{1/3}\right)$ and $\delta(\theta_0) = 1/n$.

Next, for sparse regression, the calculation is a bit more complex.
Our argument closely follows the framework developed in \cite[Theorem 1]{negahban2012unified}.
Let $\Delta = \hat\theta(X,W) - \theta_0$. Then by optimality of $\hat\theta(X,W)$ we have
\begin{multline*}
	\frac{1}{2\nu^2}\|X - Z(\theta_0+\Delta)\|^2_2 + \sigma(\theta_0+\Delta)^\top W + \frac{\lambda_{\textnormal{ridge}}}{2}\|\theta_0+\Delta\|^2_2+\lambda\|\theta_0+\Delta\|_1 \\
	\leq \frac{1}{2\nu^2}\|X - Z\theta_0\|^2_2 + \sigma\theta_0^\top W + \frac{\lambda_{\textnormal{ridge}}}{2}\|\theta_0\|^2_2+\lambda\|\theta_0\|_1.\end{multline*}
Rearranging terms, and writing $v = X - Z\theta_0 \sim\mathcal{N}(0,\nu^2\mathbf{I}_n)$,
\begin{multline*}\frac{1}{2}\Delta^\top\left(\frac{Z^\top Z}{\nu^2} + \lambda_{\textnormal{ridge}}\mathbf{I}_d\right)\Delta - \Delta^\top\left(\frac{Z^\top v}{\nu^2} - \sigma W - \lambda_{\textnormal{ridge}}\theta_0 \right) \leq \lambda\left(\|\theta_0\|_1 - \|\theta_0 + \Delta\|_1\right) \\\leq\lambda \|\Delta_{S(\theta_0)}\|_1 - \lambda\|\Delta_{S(\theta_0)^\complement}\|_1.\end{multline*}
Then, if the penalty parameter satisfies $\lambda\geq 2\left\| \frac{Z^\top v}{\nu^2} - \sigma W - \lambda_{\textnormal{ridge}}\theta_0\right\|_\infty$, it holds that
\[\frac{1}{2}\Delta^\top\left(\frac{Z^\top Z}{\nu^2} + \lambda_{\textnormal{ridge}}\mathbf{I}_d\right)\Delta  \leq 1.5\lambda \|\Delta_{S(\theta_0)}\|_1 - 0.5\lambda\|\Delta_{S(\theta_0)^\complement}\|_1.\]
Standard assumptions on $Z$ (namely, a restricted eigenvalue type property \citep{negahban2012unified}) will then ensure
\[\|\Delta\| \leq O\left(\sqrt{\frac{|S(\theta_0)|\log d}{n}}\right)\] 
with probability $\geq 1 - 1/n$, when we take $\nu = O(1)$, $\|\theta_0\|_\infty = O(1)$, $ \lambda_{\textnormal{ridge}} \lesssim \sqrt{n\log d}$, and $\sigma \lesssim\sqrt{nd}$. Therefore, we can take $r(\theta_0) = O\left(\sqrt{\frac{|S(\theta_0)|\log d}{n}}\right)$ and $\delta(\theta_0) = 1/n$.

Finally, we check Assumption~\ref{asm:L_H}. For isotonic regression, we have $H(\theta;x) = \nu^{-2}\mathbf{I}_d$, and for
sparse regression, $H(\theta;x) = \nu^{-2}Z^\top Z + \lambda_{\textnormal{ridge}}\mathbf{I}_d$. In both cases,
$H(\theta;x)$ does not depend on $x$, and therefore, Assumption~\ref{asm:L_H} holds trivially with $\epsilon(\theta_0) = 0$.

\subsection{Verifying assumptions for Example \ref{example:mixgaussian} (Gaussian mixture model)}
In this section, we verify that the assumptions of Theorem \ref{theorem_general} 
hold for the Gaussian mixture model setting, specifically in the case of $J=2$ components as implemented in our simulation.
Assumption~\ref{asm:reg} holds trivially by construction.
For Assumption~\ref{asm:SSOSP_general}, 
the accuracy of $\hat\theta(X,W)$ can be established 
with $r(\theta_0)\asymp \sqrt{\frac{\log n}{n}}$ and $\delta(\theta_0)\asymp n^{-1}$ via known results in the literature.
For instance, \cite[Corollary 1.4]{hardt2015tight} show this accuracy level obtained via the EM algorithm, and we can then use the EM
solution as an initialization for gradient descent within a $O(r(\theta_0))$-radius neighborhood, to find an FOSP; 
since the expected Hessian is positive definite, with high probability this FOSP is also a SSOSP.
We omit the details.

Finally, we 
check Assumption \ref{asm:L_H}, which will require some substantial calculations. To verify Assumption \ref{asm:L_H}, we will check the following stronger condition 
\begin{equation*}
	\E_{\theta_{0}}\left[\exp\left\{\sup_{\theta\in\B(\theta_{0},r(\theta_{0}) )\cap\Theta}r(\theta_{0})^2\cdot \left\|H(\theta; X)-H(\theta)\right\|\right\}\right]\le c'e^{\epsilon(\theta_0)},
\end{equation*} for any $r(\theta_0) = o(n^{-1/4})$ and $\epsilon(\theta_0)\gtrsim r(\theta_0)^2n^{1/2} + r(\theta_0)^3 n$.
We first calculate, for parameter $\theta = (\pi_1,\mu_1,\sigma_1,\mu_2,\sigma_2)$, 
\[\L(\theta;x) = -\sum_{i=1}^n \log\left(\pi_1\phi(x_i;\mu_1,\sigma^2_1)+ (1-\pi_1)\phi(x_i;\mu_2,\sigma^2_2)\right),\]
where $\phi(t;\mu,\sigma^2) = \frac{1}{\sqrt{2\pi\sigma^2}}e^{-(t-\mu)^2/2\sigma^2}$ is the density of the normal distribution.
After some calculations, we can verify that the Hessian takes the form
\begin{multline*}H(\theta;x) = \sum_{i=1}^n\Bigg[ \sum_{m=0}^2 x_i^m \cdot \bigg(a_{1,m}(\theta) f_1(x_i;\theta) + a_{2,m}(\theta) f_2(x_i;\theta)\\+b_{1,m}(\theta) f_1(x_i;\theta)^2+b_{2,m}(\theta) f_2(x_i;\theta)^2 \bigg) +  \sum_{m=0}^4 x_i^m \cdot c_m(\theta)  f_1(x_i;\theta)f_2(x_i;\theta) \Bigg], \end{multline*}
where we define
\[f_1(t;\theta) = \frac{\pi_1\phi(t;\mu_1,\sigma^2_1)}{\pi_1 \phi(t;\mu_1,\sigma^2_1)+ (1-\pi_1)\phi(t;\mu_2,\sigma^2_2)}\]
and
\[f_2(t;\theta) = \frac{(1-\pi_1)\phi(t;\mu_2,\sigma^2_2)}{\pi_1 \phi(t;\mu_1,\sigma^2_1)+ (1-\pi_1)\phi(t;\mu_2,\sigma^2_2)},\]
and where $a_{1,m},a_{2,m},b_{1,m},b_{2,m},c_m:\Theta\rightarrow \R^{5\times 5}$ are continuously differentiable functions (whose details we omit for brevity). We can rewrite this as
\[H(\theta;x) = \sum_{i=1}^n g_0(x_i;\theta) +  x_i g_1(x_i;\theta) + x_i^2 g_2(x_i;\theta)\]
where
\begin{multline*}
	g_0(t;\theta) = a_{1,0}(\theta)f_1(t;\theta) + a_{2,0}(\theta)f_2(t;\theta) \\
	+ b_{1,0}(\theta)f_1(t;\theta)^2+ b_{2,0}(\theta)f_2(t;\theta)^2 + \sum_{m=0}^4c_m(\theta)  t^m f_1(t;\theta)f_2(t;\theta)
\end{multline*}
and where
\[g_m(t;\theta) = a_{1,m}(\theta)f_1(t;\theta) + a_{2,m}(\theta)f_2(t;\theta) + b_{1,m}(\theta)f_1(t;\theta)^2+ b_{2,m}(\theta)f_2(t;\theta)^2\]
for $m=1,2$. Some additional calculations prove that we can find finite $C_m(\theta_0),C_m'(\theta_0)$ such that, as long as $r(\theta_0)$ is bounded
by some appropriately chosen constant,
\[\sup_{t\in\R}\sup_{\theta\in\B(\theta_{0},r(\theta_{0}) )\cap\Theta} \|g_m(t;\theta)\|\leq C_m(\theta_0)\]
and
\[\sup_{t\in\R}\sup_{\theta\in\B(\theta_{0},r(\theta_{0}) )\cap\Theta} \|\nabla_\theta g_m(t;\theta)\|\leq C_m'(\theta_0).\]
(To give some intuition for this---for example, for the zeroth-order term, i.e., finding $C_m(\theta_0)$, it is trivial
to see that $\sup_{t\in\R}f_\ell(t;\theta)\leq 1$ for each $\ell=1,2$; what is more subtle is the observation
that $\sup_{t\in\R} t^m f_1(t;\theta)f_2(t;\theta)$ is also finite, as long as $\mu_1\neq \mu_2$---and this condition is ensured
as long as we enforce $(\mu_1)_0\neq (\mu_2)_0$, i.e., the means are unequal in the true parameter $\theta_0$, and 
$r(\theta_0)$ is taken to be sufficiently small.)

We then calculate
\[
\left\|H(\theta; x)-H(\theta)\right\|
\leq \left\|H(\theta; x)-H(\theta_0;x)\right\| + \left\|H(\theta_0; x)-H(\theta_0)\right\| + \left\|H(\theta)-H(\theta_0)\right\|.\]
For the first term, for all $\theta\in\B(\theta_{0},r(\theta_{0}) )\cap\Theta$,
\begin{align*}
	&\left\|H(\theta; x)-H(\theta_0;x)\right\| \\
	&=\left\|\sum_{i=1}^n \left(g_0(x_i;\theta) - g_0(x_i;\theta_0)\right) +  x_i \left(g_1(x_i;\theta) - g_1(x_i;\theta_0)\right) + x_i^2\left(g_2(x_i;\theta) - g_2(x_i;\theta_0)\right)\right\|\\
	&\leq\sum_{i=1}^n \left\|g_0(x_i;\theta) - g_0(x_i;\theta_0)\right\| +  |x_i| \left\|g_1(x_i;\theta) - g_1(x_i;\theta_0)\right\| + x_i^2\left\|g_2(x_i;\theta) - g_2(x_i;\theta_0)\right\|\\
	&\leq\sum_{i=1}^n C_0'(\theta_0)r(\theta_0) +  |x_i| C_1'(\theta_0)r(\theta_0) + x_i^2C_2'(\theta_0)r(\theta_0)\\
	&\leq r(\theta_0) \left[ n\left(C_0'(\theta_0) + 0.5C_1'(\theta_0)\right)  + \sum_{i=1}^n x_i^2  \left(C_2'(\theta_0) + 0.5C_1'(\theta_0)\right) \right].
\end{align*}
Similarly, for the third term,
\[\left\|H(\theta)-H(\theta_0)\right\| \leq  r(\theta_0) \left[ n\left(C_0'(\theta_0) + 0.5C_1'(\theta_0)\right)  + \sum_{i=1}^n \E_{\theta_0}[X_i^2]  \left(C_2'(\theta_0) + 0.5C_1'(\theta_0)\right) \right].\]
By Cauchy--Schwarz, then,
\begin{multline*}
	\log 	\E_{\theta_{0}}\left[\exp\left\{\sup_{\theta\in\B(\theta_{0},r(\theta_{0}) )\cap\Theta}r(\theta_{0})^2\cdot \left\|H(\theta; X)-H(\theta)\right\|\right\}\right]\\
	\leq \frac{1}{2}\log 	\E_{\theta_{0}}\left[\exp\left\{2r(\theta_{0})^2\cdot \left\|H(\theta_0; X)-H(\theta_0)\right\|\right\}\right]\\
	+  \frac{1}{2}\log 	\E_{\theta_{0}}\left[\exp\left\{2\sup_{\theta\in\B(\theta_{0},r(\theta_{0}) )\cap\Theta}r(\theta_{0})^2\cdot \left( \left\|H(\theta; X)-H(\theta_0;X)\right\|  + \left\|H(\theta)-H(\theta_0)\right\| \right)\right\}\right]\\
	\leq \frac{1}{2}\log 	\E_{\theta_{0}}\left[\exp\left\{2r(\theta_{0})^2\cdot \left\|H(\theta_0; X)-H(\theta_0)\right\|\right\}\right] + c(\theta_0) \cdot nr(\theta)^3,
\end{multline*}
for an appropriate function $c(\theta_0)$, since the $X_i^2$'s are subexponential under $P_{\theta_0}$.

Next we bound the remaining term. Since the Hessian is a $5\times 5$ matrix, for any $c>0$ we have
\begin{align*}& \E_{\theta_{0}}\left[\exp\left\{c\cdot \left\|H(\theta_0; X)-H(\theta_0)\right\|\right\}\right]\\
	&\leq \E_{\theta_{0}}\left[\exp\left\{5c\cdot \left\|H(\theta_0; X)-H(\theta_0)\right\|_{\infty}\right\}\right]\\
	&= \E_{\theta_{0}}\left[\exp\left\{5c\cdot \max_{j=1,\dots,5}\max_{k=1,\dots,5}\max\left\{H(\theta_0; X)_{jk}-H(\theta_0)_{jk}, 
	H(\theta_0)_{jk}-H(\theta_0;X)_{jk}\right\}\right\}\right]\\
	&\leq  \sum_{j=1}^5\sum_{k=1}^5\E_{\theta_{0}}\left[\exp\left\{5c \left|H(\theta_0; X)_{jk}-H(\theta_0)_{jk}\right|\right\}\right]\\
	&\leq \sum_{j=1}^5\sum_{k=1}^5\E_{\theta_{0}}\left[\exp\left\{5c (H(\theta_0; X)_{jk}-H(\theta_0)_{jk})\right\}\right]\\
	& \qquad\qquad\qquad\qquad +\sum_{j=1}^5\sum_{k=1}^5\E_{\theta_{0}}\left[\exp\left\{5c (H(\theta_0)_{jk}-H(\theta_0;X)_{jk})\right\}\right].
\end{align*}
Now we handle each term individually. We have
\begin{align*}
	&\E_{\theta_{0}}\left[\exp\left\{5c (H(\theta_0; X)_{jk}-H(\theta_0)_{jk})\right\}\right]\\
	&=\E_{\theta_{0}}\left[\exp\left\{5c \sum_{i=1}^n\sum_{m=0}^2 \left[X_i^mg_m(X_i;\theta_0)_{jk} - \E_{\theta_0}[X_i^mg_m(X_i;\theta_0)_{jk}]\right] \right\}\right]\\
	&\leq \prod_{m=0}^2 \E_{\theta_{0}}\left[\exp\left\{15c \sum_{i=1}^n \left[X_i^mg_m(X_i;\theta_0)_{jk} - \E_{\theta_0}[X_i^mg_m(X_i;\theta_0)_{jk}]\right] \right\}\right]^{1/3}.
\end{align*}
Since $X_i^m$ is subexponential for each $m=0,1,2$ while $g_m(X_i;\theta_0)_{jk}$ is bounded, and the product of a bounded random variable
and a subexponential random variable is subexponential, we have
\[ \E_{\theta_{0}}\left[\exp\left\{15c \sum_{i=1}^n \left[X_i^mg_m(X_i;\theta_0)_{jk} - \E_{\theta_0}[X_i^mg_m(X_i;\theta_0)_{jk}]\right]\right\}\right] \leq e^{c^2 n c'_{m,jk}(\theta_0)}\]
assuming $c\leq c''_{m,jk}(\theta_0)$, for some positive-valued functions $c'_{m,jk},c''_{m,jk}$.
The same type of calculation holds for the terms of the form $\E_{\theta_{0}}\left[\exp\left\{5c (H(\theta_0)_{jk}-H(\theta_0;X)_{jk})\right\}\right]$,
for some positive-valued functions $\tilde{c}'_{m,jk},\tilde{c}''_{m,jk}$.
Combining everything,
\[
\E_{\theta_{0}}\left[\exp\left\{c\cdot \left\|H(\theta_0; X)-H(\theta_0)\right\|\right\}\right]\\
\leq  \sum_{j=1}^5\sum_{k=1}^5\prod_{m=0}^2 e^{\frac{1}{3}c^2 nc'_{m,jk}(\theta_0)} + \sum_{j=1}^5\sum_{k=1}^5\prod_{m=0}^2 e^{\frac{1}{3}c^2n \tilde{c}'_{m,jk}(\theta_0)},\]
for $0 < c <c''(\theta_0) =  \min_{m,j,k}\min\{c''_{m,jk}(\theta_0),\tilde{c}_{m,jk}''(\theta_0)\}$.
Letting $$c'(\theta_0) =  \max_{m,j,k}\max\{c''_{m,jk}(\theta_0),\tilde{c}_{m,jk}''(\theta_0)\},$$ then,
\[
\E_{\theta_{0}}\left[\exp\left\{c\cdot \left\|H(\theta_0; X)-H(\theta_0)\right\|\right\}\right]\\
\leq  50 e^{c^2n c'(\theta_0)}.\]
Choosing $c>r(\theta_0)^2$, then, by Jensen's inequality,
\begin{multline*} \E_{\theta_{0}}\left[\exp\left\{r(\theta_0)^2\cdot \left\|H(\theta_0; X)-H(\theta_0)\right\|\right\}\right]
	\leq  \E_{\theta_{0}}\left[\exp\left\{c\cdot \left\|H(\theta_0; X)-H(\theta_0)\right\|\right\}\right]^{r(\theta_0)^2/c}\\
	\leq ( 50 e^{c^2n c'(\theta_0)}) ^{r(\theta_0)^2/c}
	=  \exp\left\{ \frac{r(\theta_0)^2}{c}\log 50 + r(\theta_0)^2 c n c'(\theta_0)\right\}.\end{multline*}
Choosing $c = \sqrt{\frac{\log 50}{nc'(\theta_0)}}$, then, which (for sufficiently large $n$) satisfies $c>r(\theta_0)^2$ and $c<c''(\theta_0)$,
\[\E_{\theta_{0}}\left[\exp\left\{r(\theta_0)^2\cdot \left\|H(\theta_0; X)-H(\theta_0)\right\|\right\}\right]
\leq \exp\left\{ r(\theta_0)^2 \cdot 2\sqrt{n c'(\theta_0)\log 50}\right\}.\]
Combining everything, the 
assumption holds with any $r(\theta_0) = o(n^{-1/4})$ and $\epsilon(\theta_0)\gtrsim r(\theta_0)^2n^{1/2} + r(\theta_0)^3 n$.

\section{Experiment details}\label{sec:exp_detail}
For Example \ref{example:mixgaussian}, we use MCMC to generate the copies $\tilde{X}^{(m)}$; see details in Section \ref{sec:mcmc}. For Example \ref{example:isotonic} and \ref{example:sparse}, the conditional distribution is tractable, and we sample directly from the conditional distribution; see details in Section \ref{sec:con_distribution_detail}.

\subsection{Implementation details for Example \ref{example:mixgaussian} (Gaussian mixture model)} 
\label{sec:mcmc}
For the Gaussian mixture model, the copies $\tilde{X}^{(m)}$ are sampled via MCMC. Here we give the details for this process.

When sampling directly from $p_{\hat\theta}(\cdot\mid \hat\theta, \hat{g})$ is  infeasible, \cite{barber2020testing} discusses two schemes for constructing copies with MCMC sampling: the Hub-and-spoke sampler and the Permuted serial sampler. In our simulation for Example \ref{example:mixgaussian}, aCSS (with and without constraints) is run with the hub-and-spoke sampler.
Given $X$ and $\hat\theta, \hat{g}$, we sample the copies as follows:
\begin{itemize}
	\item Initialize at $X$, and run the Markov chain (specified below) for $L$ steps to define the ``hub" $\tilde{X}^{*}$.
	\item Independently for $m = 1, \dots, M$, initialize at  $\tilde{X}^{*}$ and run the Markov chain  (specified below) for $L$ steps to define the ``spoke'' $\tilde{X}^{m}$.
\end{itemize}		
Similar to 	\cite{barber2020testing}, we can use use the Metropolis–Hastings (MH) to construct an efficient sampling scheme. Given $\hat\theta$, the reversible MCMC is   given by the following:
\begin{itemize}
	\item  Starting at state $x'$, generate a proposal $x$ according to a properly chosen proposal distribution $q_{\hat\theta}(x\mid x')$.
	\item With probability $A_{\hat\theta}(x \mid x') = \min\left\{1, \frac{q_{\hat\theta}(x'\mid x) }{q_{\hat\theta}(x \mid x')}\frac{p_{\hat\theta}(x\mid \hat\theta, \hat{g})}{p_{\hat\theta}(x'\mid \hat\theta, \hat{g})}\right\}$, set the next state to equal $x$. Otherwise, the next state is set to equal $x'$.
\end{itemize}		
Next, we will describe the proposal distribution and MH acceptance probability; we also refer to \cite[Appendix D.2]{barber2020testing} for more details.

\subsubsection{Proposal distribution $q_{\hat\theta}(x\mid x')$}
In Example \ref{example:mixgaussian}, the model $P_{\theta}$ is a product distribution with density 
$$
f_{\theta}(x) = \prod_{i=1}^{n}f_{\theta}^{i}(x_{i}).
$$
We then use the same proposal distribution as \cite[Examples 1,2,4]{barber2020testing}.
For  $s\in[n]$, define $q_{\hat\theta}(x |x' )$ as follows: 
\begin{itemize}
	\item Draw a subset $\mathcal{S}\subseteq \{1, \dots, n\}$ of size $s$, uniformly at random.
	\item For each $i=1, \dots, n,$
	\begin{itemize}
		\item Set $x_{i} = x_{i}'$, if $i\notin S$,
		\item Draw $x_{i} \sim f_{\hat\theta}^{(i)}$, if $i\in S$.
	\end{itemize} 
\end{itemize}
Here  $s$  controls the tradeoff between two goals: (1) the acceptance probability $A_{\hat\theta}(x | x' )$ should not be too close to zero; (2) the proposed state should not be too similar to the previous state. 
Note that we can tune this MCMC hyperparameter after looking at $\hat\theta$ without violating any of our theoretical assumptions. We can then choose $s$ based on the following simulation:
\begin{itemize}
	\item Let $\theta_{0}^{\textnormal{sim}} = \hat\theta$.
	\item Draw $X^{\textnormal{sim}}\sim P_{\theta_{0}^{\textnormal{sim}}}$, $W\sim \mathcal{N}(0, \frac{1}{d}\I_{d})$; \\calculate $\hat\theta^{\textnormal{sim}} = \hat{\theta}\left(X^{\textnormal{sim}}, W\right)$, and $\hat{g}^{\textnormal{sim}} = \nabla\L(\hat\theta^{\textnormal{sim}}; X^{\textnormal{sim}}, W)$.
	\item For each candidate   of $s$  , run one step of Metropolis-Hasting initialized at $X^{\textnormal{sim}} $ to generate $X^{\textnormal{new}} $.
	\item Repeat for 100 draws of $X^{\textnormal{sim}} $, discarding any draws for which $\hat\theta^{\textnormal{sim}} $ is not a SSOSP, to get an average acceptance probability $\bar{A}_{s}$  . Among all values of $s$ where $\bar{A}_{s}\ge 0.05$, choose $s$ that maximizes $s\bar{A}_{s}$.
\end{itemize}
Note that  this choice of $s$ only depends on $\hat\theta$, and completing our $\theta$-dependent definition of the proposal distribution $q_{\hat\theta}(x\mid x')$. Then we choose $L = \min\{2000, \frac{2n}{s\hat{A}_{s}}\}$ to   ensure that  most entries will be resampled within $L$ steps. 

\subsubsection{MH acceptance probability}
Given $\hat\theta, \hat{g}$,  and a properly chosen proposal distribution $q_{\hat\theta}(x\mid x')$, the MH acceptance probability
$A_{\hat\theta}(x \mid  x' )$ can be written as
$$
A_{\hat\theta}(x \mid  x ')  = \min\left\{1, \frac{q_{\hat\theta}(x' \mid  x) }{q_{\hat\theta}(x \mid  x')}\frac{p_{\hat\theta}(x\mid  \hat\theta, \hat{g})}{p_{\hat\theta}(x'\mid   \hat\theta, \hat{g})}\right\},
$$
where 
\begin{equation*} 
	p_{\hat\theta}(x| \hat\theta, \hat{g})\propto f(x; \hat\theta ) \exp\left\{-\frac{\|\hat{g} - \nabla\L(\hat\theta; x)\|^{2} }{2\sigma^{2}/d}\right\}\det\left(U_{\A(\hat\theta)}^\top \nabla^{2}_{\theta}\L(\hat\theta; x)U_{\A(\hat\theta)}\right) 
	\id_{x\in\X_{\hat\theta,\hat{g}}}
\end{equation*}
The ratio  in the MH acceptance probability without the indicator variables are straightforward to calculate. The ratio with indicator variables
$	\id_{x\in\X_{\hat\theta,\hat{g}}}/	\id_{x'\in\X_{\hat\theta,\hat{g}}}$ requires more careful consideration. First, we will always have $	\id_{x'\in\X_{\hat\theta,\hat{g}}}= 1$ since $x'$ is sampled from \eqref{eqn:conditional_density} with $	x'\in\X_{\hat\theta,\hat{g}}$. To check $\id_{x\in\X_{\hat\theta,\hat{g}}}$, we have 
\begin{equation*}
	\begin{split}
		\id_{x\in\X_{\hat\theta,\hat{g}}} &= \id\left\{\exists w\in\R^{d}\ \textnormal{s.t.}\  \hat\theta = \hat{\theta}(x, w) \ \textnormal{is a SSOSP of \eqref{eqn:def_thetahat}}, \ \textnormal{and}\ \hat{g} = \nabla\L(\hat\theta; x, w)\right\}\\
		& = \id\left\{ \hat\theta\left(x, \frac{\hat{g} - \nabla_{\theta}\L(\hat\theta; x)}{\sigma}\right) = \hat\theta, \textnormal{and}\  U_{\A(\hat\theta)}^\top \nabla^{2}_{\theta}\L(\hat\theta; x)U_{\A(\hat\theta)}\succ 0\right\}.
	\end{split}
\end{equation*}
This means given proposed $x$, we only need to verify (1) $U_{\A(\hat\theta)}^\top \nabla^{2}_{\theta}\L(\hat\theta; x)U_{\A(\hat\theta)}\succ 0$ and (2) the algorithm $\hat\theta\left(x, \frac{\hat{g} - \nabla_{\theta}\L(\hat\theta; x)}{\sigma}\right)$  returns  value $\hat{\theta}$.

\subsection{Implementation details for Examples \ref{example:isotonic} (isotonic regression) and \ref{example:sparse} (sparse regression)} 
\label{sec:con_distribution_detail}

In this section, we derive the sampling distribution for the copies $\tilde{X}^{(m)}$ for the two Gaussian linear model examples.

Recall that the objective function $\L(\theta; x, w)$ is defined as 
$$
\L(\theta; x, w) = \frac{1}{2\nu^2}\|x - Z\theta\|^2+ \mathcal{R}(\theta) +\sigma w^\top \theta,
$$ 
and
$$
\left\{ \begin{array}{l} 
	\hat\theta = \hat\theta(X, W),\\ 
	\hat{g} =   \frac{1}{\nu^2}Z^\top (Z\hat\theta - X)+ \nabla_{\theta}\mathcal{R}(\hat\theta)+ \sigma  W,
\end{array}\right.
$$
where $ \hat\theta(X, W)$ is the minimizer of $\L(\theta; X, W)$ subject to arbitrary linear constraints or $\ell_{1}$ penalty. Note that the original aCSS is a special case of the constrained aCSS with no constraints and $\hat{g} = 0$.
When $\L(\theta;x,w)$ is strictly convex (like if we add ridge penalty), a unique SSOSP exists (and is computationally efficient
to find), and we can then define $\hat\theta(x,w)$ to be equal
to this unique SSOSP. Based on the conditional density derived in  \eqref{eqn:conditional_density},
we can efficiently compute the conditional distribution $p_{\theta_{0}}(\cdot\mid \hat\theta, \hat{g})$ as follows 
\begin{equation*}  
	X\mid \hat\theta,\hat{g} \sim \mathcal{N}\left(Z\hat\theta + \left(\I_{n} + \frac{d}{\sigma^2\nu^2}ZZ^\top \right)^{-1}Z(\theta_{0} - \hat\theta + \frac{d}{\sigma^2}(\nabla_{\theta}\mathcal{R}(\hat\theta) - \hat{g})), \nu^2\left(\I_{n} + \frac{d}{\sigma^2\nu^2}ZZ^\top\right)^{-1}\right).
\end{equation*} 
The plug-in conditional distribution  $ \tilde{X}$, i.e., $p_{\hat\theta}(\cdot\mid \hat\theta, \hat{g})$, is  
\begin{equation*} 
	\tilde{X} \sim \mathcal{N}\left(Z\hat\theta + \left(\I_{n} + \frac{d}{\sigma^2\nu^2}ZZ^\top \right)^{-1}Z \frac{d}{\sigma^2}(\nabla_{\theta}\mathcal{R}(\hat\theta) - \hat{g}), \nu^2\left(\I_{n} + \frac{d}{\sigma^2\nu^2}ZZ^\top\right)^{-1}\right). 
\end{equation*}

\begin{itemize}
	\item	In  Example \ref{example:isotonic},   $\mathcal{R}(\theta) = 0$, $Z = \I_{n}$ and $\nu^2 = 1$. Details of sampling using the aCSS method, with and without constraints, are as follows: 
	
	For \cite{barber2020testing}'s aCSS method, $\hat\theta$ is computed via perturbed and unconstrained maximum likelihood estimation,
	\[\hat\theta = \hat\theta_{\textnormal{OLS}} = \argmin_{\theta\in\R^n} \left\{\frac{1}{2}\|X - \theta\|^2  + \sigma W^\top \theta\right\} = X - \sigma W, \]
	and then the copies $\tilde{X}^{(m)}$ are sampled directly from $p_{\hat\theta}(\cdot\mid \hat\theta)$ via the distribution
	\[\tilde{X}^{(m)}\stackrel{\textnormal{i.i.d.}}{\sim} \mathcal{N}\left(\hat\theta, \left(1+\frac{n}{\sigma^2}\right)^{-1}\I_n\right).\]
	
	For our proposed constrained aCSS method, $\hat\theta$ is computed with the isotonic constraint,
	\[\hat\theta = \hat\theta_{\textnormal{iso}}  = \argmin_{\substack{\theta\in\R^n\\\theta_1\leq \dots \leq \theta_n}} \left\{\frac{1}{2}\|X - \theta\|^2  + \sigma W^\top \theta\right\}, \]
	the gradient is given by
	\[\hat g = \hat\theta - X + \sigma W,\]
	and then the copies $\tilde{X}^{(m)}$ are sampled directly from $p_{\hat\theta}(\cdot\mid \hat\theta,\hat g)$ via the distribution
	\[\tilde{X}^{(m)}\stackrel{\textnormal{i.i.d.}}{\sim} \mathcal{N}\left(\hat\theta- \frac{n/\sigma^2}{1+n/\sigma^2}\hat{g}, \left(1+\frac{n}{\sigma^2}\right)^{-1}\I_n\right).\]

	\item In Example \ref{example:sparse}, we choose $\mathcal{R}(\theta) = \frac{\lambda_{\textnormal{ridge}}}{2}\|\theta\|^2$ as a ridge penalization with $\lambda_{\textnormal{ridge}} = 0.01$, $\nu^2 = 1$. Details of sampling using the aCSS method, with and without an $\ell_{1}$ penalty, are as follows:
	
	For \cite{barber2020testing}'s aCSS method, we will use a ridge regularizer.
	The method is then defined by setting
	\begin{equation*}
		\begin{split}
			\hat\theta = \hat\theta_{\textnormal{ridge}} &= \argmin_{\theta\in\R^d} \left\{\frac{1}{2}\|X - Z\theta\|^2 + \frac{\lambda_{\textnormal{ridge}}}{2}\|\theta\|^{2} + \sigma W^\top \theta\right\}\\
			&=\left(\lambda_{\textnormal{ridge}}\I_{d} + Z^{T}Z \right)^{-1}(Z^{T}X - \sigma W),
		\end{split}
	\end{equation*} 
	and then sampling the copies $\tilde{X}^{(m)}$ directly from $p_{\hat\theta}(\cdot\mid \hat\theta)$ via the distribution
	\[\tilde{X}^{(m)}\stackrel{\textnormal{i.i.d.}}{\sim} \mathcal{N}\left(Z\hat\theta + \frac{\lambda_{\textnormal{ridge}} d}{\sigma^2}\left(\I_{n}+\frac{d}{\sigma^2}ZZ^\top \right)^{-1}Z\hat\theta, \left(\I_{n} + \frac{d}{\sigma^2}ZZ^\top \right)^{-1}\right).\]
	
	For our proposed penalized aCSS method, in order to be more comparable to aCSS, we also add the regularizer $R(\theta)$. This means that our estimator
	is given by the elastic net, incorporating both $\ell_1$ and $\ell_2$ penalization:
	\[\hat\theta=\hat\theta_{\textnormal{elastic-net}} =  \argmin_{\theta\in\R^d} \left\{\frac{1}{2}\|X - Z\theta\|^2 + \frac{\lambda_{\textnormal{ridge}}}{2}\|\theta\|^{2} + \lambda\|\theta\|_1+ \sigma W^\top \theta\right\},\]
	with $\lambda = 2$, and the gradient is then computed as 
	\[\hat{g} =Z^{T}(Z\hat\theta-X)+ \sigma W +\lambda_{\textnormal{ridge}} \hat\theta.\]
	We then sample the copies $\tilde{X}^{(m)}$ directly from $p_{\hat\theta}(\cdot\mid \hat\theta,\hat g)$ via the distribution
	\[\tilde{X}^{(m)}\stackrel{\textnormal{i.i.d.}}{\sim} \mathcal{N}\left(Z\hat\theta + \frac{d}{\sigma^2}\left(\I_{n}+\frac{d}{\sigma^2}ZZ^\top \right)^{-1}Z(\lambda_{\textnormal{ridge}} \hat\theta - \hat g), \left(\I_{n} + \frac{d}{\sigma^2}ZZ^\top \right)^{-1}\right).\]
\end{itemize}

Besides results in the main paper, to better understand the difference in performance in terms of Type I error rate, in Figure~\ref{fig:regression2} we show 
the Type I error as a function of the parameter $\sigma$. For both settings,  we see that aCSS suffers a rapid increase in Type I error rate, thus necessitating a very small value of $\sigma$
to maintain validity, while constrained or penalized aCSS maintains Type I error control across a broad range of values of $\sigma$. 
Finally, Figure~\ref{fig:regression3} illustrates the issue of Type I error in more detail for the specific choice $\sigma=7$ for both examples
(chosen to be large enough so that the methods can achieve substantial power). This figure shows a highly nonuniform distribution of
the p-values for aCSS, in contrast to the approximately uniform distribution for constrained or penalized aCSS.
\begin{figure} 	
	\centering   
	\includegraphics[width=0.45\textwidth]{./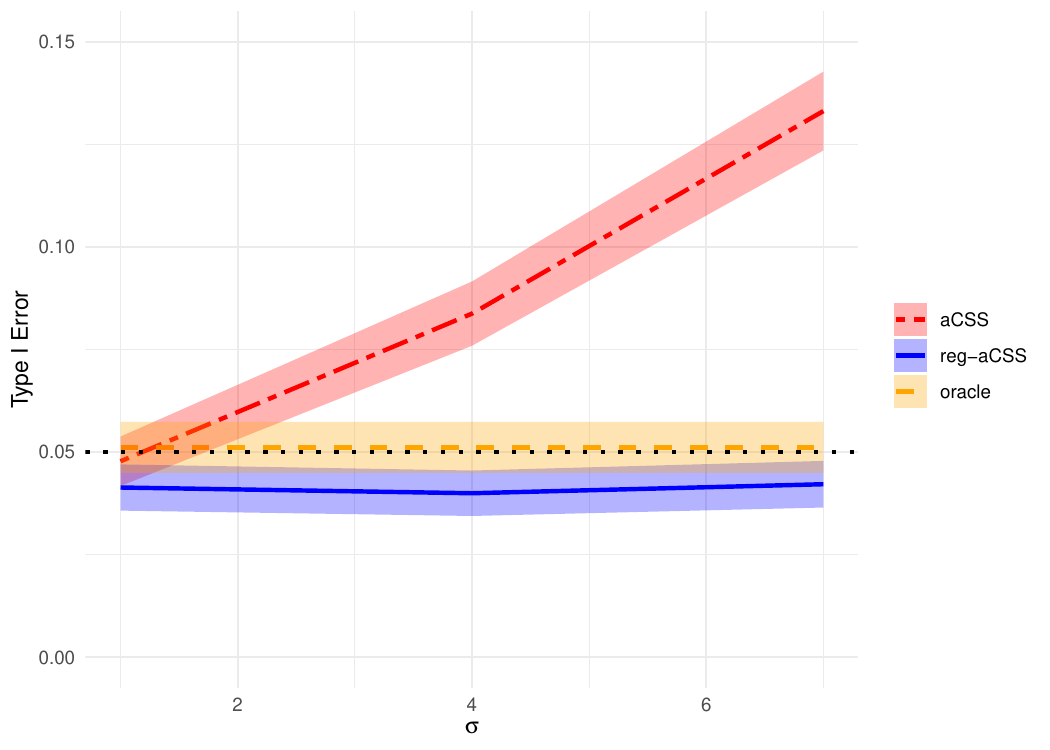} 
	\includegraphics[width=0.45\textwidth]{./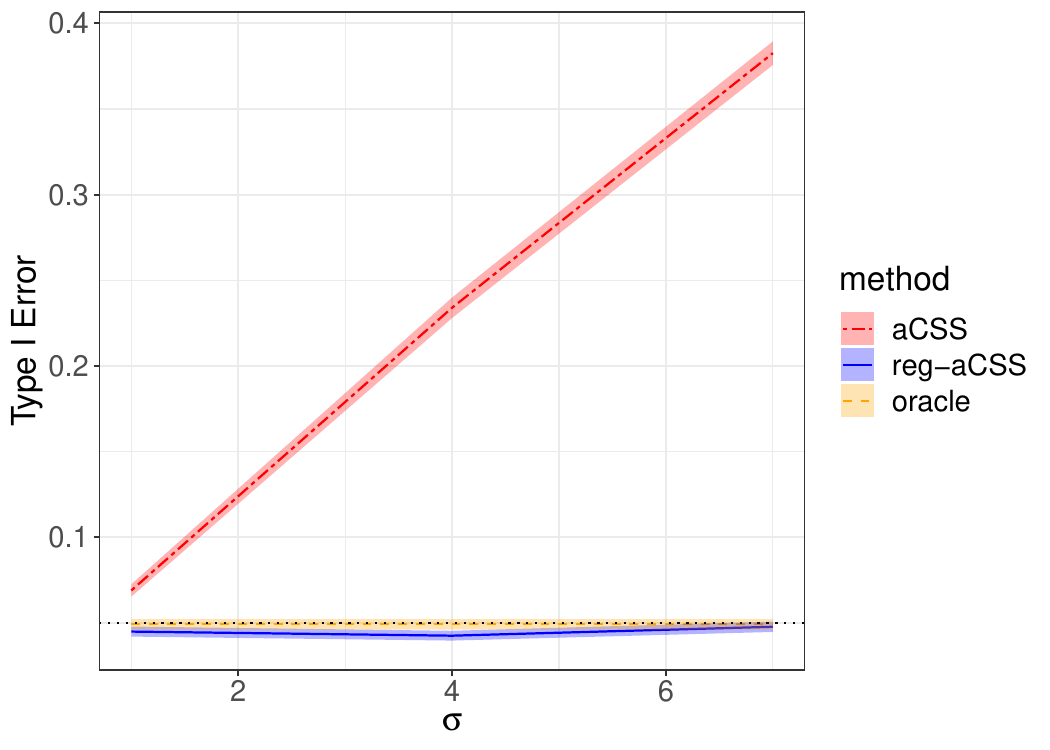}  
	\caption{ Type I error rate of  aCSS, regularized (i.e., constrained or penalized) aCSS (denoted as reg-aCSS in the plot), and the oracle method, for isotonic regression (left) and sparse regression (right),  with different values of the parameter $\sigma$, over 5000 independent trials. The dotted red line denotes the nominal 5\% level (i.e., $\alpha=0.05$). The shaded bands denote standard error  for each method.}
	\label{fig:regression2}
\end{figure}
\begin{figure} 	
	\centering   
	\includegraphics[width=0.45\textwidth]{./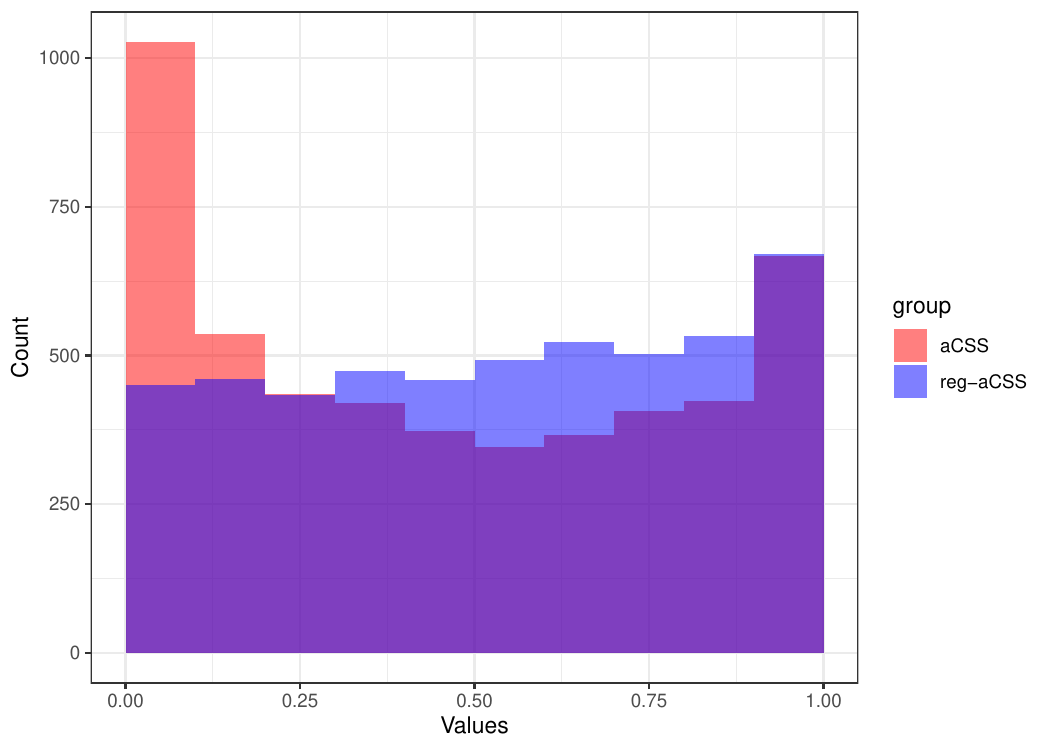}
	\includegraphics[width=0.45\textwidth]{./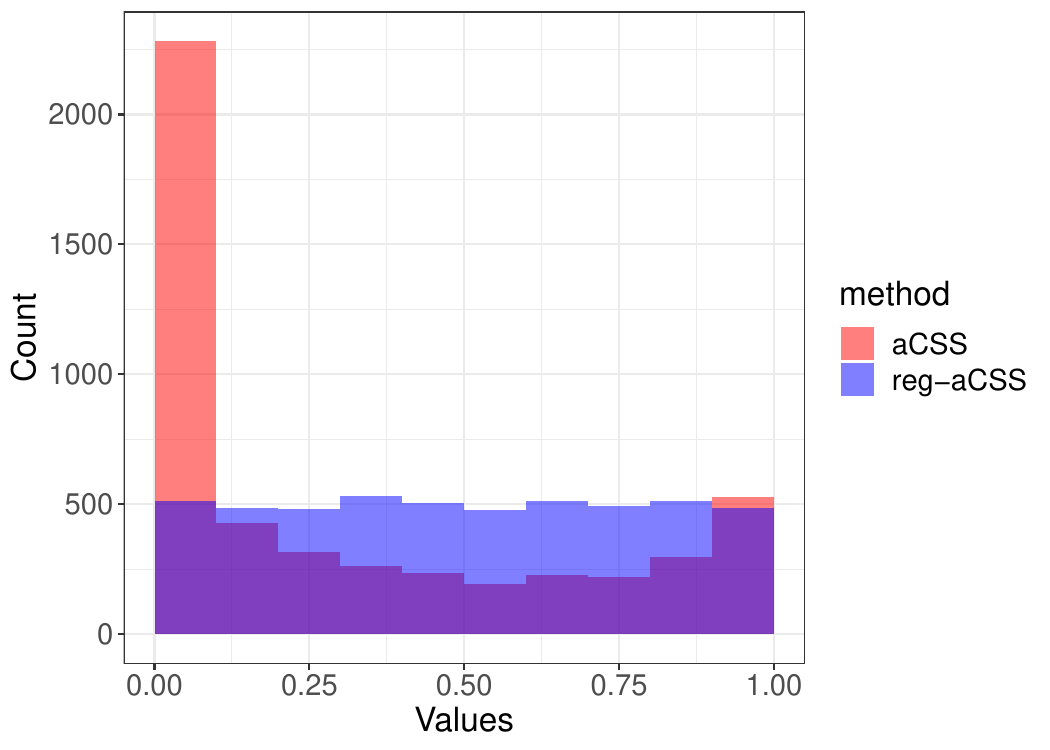}
	\caption{Histogram of p-values under the null, for   aCSS and for regularized (i.e.,  constrained or penalized) aCSS, for isotonic regression (left) and sparse regression (right), over 5000 independent trials.
		The parameter $\sigma$ is chosen as $\sigma=7$ for both examples.}
	\label{fig:regression3}
\end{figure}

\subsection{Extension for Gaussian linear model with unknown  $\nu$}\label{sec:sample_unknown_var}
Recall the 	Gaussian linear model
\[X\sim  \mathcal{N}(Z\theta, \nu^2\I_{n}).\]
We now consider the setting where the noise variance $\nu^2 $ is unknown.  To complement our earlier analysis under known noise variance, we revisit Example \ref{example:isotonic} and  \ref{example:sparse}. For clarity, we restate both examples with updated labels.

\begin{example}[Isotonic regression with unknown variance]\label{example:iso_unknown_nu}
	Assume that the true parameter $\theta_0$ satisfies the isotonic (monotonic non-decreasing) constraint:
	\[(\theta_0)_{1}\le\cdots\le (\theta_0)_{n}.\]
	We are given a noisy observation $X\in\R^{n}$, with $X\sim  \mathcal{N}(\theta_0, \nu^2\I_{n})$ for some unkown $\nu$. This model is a special case of the Gaussian linear model with $d = n$ and $Z = \I_n$. 
\end{example} 

\begin{example}[Sparse regression with unknown variance]\label{example:sparse_unknown_nu}
	Let $d>n$, and 	let $Z\in\R^{d\times n}$ be a fixed covariate matrix. We assume the model	
	\[	X\sim  \mathcal{N}(Z\theta_0, \nu^2\I_{n}),\]
	for an unknown noise level $\nu^2$. We further assume that the true parameter $\theta_0$ is sparse,  and estimate it using the Lasso—that is, by solving a penalized optimization problem with an $\ell_1$ regularization term.
\end{example}  

\subsubsection{Simulation: setting}
All components of the simulation setup remain the same as in Section \ref{sec:simu_setting_gaussian}, except that the sampling step in the testing procedure is modified to account for the additional uncertainty arising from the unknown noise variance $\nu^2$. We now derive the modified sampling distribution for the copies $\tilde{X}^{(m)}$ in the two  examples.

As discussed in Remark \ref{rmk:unknown_var}, for general gaussian linear model, when $\nu$ is unkown we solve for $\hat\theta$ by  
\[ \argmin_{\theta\in\R^d}\left\{ \frac{1}{2}\|X-Z\theta\|^2 +  R(\theta)+ \sigma W^\top  \theta\right\}\] 
subject to arbitrary linear constraints or $\ell_{1}$ penalty, and compute the gradient as
\[\hat g =  Z^\top (Z\hat\theta - X)+ \nabla_\theta R(\hat\theta)+ \sigma  W.\]
We further compute $\hat\nu$ as
\[\hat\nu = \sqrt{\frac{1}{n}\|X - Z\hat\theta\|^2}.\] 
The  conditional density $p_{\hat\theta, \hat\nu}(\cdot\mid \hat\theta, \hat g, \hat\nu)$ is  proportional to 
\[\exp\{-\frac{1}{2}(x - \mu)^{\top}\Sigma^{-1}(x-\mu)\}\cdot \id_{\{\|x - Z\hat\theta\|^2 = n\hat{\nu}^2\}},\]
where
\[\mu = Z\hat\theta + \frac{d}{\sigma^2}\left(\frac{1}{\hat\nu^2}\I_{n}+\frac{d}{\sigma^2}ZZ^\top \right)^{-1}Z(\nabla_\theta R(\hat\theta) - \hat{g}), \ \Sigma =\left(\frac{1}{\hat\nu^2}\I_{n} + \frac{d}{\sigma^2}ZZ^\top \right)^{-1}. \]

\begin{itemize}  
	\item	In  Example \ref{example:iso_unknown_nu}, $\mathcal{R}(\theta) = 0$, $Z = \I_{n}$. 	Details of sampling using the aCSS method, with and without constraints, are as follows: 
	
	For \cite{barber2020testing}'s aCSS method, $\hat\theta$ is computed via perturbed and unconstrained maximum likelihood estimation,
	\[\hat\theta = \hat\theta_{\textnormal{OLS}} = \argmin_{\theta\in\R^n} \left\{\frac{1}{2}\|X - \theta\|^2  + \sigma W^\top \theta\right\} = X - \sigma W, \]
	and 	compute $\hat\nu$ as
	\[\hat\nu = \sqrt{\frac{1}{n}\|X - Z\hat\theta\|^2}.\]
	Then the copies $\tilde{X}^{(m)}$ are sampled  via  
	\[\exp\{-\frac{1}{2}(x - \mu)^{\top}\Sigma^{-1}(x-\mu)\}\cdot \id_{\{\|x - \hat\theta\|^2 = n\hat{\nu}^2\}},\]
	with $$\mu = \hat\theta, \ \Sigma = \left(\frac{1}{\hat\nu^2}+\frac{n}{\sigma^2}\right)^{-1}\I_n.$$
	
	For our proposed constrained aCSS method, $\hat\theta$ is computed with the isotonic constraint,
	\[\hat\theta = \hat\theta_{\textnormal{iso}}  = \argmin_{\substack{\theta\in\R^n\\\theta_1\leq \dots \leq \theta_n}} \left\{\frac{1}{2}\|X - \theta\|^2  + \sigma W^\top \theta\right\}, \]
	the gradient is given by
	\[\hat g = \hat\theta - X + \sigma W,\]
	and compute $\hat\nu$ as
	\[\hat\nu = \sqrt{\frac{1}{n}\|X - Z\hat\theta\|^2}.\]
	Then the copies $\tilde{X}^{(m)}$ are sampled via  
	\[\exp\{-\frac{1}{2}(x - \mu)^{\top}\Sigma^{-1}(x-\mu)\}\cdot \id_{\{\|x - \hat\theta\|^2 = n\hat{\nu}^2\}},\]
	with $$\mu = \hat\theta- \frac{n/\sigma^2}{1/\hat\nu^2+n/\sigma^2}\hat{g}, \ \Sigma = \left(\frac{1}{\hat\nu^2}+\frac{n}{\sigma^2}\right)^{-1}\I_n.$$

	\item 	In Example \ref{example:sparse_unknown_nu}, we still choose $\mathcal{R}(\theta) = \frac{\lambda_{\textnormal{ridge}}}{2}\|\theta\|^2$ as a ridge penalization with $\lambda_{\textnormal{ridge}} = 0.01$. 
	Details of sampling using the aCSS method, with and without an $\ell_{1}$ penalty, are as follows:
	
	For \cite{barber2020testing}'s aCSS method, we will use a ridge regularizer.
	The method is then defined by setting
	\begin{equation*}
		\begin{split}
			\hat\theta = \hat\theta_{\textnormal{ridge}} &= \argmin_{\theta\in\R^d} \left\{\frac{1}{2}\|X - Z\theta\|^2 + \frac{\lambda_{\textnormal{ridge}}}{2}\|\theta\|^{2} + \sigma W^\top \theta\right\}\\
			&=\left(\lambda_{\textnormal{ridge}}\I_{d} + Z^{T}Z \right)^{-1}(Z^{T}X - \sigma W),
		\end{split}
	\end{equation*} 
	and 	compute $\hat\nu$ as
	\[\hat\nu = \sqrt{\frac{1}{n}\|X - Z\hat\theta\|^2}.\]
	Then sampling the copies $\tilde{X}^{(m)}$ via 
	\[\exp\{-\frac{1}{2}(x - \mu)^{\top}\Sigma^{-1}(x-\mu)\}\cdot \id_{\{\|x - Z\hat\theta\|^2 = n\hat{\nu}^2\}},\]
	with $$\mu =Z\hat\theta + \frac{\lambda_{\textnormal{ridge}} d}{\sigma^2}\left(\frac{1}{\hat\nu^2}\I_{n}+\frac{d}{\sigma^2}ZZ^\top \right)^{-1}Z\hat\theta, \ \Sigma = \left(\frac{1}{\hat\nu^2}\I_{n} + \frac{d}{\sigma^2}ZZ^\top \right)^{-1}.$$
	
	For our proposed penalized aCSS method, in order to be more comparable to aCSS, we also add the regularizer $R(\theta)$. This means that our estimator
	is given by the elastic net, incorporating both $\ell_1$ and $\ell_2$ penalization:
	\[\hat\theta=\hat\theta_{\textnormal{elastic-net}} =  \argmin_{\theta\in\R^d} \left\{\frac{1}{2}\|X - Z\theta\|^2 + \frac{\lambda_{\textnormal{ridge}}}{2}\|\theta\|^{2} + \lambda\|\theta\|_1+ \sigma W^\top \theta\right\},\]
	with $\lambda = 2$, and the gradient is then computed as 
	\[\hat{g} =Z^{T}(Z\hat\theta-X)+ \sigma W +\lambda_{\textnormal{ridge}} \hat\theta,\]
	and 	compute $\hat\nu$ as
	\[\hat\nu = \sqrt{\frac{1}{n}\|X - Z\hat\theta\|^2}.\]
	Then sampling the copies $\tilde{X}^{(m)}$ via 
	\[\exp\{-\frac{1}{2}(x - \mu)^{\top}\Sigma^{-1}(x-\mu)\}\cdot \id_{\{\|x - Z\hat\theta\|^2 = n\hat{\nu}^2\}},\]
	with $$\mu =Z\hat\theta + \frac{d}{\sigma^2}\left(\frac{1}{\hat\nu^2}\I_{n}+\frac{d}{\sigma^2}ZZ^\top \right)^{-1}Z(\lambda_{\textnormal{ridge}} \hat\theta - \hat g), \ \Sigma = \left(\frac{1}{\hat\nu^2}\I_{n} + \frac{d}{\sigma^2}ZZ^\top \right)^{-1}.$$

\end{itemize}

\subsubsection{Sampling from the constrained Gaussian distribution}
Next, we discuss how to sample from the constrained Gaussian distribution
\[\exp\left\{-\frac{1}{2}(x - \mu)^{\top}\Sigma^{-1}(x-\mu)\right\}\cdot \id_{\{\|x - Z\hat\theta\|^2 = n\hat{\nu}^2\}}.\]
To enforce the constraint  $\|x - Z\hat\theta\|^2 = n\hat{\nu}^2$, we consider the following transformation:
\begin{itemize}
	\item Define the centered vector $x_{\textnormal{center}} = x-Z\hat\theta$
	\item Represent $x_{\textnormal{center}}$ using spherical (hyperspherical) coordinates $\phi = (\phi_{1}, \dots, \phi_{n-1})\in [0, \pi]^{n-2}\times[0, 2\pi)$ as follows
	\[	\left\{ \begin{array}{l} 
		\phi_{1} =\arccos\left(\frac{x_{\textnormal{center}}[1]}{\sqrt{n\hat\nu^2}}\right),\\
		\phi_{i} =\arccos\left(\frac{x_{\textnormal{center}}[i]}{\sqrt{n\hat\nu^2}\prod_{k=1}^{i-1}sin(\phi_{k})}\right), i = 2, \dots n-1.\\
	\end{array}\right.	
	\]   
	\item Adjust the last angle $\phi_{n-1} = 2\pi -  \phi_{n-1}$, if $x_{\textnormal{center}}[n]/\prod_{k=1}^{n-2}sin(\phi_{k})<0$.
\end{itemize}
The target density in angular coordinates can be computed using a change of variables:
\[f(\phi_{1}, \dots, \phi_{n-1}) \propto \phi(x_{\textnormal{center}}; \mu-Z\hat\theta, \Sigma)\cdot \prod_{i=1}^{n-2}\sin^{n-1-i}(\phi_i),\]
where $\phi$ denotes the multivariate Gaussian density.

We then sample copies  in spherical (hyperspherical) coordinates $\phi$ via the same hub-and-spoke sampler and MH described in Section \ref{sec:mcmc}. The proposal distribution $q_{\hat\theta, \hat\nu}(\phi\mid \phi')$ simulates a small move from $x'$ to $x$  on the hypersphere as follows:
\begin{itemize}
	\item For  a chosen $s\in[n-1]$, draw a subset $\mathcal{S}\subseteq \{1, \dots, n-1\}$ of size $s$, uniformly at random.
	\item For each $i=1, \dots, n-1,$
	\begin{itemize}
		\item Set $\phi_{i} = \phi_{i}'$, if $i\notin S$,
		\item Draw $\phi_{i} = \phi_{i}' + \mathcal{U}(-\delta,\delta)$, if $i\in S$,
		\item  Set $\phi_{i} \leftarrow \phi_{i}\mod (2\pi-\id_{\{i<d-1\}}\pi)$.
	\end{itemize} 
\end{itemize}

Finally, we transform the sampled polar coordinates  $\tilde\phi$ back to  Cartesian coordinates:  
\[	\left\{ \begin{array}{l} 
	\tilde{X}_{\textnormal{center}} [1] = \sqrt{n\hat\nu^2} cos(\tilde\phi_{1}),\\
	\tilde{X}_{\textnormal{center}} [i] = \sqrt{n\hat\nu^2} cos(\tilde\phi_{i})\prod_{k=1}^{i-1}sin(\tilde\phi_{k}), i = 2, \dots, n-1,\\
	\tilde{X}_{\textnormal{center}} [n] = \sqrt{n\hat\nu^2} \prod_{k=1}^{n-1}sin(\tilde\phi_{k}),
\end{array}\right.	
\]   
and recover the final sample as  $\tilde{X} = \tilde{X}_{\textnormal{center}} + Z\hat\theta$.

In our simulations, we set $\delta = \pi/20$, $s = 2$ when sampling both aCSS and regularized aCSS copies across both examples. The results are shown in Figure~\ref{fig:nu_mcmc}. The Type I error of regularized aCSS is still better controlled compared to the unconstrained version. Compared to the known variance case, the results are nearly identical in the isotonic example, whereas both the Type I error and power increase in the sparse example. One possible explanation is that the estimation error in $\hat\nu = \sqrt{|X - Z\hat\theta|^2/n}$ is substantial, leading to greater uncertainty than in the known variance setting and, consequently, a higher Type I error.
\begin{figure} 	
	\centering  
	\includegraphics[width=0.45\textwidth]{./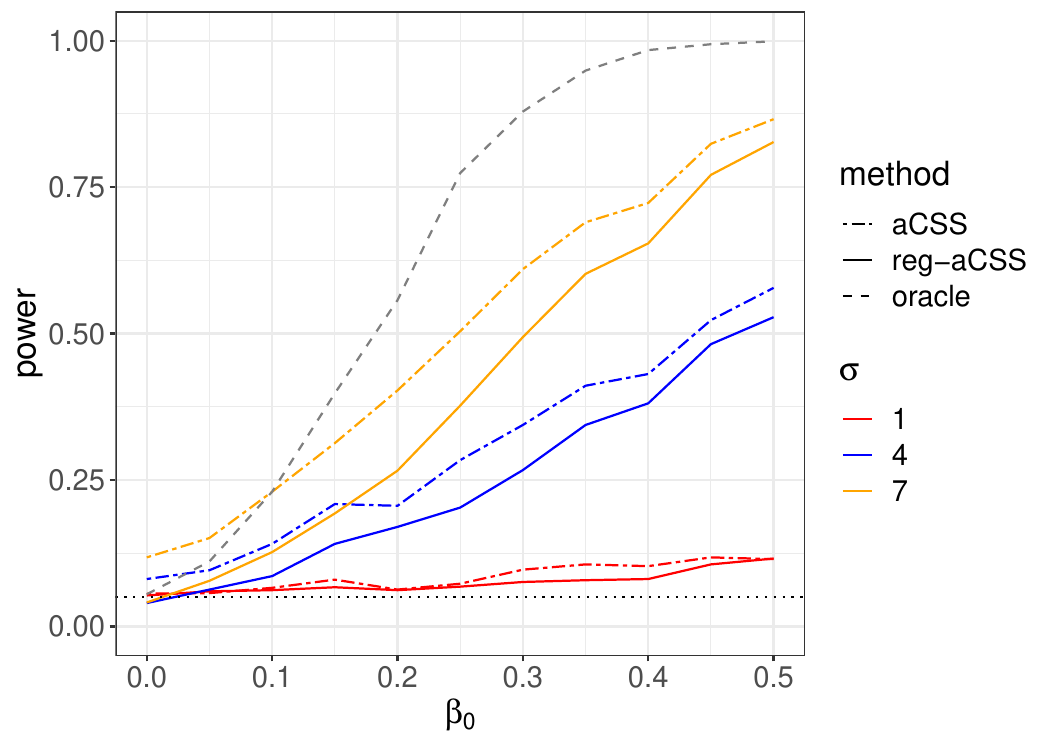}  	
	\includegraphics[width=0.45\textwidth]{./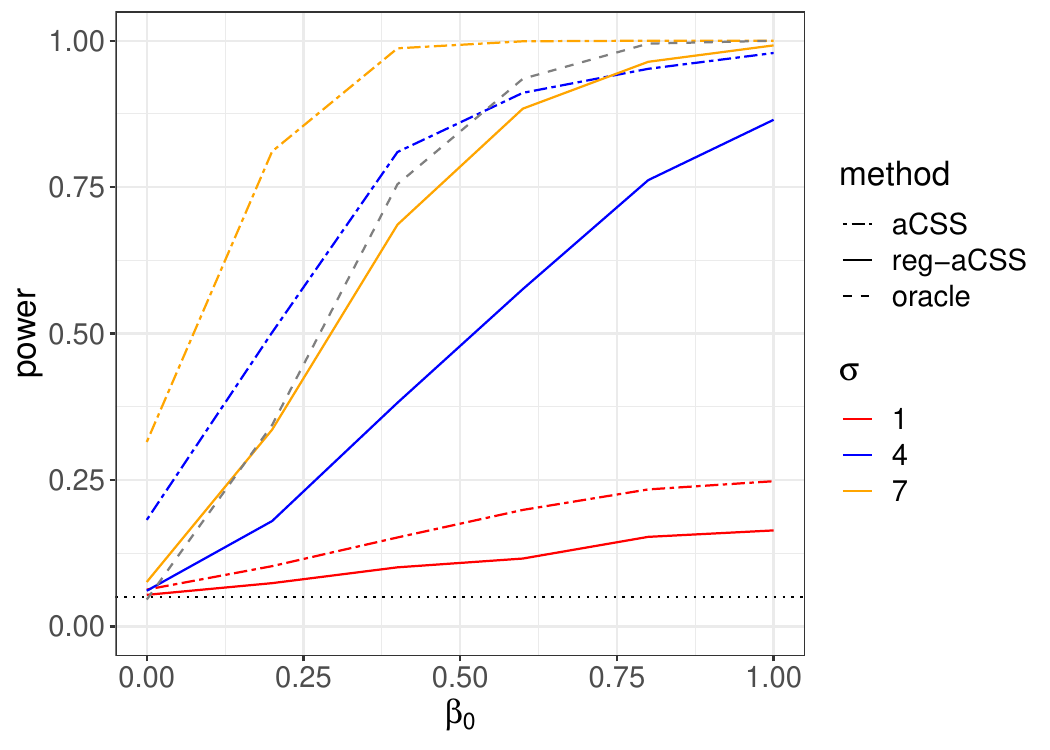}   
	\caption{unknown $\nu$. Power of  aCSS, regularized (i.e., constrained or penalized) aCSS (denoted as reg-aCSS in the plot), and the oracle method, for isotonic regression (left) and sparse regression (right),  with different values of the parameter $\sigma$, over 1000 independent trials. The dotted red line denotes the nominal 5\% level (i.e., $\alpha=0.05$). For both settings, $\beta_0=0$ corresponds to the null hypothesis being true.}
	\label{fig:nu_mcmc}
	
\end{figure}

 \bibliographystyle{plainnat} 
\bibliography{reference}    

\end{document}